\RequirePackage{snapshot}
\documentclass[11pt,twoside,british]{article}
\usepackage[T1]{fontenc}
\usepackage[latin9]{inputenc}
\usepackage[a4paper]{geometry}
\geometry{verbose,tmargin=2.5cm,bmargin=2.5cm,lmargin=2.5cm,rmargin=2.5cm}
\usepackage{fancyhdr}
\pagestyle{fancy}
\usepackage{color}
\usepackage{babel}
\usepackage{verbatim}
\usepackage{refstyle}
\usepackage{booktabs}
\usepackage{mathrsfs}
\usepackage{mathtools}
\usepackage{enumitem}
\usepackage{amsmath}
\usepackage{amsthm}
\usepackage{amssymb}
\usepackage{graphicx}
\usepackage{setspace}
\usepackage[authoryear,longnamesfirst]{natbib}
\usepackage{xargs}[2008/03/08]
\onehalfspacing
\usepackage[bookmarks=true,bookmarksnumbered=false,bookmarksopen=false,
 breaklinks=false,pdfborder={0 0 0},pdfborderstyle={},backref=false,colorlinks=false]
 {hyperref}

\makeatletter


\AtBeginDocument{\providecommand\eqref[1]{\ref{eq:#1}}}
\AtBeginDocument{\providecommand\defref[1]{\ref{def:#1}}}
\AtBeginDocument{\providecommand\secref[1]{\ref{sec:#1}}}
\AtBeginDocument{\providecommand\assref[1]{\ref{ass:#1}}}
\AtBeginDocument{\providecommand\propref[1]{\ref{prop:#1}}}
\AtBeginDocument{\providecommand\subsecref[1]{\ref{subsec:#1}}}
\AtBeginDocument{\providecommand\tblref[1]{\ref{tbl:#1}}}
\AtBeginDocument{\providecommand\figref[1]{\ref{fig:#1}}}
\AtBeginDocument{\providecommand\thmref[1]{\ref{thm:#1}}}
\AtBeginDocument{\providecommand\exaref[1]{\ref{exa:#1}}}
\AtBeginDocument{\providecommand\appref[1]{\ref{app:#1}}}
\AtBeginDocument{\providecommand\suppref[1]{\ref{supp:#1}}}
\AtBeginDocument{\providecommand\lemref[1]{\ref{lem:#1}}}
\AtBeginDocument{\providecommand\enuref[1]{\ref{enu:#1}}}
\AtBeginDocument{\providecommand\remref[1]{\ref{rem:#1}}}
\providecommand{\tabularnewline}{\\}
\RS@ifundefined{subsecref}
  {\newref{subsec}{name = \RSsectxt}}
  {}
\RS@ifundefined{thmref}
  {\def\RSthmtxt{theorem~}\newref{thm}{name = \RSthmtxt}}
  {}
\RS@ifundefined{lemref}
  {\def\RSlemtxt{lemma~}\newref{lem}{name = \RSlemtxt}}
  {}

\numberwithin{equation}{section}
\numberwithin{figure}{section}
\numberwithin{table}{section}
\theoremstyle{remark}
\newtheorem*{notation*}{\protect\notationname}
\theoremstyle{definition}
\newtheorem{example}{\protect\examplename}[section]
\theoremstyle{plain}
\newtheorem{assumption}{\protect\assumptionname}
\theoremstyle{plain}
\newtheorem{prop}{\protect\propositionname}[section]
\theoremstyle{definition}
\newtheorem{defn}{\protect\definitionname}[section]
\theoremstyle{remark}
\newtheorem{rem}{\protect\remarkname}[section]
\theoremstyle{plain}
\newtheorem{thm}{\protect\theoremname}[section]
\theoremstyle{plain}
\newtheorem{lem}{\protect\lemmaname}[section]


\raggedbottom{}


\setlist[enumerate,1]{label=\upshape{(\roman*)}, ref=(\roman*)}
\setlist[enumerate,2]{label=\upshape{(\alph*)}, ref=(\alph*)}
\setlist[enumerate,3]{label=\upshape{\roman*.}, ref=\roman*}



\lhead{}
\rhead{}
\rfoot{}
\lfoot{}

\fancyhead[CE]{\scshape duffy, mavroeidis and wycherley}
\fancyhead[CO]{\scshape cointegration with occasionally binding constraints}

\pagestyle{fancy}


\makeatletter
  \@ifpackageloaded{refstyle}{}{\usepackage{refstyle}}
\makeatother

\newref{thm}{name = Theorem~}
\newref{prop}{name = Proposition~}
\newref{lem}{name = Lemma~}
\newref{cor}{name = Corollary~}
\newref{ass}{name = }
\newref{exa}{name = Example~}
\newref{rem}{name = Remark~}
\newref{def}{name = Definition~}

\newref{eq}{refcmd = (\ref{#1})}

\newref{sec}{name = Section~}
\newref{sub}{name = Section~}
\newref{subsec}{name = Section~}
\newref{app}{name = Appendix~}
\newref{supp}{name = Appendix~}

\newref{fig}{name = Figure~}
\newref{tbl}{name = Table~}










\makeatletter
  \@ifpackageloaded{mathrsfs}{}{\usepackage{mathrsfs}}
\makeatother

\date{}

\usepackage{nextpage}

\allowdisplaybreaks[1]


\raggedbottom

\usepackage{scalefnt}
\usepackage[group-separator={,}]{siunitx}
\newcommand\smaller[2][0.85]{{\scalefont{#1}#2}}
\newcommand{\ass}[1]{{\upshape{\smaller[0.76]{#1}}}}

\newcommand{\assumpname}[1]{%
  \renewcommand{\theassumption}{\ass{#1}}%
}

\makeatletter
\newsavebox{\@brx}
\newcommand{\dbllangle}[1][]{\savebox{\@brx}{\(\m@th{#1\langle}\)}%
  \mathopen{\copy\@brx\kern-0.5\wd\@brx\usebox{\@brx}}}
\newcommand{\dblrangle}[1][]{\savebox{\@brx}{\(\m@th{#1\rangle}\)}%
  \mathclose{\copy\@brx\kern-0.5\wd\@brx\usebox{\@brx}}}
\makeatother

\usepackage{changepage}

\theoremstyle{definition}

\renewenvironment{example}
  {\pushQED{\qed}\examplex}
  {\popQED\endexamplex}

\numberwithin{examplex}{section}

\newcommand{\exname}[1]{%
  \renewcommand{\theexamplex}{{#1}}%
}

\newcounter{subremark}[rem]
\renewcommand{\thesubremark}{(\roman{subremark})}

\newcommand{\subremark}{%
  \refstepcounter{subremark}%
  \thesubremark{}.%
}

\newcounter{savedexnumber}

\newcommand{\saveexamplex}{%
  \setcounter{savedexnumber}{\value{examplex}}
}

\newcommand{\restoreexamplex}{%
  \setcounter{examplex}{\value{savedexnumber}}
  \numberwithin{examplex}{section}
}

\usepackage{mathdots}

\usepackage[usestackEOL]{stackengine}
\setstackgap{S}{5pt}
\newcommand{\authaffil}[2]{\Shortunderstack{#1\\\small{#2}}}

\usepackage{needspace}

\usepackage{bibunits}
\defaultbibliography{cksvar}
\defaultbibliographystyle{ecta}


\newcommand{\tightermath}[3]{%
#3%
}

\makeatother

\providecommand{\assumptionname}{Assumption}
\providecommand{\definitionname}{Definition}
\providecommand{\examplename}{Example}
\providecommand{\lemmaname}{Lemma}
\providecommand{\notationname}{Notation}
\providecommand{\propositionname}{Proposition}
\providecommand{\remarkname}{Remark}
\providecommand{\theoremname}{Theorem}

\begin{document}


\global\long\def\uwrite#1#2{\underset{#2}{\underbrace{#1}} }%

\global\long\def\blw#1{\ensuremath{\underline{#1}}}%

\global\long\def\abv#1{\ensuremath{\overline{#1}}}%

\global\long\def\vect#1{\mathbf{#1}}%


\global\long\def\smlseq#1{\{#1\} }%

\global\long\def\seq#1{\left\{  #1\right\}  }%

\global\long\def\smlsetof#1#2{\{#1\mid#2\} }%

\global\long\def\setof#1#2{\left\{  #1\mid#2\right\}  }%


\global\long\def\goesto{\ensuremath{\rightarrow}}%

\global\long\def\ngoesto{\ensuremath{\nrightarrow}}%

\global\long\def\uto{\ensuremath{\uparrow}}%

\global\long\def\dto{\ensuremath{\downarrow}}%

\global\long\def\uuto{\ensuremath{\upuparrows}}%

\global\long\def\ddto{\ensuremath{\downdownarrows}}%

\global\long\def\ulrto{\ensuremath{\nearrow}}%

\global\long\def\dlrto{\ensuremath{\searrow}}%


\global\long\def\setmap{\ensuremath{\rightarrow}}%

\global\long\def\elmap{\ensuremath{\mapsto}}%

\global\long\def\compose{\ensuremath{\circ}}%

\global\long\def\cont{C}%

\global\long\def\cadlag{D}%

\global\long\def\Ellp#1{\ensuremath{\mathcal{L}^{#1}}}%


\global\long\def\naturals{\ensuremath{\mathbb{N}}}%

\global\long\def\reals{\mathbb{R}}%

\global\long\def\complex{\mathbb{C}}%

\global\long\def\rationals{\mathbb{Q}}%

\global\long\def\integers{\mathbb{Z}}%


\global\long\def\abs#1{\ensuremath{\left|#1\right|}}%

\global\long\def\smlabs#1{\ensuremath{\lvert#1\rvert}}%
 
\global\long\def\bigabs#1{\ensuremath{\bigl|#1\bigr|}}%
 
\global\long\def\Bigabs#1{\ensuremath{\Bigl|#1\Bigr|}}%
 
\global\long\def\biggabs#1{\ensuremath{\biggl|#1\biggr|}}%

\global\long\def\norm#1{\ensuremath{\left\Vert #1\right\Vert }}%

\global\long\def\smlnorm#1{\ensuremath{\lVert#1\rVert}}%
 
\global\long\def\bignorm#1{\ensuremath{\bigl\|#1\bigr\|}}%
 
\global\long\def\Bignorm#1{\ensuremath{\Bigl\|#1\Bigr\|}}%
 
\global\long\def\biggnorm#1{\ensuremath{\biggl\|#1\biggr\|}}%

\global\long\def\floor#1{\left\lfloor #1\right\rfloor }%
\global\long\def\smlfloor#1{\lfloor#1\rfloor}%

\global\long\def\ceil#1{\left\lceil #1\right\rceil }%
\global\long\def\smlceil#1{\lceil#1\rceil}%


\global\long\def\Union{\ensuremath{\bigcup}}%

\global\long\def\Intsect{\ensuremath{\bigcap}}%

\global\long\def\union{\ensuremath{\cup}}%

\global\long\def\intsect{\ensuremath{\cap}}%

\global\long\def\pset{\ensuremath{\mathcal{P}}}%

\global\long\def\clsr#1{\ensuremath{\overline{#1}}}%

\global\long\def\symd{\ensuremath{\Delta}}%

\global\long\def\intr{\operatorname{int}}%

\global\long\def\cprod{\otimes}%

\global\long\def\Cprod{\bigotimes}%


\global\long\def\smlinprd#1#2{\ensuremath{\langle#1,#2\rangle}}%

\global\long\def\inprd#1#2{\ensuremath{\left\langle #1,#2\right\rangle }}%

\global\long\def\orthog{\ensuremath{\perp}}%

\global\long\def\dirsum{\ensuremath{\oplus}}%


\global\long\def\spn{\operatorname{sp}}%

\global\long\def\rank{\operatorname{rk}}%

\global\long\def\proj{\operatorname{proj}}%

\global\long\def\tr{\operatorname{tr}}%

\global\long\def\vek{\operatorname{vec}}%

\global\long\def\diag{\operatorname{diag}}%

\global\long\def\col{\operatorname{col}}%


\global\long\def\smpl{\ensuremath{\Omega}}%

\global\long\def\elsmp{\ensuremath{\omega}}%

\global\long\def\sigf#1{\mathcal{#1}}%

\global\long\def\sigfield{\ensuremath{\mathcal{F}}}%
\global\long\def\sigfieldg{\ensuremath{\mathcal{G}}}%

\global\long\def\flt#1{\mathcal{#1}}%

\global\long\def\filt{\mathcal{F}}%
\global\long\def\filtg{\mathcal{G}}%

\global\long\def\Borel{\ensuremath{\mathcal{B}}}%

\global\long\def\cyl{\ensuremath{\mathcal{C}}}%

\global\long\def\nulls{\ensuremath{\mathcal{N}}}%

\global\long\def\gauss{\mathfrak{g}}%

\global\long\def\leb{\mathfrak{m}}%


\global\long\def\prob{\ensuremath{\mathbb{P}}}%

\global\long\def\Prob{\ensuremath{\mathbb{P}}}%

\global\long\def\Probs{\mathcal{P}}%

\global\long\def\PROBS{\mathcal{M}}%

\global\long\def\expect{\ensuremath{\mathbb{E}}}%

\global\long\def\Expect{\ensuremath{\mathbb{E}}}%

\global\long\def\probspc{\ensuremath{(\smpl,\filt,\Prob)}}%


\global\long\def\iid{\ensuremath{\textnormal{i.i.d.}}}%

\global\long\def\as{\ensuremath{\textnormal{a.s.}}}%

\global\long\def\asp{\ensuremath{\textnormal{a.s.p.}}}%

\global\long\def\io{\ensuremath{\ensuremath{\textnormal{i.o.}}}}%

\newcommand\independent{\protect\mathpalette{\protect\independenT}{\perp}}
\def\independenT#1#2{\mathrel{\rlap{$#1#2$}\mkern2mu{#1#2}}}

\global\long\def\indep{\independent}%

\global\long\def\distrib{\ensuremath{\sim}}%

\global\long\def\distiid{\ensuremath{\sim_{\iid}}}%

\global\long\def\asydist{\ensuremath{\overset{a}{\distrib}}}%

\global\long\def\inprob{\ensuremath{\overset{p}{\goesto}}}%

\global\long\def\inprobu#1{\ensuremath{\overset{#1}{\goesto}}}%

\global\long\def\inas{\ensuremath{\overset{\as}{\goesto}}}%

\global\long\def\eqas{=_{\as}}%

\global\long\def\inLp#1{\ensuremath{\overset{\Ellp{#1}}{\goesto}}}%

\global\long\def\indist{\ensuremath{\overset{d}{\goesto}}}%

\global\long\def\eqdist{=_{d}}%

\global\long\def\wkc{\ensuremath{\rightsquigarrow}}%

\global\long\def\wkcu#1{\overset{#1}{\ensuremath{\rightsquigarrow}}}%

\global\long\def\plim{\operatorname*{plim}}%


\global\long\def\var{\operatorname{var}}%

\global\long\def\lrvar{\operatorname{lrvar}}%

\global\long\def\cov{\operatorname{cov}}%

\global\long\def\corr{\operatorname{corr}}%

\global\long\def\bias{\operatorname{bias}}%

\global\long\def\MSE{\operatorname{MSE}}%

\global\long\def\med{\operatorname{med}}%

\global\long\def\avar{\operatorname{avar}}%

\global\long\def\se{\operatorname{se}}%

\global\long\def\sd{\operatorname{sd}}%


\global\long\def\nullhyp{H_{0}}%

\global\long\def\althyp{H_{1}}%

\global\long\def\ci{\mathcal{C}}%


\global\long\def\simple{\mathcal{R}}%

\global\long\def\sring{\mathcal{A}}%

\global\long\def\sproc{\mathcal{H}}%

\global\long\def\Wiener{\ensuremath{\mathbb{W}}}%

\global\long\def\sint{\bullet}%

\global\long\def\cv#1{\left\langle #1\right\rangle }%

\global\long\def\smlcv#1{\langle#1\rangle}%

\global\long\def\qv#1{\left[#1\right]}%

\global\long\def\smlqv#1{[#1]}%


\global\long\def\trans{\mathsf{T}}%

\global\long\def\indic{\ensuremath{\mathbf{1}}}%

\global\long\def\Lagr{\mathcal{L}}%

\global\long\def\grad{\nabla}%

\global\long\def\pmin{\ensuremath{\wedge}}%
\global\long\def\Pmin{\ensuremath{\bigwedge}}%

\global\long\def\pmax{\ensuremath{\vee}}%
\global\long\def\Pmax{\ensuremath{\bigvee}}%

\global\long\def\sgn{\operatorname{sgn}}%

\global\long\def\argmin{\operatorname*{argmin}}%

\global\long\def\argmax{\operatorname*{argmax}}%

\global\long\def\Rp{\operatorname{Re}}%

\global\long\def\Ip{\operatorname{Im}}%

\global\long\def\deriv{\ensuremath{\mathrm{d}}}%

\global\long\def\diffnspc{\ensuremath{\deriv}}%

\global\long\def\diff{\ensuremath{\,\deriv}}%

\global\long\def\i{\ensuremath{\mathrm{i}}}%

\global\long\def\e{\mathrm{e}}%

\global\long\def\sep{,\ }%

\global\long\def\defeq{\coloneqq}%

\global\long\def\eqdef{\eqqcolon}%

\global\long\def\err{\varepsilon}%

\global\long\def\mset#1{\mathcal{#1}}%

\global\long\def\largedec#1{\mathbf{#1}}%

\global\long\def\z{\largedec z}%

\newcommandx\Ican[1][usedefault, addprefix=\global, 1=]{I_{#1}^{\ast}}%

\global\long\def\jsr{{\scriptstyle \mathrm{JSR}}}%

\global\long\def\cjsr{{\scriptstyle \mathrm{CJSR}}}%

\global\long\def\rsr{{\scriptstyle \mathrm{RJSR}}}%

\global\long\def\ctspc{\mathscr{M}}%

\global\long\def\b#1{\boldsymbol{#1}}%

\global\long\def\pseudy{\text{\ensuremath{\abv y}}}%

\global\long\def\regcoef{\kappa}%

\global\long\def\adj{\operatorname{adj}}%

\global\long\def\llangle{\dbllangle}%

\global\long\def\rrangle{\dblrangle}%

\global\long\def\smldblangle#1{\ensuremath{\llangle#1\rrangle}}%

\newcommand{\casecens}{{\upshape{(i)}}}

\newcommand{\caseclas}{{\upshape{(ii)}}}

\newcommand{\casestat}{{\upshape{(iii)}}}

\global\long\def\delcens{\mathrm{(i)}}%

\global\long\def\delclas{\mathrm{(ii)}}%

\title{Cointegration with Occasionally Binding Constraints}
\author{\authaffil{James A.\ Duffy\footnotemark[1]{}}{University of Oxford}\hspace{1cm}
\authaffil{Sophocles Mavroeidis\footnotemark[2]{}}{University
of Oxford} \hspace{1cm} \authaffil{Sam Wycherley\footnotemark[3]{}}{Stanford
University}}
\date{\vspace*{0.3cm}September 2025}

\maketitle
\renewcommand*{\thefootnote}{\fnsymbol{footnote}}

\footnotetext[1]{Department\ of Economics and Corpus Christi College;
\texttt{james.duffy@economics.ox.ac.uk}}

\footnotetext[2]{Department of Economics and University College;
\texttt{sophocles.mavroeidis@economics.ox.ac.uk}.}

\footnotetext[3]{Department of Economics; \texttt{wycherley@stanford.edu}}

\renewcommand*{\thefootnote}{\arabic{footnote}}

\setcounter{footnote}{0}
\begin{abstract}
\noindent In the literature on nonlinear cointegration, a long-standing
open problem relates to how a (nonlinear) vector autoregression, which
provides a unified description of the short- and long-run dynamics
of a vector of time series, can generate `nonlinear cointegration'
in the profound sense of those series sharing common nonlinear stochastic
trends. We consider this problem in the setting of the censored and
kinked structural VAR (CKSVAR), which provides a flexible yet tractable
framework within which to model time series that are subject to threshold-type
nonlinearities, such as those arising due to occasionally binding
constraints, of which the zero lower bound (ZLB) on short-term nominal
interest rates provides a leading example. We provide a complete characterisation
of how common linear and \emph{nonlinear} stochastic trends may be
generated in this model, via unit roots and appropriate generalisations
of the usual rank conditions, providing the first extension to date
of the Granger--Johansen representation theorem to a nonlinearly
cointegrated setting, and thereby giving the first successful treatment
of the open problem. The limiting common trend processes include regulated,
censored and kinked Brownian motions, none of which have previously
appeared in the literature on cointegrated VARs. Our results and running
examples illustrate that the CKSVAR is capable of supporting a far
richer variety of long-run behaviour than is a linear VAR, in ways
that may be particularly useful for the identification of structural
parameters. 
\end{abstract}
\vfill

\noindent We thank B.~Beare, P.~Bonomolo, H.~P.~Boswijk, A.~Bykhovskaya,
G.~Cavaliere, R.~Engle, J.~Gao, D.~Harris, X.~Jiao, S.~Johansen,
I.~Kasparis, D.~Kristensen, M.~Kulish, O.~Lieberman, O.~Linton,
Y.~Lu, T\@.~Magdalinos, J.~Morley, U.~M{\"u}ller, L.~Neri,
B.~Nielsen, A.~Onatski, M.~Plagborg-M{\o}ller, W.-K.~Seo, A.~Srakar,
J.~Stock, A.~M.~R.~Taylor, and participants in seminars at Aarhus,
BI Oslo, Cambridge, Cyprus, Melbourne, Monash, Oxford, St.\ Andrews,
Stanford, Sydney, the VTSS workshop, the May 2023 Harvard conference
in honour of Jim Stock and Mark Watson, and the 24th Zaragoza Workshop
in Time Series Econometrics, for helpful comments on earlier drafts
of this work. This research was supported by the European Research
Council via Consolidator Grant 647152.

\thispagestyle{plain}

\pagenumbering{roman}

\newpage{}

\thispagestyle{plain}

\setcounter{tocdepth}{2}

\tableofcontents{}

\newpage{}

\pagenumbering{arabic}

\newcommand{\thmcanonical}{2.1}

\newcommand{\secstability}{4}

\newcommand{\secstabilityabstract}{4.1}

\newcommand{\secregimes}{4.2}

\newcommand{\seclinearcoint}{5}

\newcommand{\thmcostat}{5.1}

\newcommand{\thmcanonvecm}{5.1}

\newcommand{\lemergodic}{B.2}

\section{Introduction}

Nonstationarity, in the form of highly persistent, randomly wandering
time series, is ubiquitous in macroeconomics and finance. It presents
both a challenge for inference (\citealp{Stock1994}; \citealp{Watson1994})
and an opportunity for the identification of dynamic causal effects
(\citealp{BlanchardQuah89}). The canonical framework for modelling
such series is as (common) stochastic trends generated by a linear
vector autoregression (VAR) with unit roots, underpinned by the powerful
Granger--Johansen representation theorem (\citealp{Joh95}, Ch.\ 4).
However, this framework is inadequate when even one of the variables
is subject to an occasionally binding constraint, such as the zero
lower bound (ZLB) constraint on short-term nominal interest rates,
which has recently gained particular prominence in macroeconomic policy
analysis (e.g.\ \citealp{Sum14BE}; \citealp{Will14}; \citealp{EMR19AEJM};
\citealp{Kocherlakota2019}).

In an autoregressive model, an occasionally binding constraint naturally
gives rise to nonlinearity in the form of multiple endogenously switching
regimes, whose presence significantly complicates the links between
unit roots and stochastic trends. While it has become increasingly
common to introduce stochastic trends into (linear) empirical models
of monetary policy by treating the `natural rate of interest' or
`trend inflation' as latent random walks (\citealp{LW03REStat};
\citealp{CS08AER}; \citealp{delNegroGiannoneGiannoniTambalotti2017};
\citealp{AndradeGaliLeBihanMatheron2019}; \citealp{BauerRudebusch2020};
\citealp{SchmittGroheUribe2022}), little is understood about what
would happen to the implied time series properties of the observable
series, such as the actual inflation rate and the nominal rate of
interest, if nonlinearities were introduced into such models via the
ZLB constraint. There is thus a pressing need to extend our understanding
of how stochastic trends may be modelled beyond the linear VAR framework,
to more general settings capable of accommodating such nonlinearities.

This paper addresses this problem in the setting of the censored and
kinked structural VAR (CKSVAR) model (\citealp{SM21,AMSV21}), which
provides a flexible yet tractable framework within which to model
time series that are subject to occasionally binding constraints,
and more generally to threshold-type nonlinearities. In this model,
which is otherwise like a linear structural VAR, one series is allowed
to enter differently according to whether it lies above or below a
threshold; e.g.\ in applications to monetary policy, this series
may be taken to represent the stance of monetary policy in each period,
which above zero coincides with the short-term policy rate, and below
zero corresponds to the (unobserved) `shadow rate'. We provide a
complete characterisation of how common linear and \emph{nonlinear}
stochastic trends may be generated in this model, via unit roots and
appropriate generalisations of the usual rank conditions, providing
the first extension to date of the Granger--Johansen representation
theorem to a nonlinearly cointegrated setting. Our results, which
describe the behaviour of both the short- and long-run components
of the CKSVAR, are foundational for frequentist inference in this
setting, in the presence of common stochastic trends (a treatment
of which will be given elsewhere).

As in a linear VAR, unit roots are unavoidable if one wishes to apply
the CKSVAR to series with stochastic trends. The usual criticism of
simply estimating a model in differences -- that this obliterates
the identifying long-run information carried by the cointegrating
relations -- is here magnified by the threshold nonlinearity in the
model, which dictates whether the affected variable (e.g.\ interest
rates) should enter in levels or differences, and in turn prescribes
appropriate forms for the other variables.\footnote{That these other variables (which enter the model linearly) cannot
generally be replaced by their first differences can be seen from
the vector error correction form of the CKSVAR given in \eqref{vecm-general}
below, from which it is evident that making this replacement (of $x_{t}$
by $\Delta x_{t}$) would amount to imposing the restriction that
$\Pi^{x}=0$.} To put it another way, that nonlinearity prevents series from being
simply `differenced to stationarity'. Moreover, we show that in
our setting, unit roots are not a mere technical nuisance: rather,
their presence may impart significant identifying power to the low
frequency behaviour of the series. For instance, the possibility that
cointegrating relations between series may change as one series crosses
a threshold, something accommodated by the CKSVAR, may be utilised
to test hypotheses about the relative effectiveness of unconventional
monetary policy (see Examples~\ref{exa:natrate} and \ref{exa:infldrift}
below; this is a problem that has been studied econometrically by
e.g.\ \citealp{GHP14JCMB}; \citealp{WX16JCMB}; \citealp{DGG20NBER};
and \citealp{ILMZ20}).

In analysing the CKSVAR with unit roots, we make a major contribution
to the literature on nonlinear cointegration. Here a long-standing
open problem relates to how a (nonlinear) vector autoregression, which
provides a unified description of the short- and long-run dynamics
of a collection of time series, can generate nonlinear cointegration
between those series -- where `nonlinear cointegration' is understood
in the profound sense of those series having common \emph{nonlinear}
stochastic trends with possibly \emph{nonlinear} cointegrating relations
between those trends. As discussed in the recent review by \citet{Tjo20EctRev},
despite the voluminous literature on the subject of `nonlinear cointegration',
this problem has yet to be addressed at any reasonable level of generality.\footnote{While \citet{CGT17JBES} make an important effort in this direction,
their results are limited to a first-order bivariate VAR with two
regimes, in which one of those regimes is delimited by a compact set,
and so makes a negligible contribution to the long-run behaviour of
the series generated by the model. Their results are thus markedly
different from those obtained below.} Within the framework of the CKSVAR, we provide a resolution of this
problem, showing that the model naturally gives rise to nonlinear
cointegration, generating nonlinear common trend processes -- \emph{censored},
\emph{regulated}, and \emph{kinked} Brownian motions (see \defref{nonlinearBMs}
below) -- that have not previously appeared in multivariate settings.

To clarify how our work relates to the existing literature on `nonlinear
cointegration', and to explain why we have been able to make progress
in an area that has previously seemed intractable, we briefly recall
the two main strands of that literature. One strand (\citealp{Tjo20EctRev},
p.~657) starts from the vector error correction model (VECM) representation
of a cointegrated VAR, and introduces nonlinearity into the error
correction mechanism; a prototypical model is
\begin{equation}
\Delta z_{t}=g(\beta^{\trans}z_{t-1})+\sum_{i=1}^{k-1}\Gamma_{i}\Delta z_{t-i}+u_{t},\label{eq:nlvecm}
\end{equation}
in which the usually linear loadings $\alpha[\beta^{\trans}z_{t-1}]=\alpha\xi_{t-1}$
on the equilibrium errors are replaced by a general nonlinear function.
In the original `threshold cointegration' conception of this model,
due to \citet{BF97IER}, $g$ is piecewise linear, i.e.\  $g(\xi_{t})=\sum_{i=1}^{m}\alpha^{(i)}\indic\{\xi_{t}\in\Xi_{i}\}\xi_{t}$,
where each of the $\alpha^{(i)}$'s correspond to different `regimes'.
The values of $\{\Gamma_{i}\}$ may also depend on lags of $\xi_{t}$
or $\Delta z_{t}$. (For regime-switching versions, including of
the smoothed variety, see e.g.\ \citealp{HS02JoE}; \citealp{Saik05JoE,Saik08ET};
and \citealp{Seo11ET}; for versions in which $g$ is allowed to be
a more general nonlinear function, but the $\{\Gamma_{i}\}$ matrices
are fixed, see \citealp{EM02JTSA}; and \citealp{KR10JoE,KR13ET}).
Notably, the nonlinearity in such models is wholly confined to the
short-run dynamics: as in a linearly cointegrated VAR, there remains
a globally defined cointegrating space spanned by the columns of $\beta$
(i.e.\ which is common to all `regimes'), and the limiting common
trends remain a (vector) Brownian motion.

The other strand (\citealp{Tjo20EctRev}, pp.~658--666) takes as
its starting point the triangular representation of a linearly cointegrated
system, and introduces nonlinearity directly into the common trends,
by specifying
\begin{align}
y_{t} & =f(x_{t})+\varepsilon_{yt}, & x_{t} & =x_{t-1}+\varepsilon_{xt}.\label{eq:nlcointreg}
\end{align}
Here $f(x_{t})$ replaces what would ordinarily be a linear function,
with the consequence that the weak limit of $Y_{n}(\lambda)\defeq n^{-1/2}y_{\smlfloor{n\lambda}}$
will now be a nonlinear transformation of the limiting Brownian motions
associated with $X_{n}(\lambda)\defeq n^{-1/2}x_{\smlfloor{n\lambda}}$.
The errors $\varepsilon_{t}=(\varepsilon_{yt},\varepsilon_{xt})$
may be weakly dependent and cross-correlated, permitting $\{x_{t}\}$
to be endogenous. The function $f$ is typically estimated via some
sort of regression, either parametrically (\citealp{PP99ET,PP01Ecta};
\citealp{CW15JoE}; \citealp{LTG16AS}) or nonparametrically (\citealp{KMT07AS};
\citealp{WP09Ecta,WP16ET}; \citealp{Duffy17ET}; \citealp{DK21AS});
there is also literature on specification testing in this setting
(e.g.\ \citealp{WP12AS}; \citealp{DGTY17JoE}; \citealp{WWZ18JoE};
\citealp{BRN20ET}). Notable variants have used $f$ to model transitions
between regimes with distinct linear cointegrating relations (\citealp{SC04ET};
\citealp{GP06OBES}), or allowed it to take the `functional coefficient'
form $\beta(w_{t})x_{t}$ (\citealp{Xia09JoE,PW23JoE}).

In developing a VAR model that exhibits both nonlinearity in its short-run
dynamics, as in \eqref{nlvecm}, and in the implied (long-run) common
trends, as in \eqref{nlcointreg}, this paper is the first to bridge
the remarkably wide gulf that has existed between these two strands
of the literature. The CKSVAR turns out to provide just enough flexibility
to accommodate meaningful departures from linear cointegration, while
retaining a sufficiently nice structure to be tractable. We show that
depending on the rank conditions imposed on (submatrices of) the autoregressive
polynomial evaluated at unity, the CKSVAR is capable of generating
three distinct kinds of nonlinear cointegration, which we term: (i)
regulated cointegration; (ii) kinked cointegration; and (iii) linear
cointegration in a nonlinear VECM.

At a technical level, our contribution consists of identifying the
alternative configurations of the model that give rise to cases~(i)--(iii),
which are essentially exhaustive of the possibilities here, and deriving
analogues of the Granger--Johansen representation theorem in these
three cases.\footnote{Our analysis is exhaustive with respect to the possibilities for generating
series that are integrated of order one (in a suitably extended sense
of the term; see \defref{ast-integration}) within the CKSVAR; higher
orders of integration are not considered here.} In analysing case~(i), our work relates to that of \citet{Cav05ET},
\citet{LLS11Bern}, \citet{GLY13JoE}, and most closely to \citet{BD22},
all of whom obtain convergence to regulated Brownian motions in univariate
models. Case~(ii), even in the univariate setting, does not appear
to have been considered by any previous literature. While cases~(i)
and (ii) describe phenomena that are entirely new to the literature,
case~(iii) holds under a configuration of the model that falls within
the very broad class of nonlinear VECMs considered by \citet{Saik08ET}:
but even here, our results regarding the ergodicity of the short-memory
components of the model extend his, insofar as we are able to exploit
certain properties of the CKSVAR that are not shared by all the models
encompassed by his general framework.

We illustrate the economic significance of cases~(i) and (ii) by
demonstrating how each may arise in our running example of a stylised
structural model of monetary policy in the presence of a zero lower
bound, contingent on the values taken by certain model parameters.
We further apply our results to determine the long-run properties
of the structural model of \citet{ABH23mimeo}, for which the presence
of nonlinear transformations of potentially stochastically trending
series (due to a `long-run Phillips curve') precludes the application
of any pre-existing version of the Granger--Johansen representation
theorem.

The remainder of the paper is organised as follows. \secref{model}
introduces the CKSVAR model and the stylised structural macroeconomic
models that we use as running examples. \secref{cointegration} develops
the heuristics of the model with unit roots, outlining the tripartite
classification noted above. The main results of this paper, which
extend the Granger--Johansen representation theorem to the CKSVAR
model, are given in \secref{representation}. All proofs appear in
the appendices.

\begin{notation*}
$e_{m,i}$ denotes the $i$th column of an $m\times m$ identity matrix;
when $m$ is clear from the context, we write this simply as $e_{i}$.
In a statement such as $f(a^{\pm},b^{\pm})=0$, the notation `$\pm$'
signifies that both $f(a^{+},b^{+})=0$ and $f(a^{-},b^{-})=0$ hold;
similarly, `$a^{\pm}\in A$' denotes that both $a^{+}$ and $a^{-}$
are elements of $A$. All limits are taken as $n\goesto\infty$ unless
otherwise stated. $\inprob$ and $\wkc$ respectively denote convergence
in probability and in distribution (weak convergence). We write `$X_{n}(\lambda)\wkc X(\lambda)$
on $D_{\reals^{m}}[0,1]$' to denote that $\{X_{n}\}$ converges
weakly to $X$, where these are considered as random elements of $D_{\reals^{m}}[0,1]$,
the space of cadlag functions $[0,1]\setmap\reals^{m}$, equipped
with the uniform topology; we denote this as $D[0,1]$ whenever the
value of $m$ is clear from the context. $\smlnorm{\cdot}$ denotes
the Euclidean norm on $\reals^{m}$, and the matrix norm that it induces.
For $X$ a random vector and $p\geq1$, $\smlnorm X_{p}\defeq(\expect\smlnorm X^{p})^{1/p}$.
\end{notation*}

\section{Model: the censored and kinked SVAR}

\label{sec:model}

We consider a VAR($k$) model in $p$ variables, in which one series,
$y_{t}$, enters with coefficients that differ according to whether
it is above or below a time-invariant threshold $b$, while the other
$p-1$ series, collected in $x_{t}$, enter linearly. Defining
\begin{align}
y_{t}^{+} & \defeq\max\{y_{t},b\} & y_{t}^{-} & \defeq\min\{y_{t},b\},\label{eq:y-threshold}
\end{align}
we specify that $(y_{t},x_{t})$ follow 
\begin{equation}
\phi_{0}^{+}y_{t}^{+}+\phi_{0}^{-}y_{t}^{-}+\Phi_{0}^{x}x_{t}=c+\sum_{i=1}^{k}[\phi_{i}^{+}y_{t-i}^{+}+\phi_{i}^{-}y_{t-i}^{-}+\Phi_{i}^{x}x_{t-i}]+u_{t}\label{eq:var-two-sided}
\end{equation}
or, more compactly,
\begin{equation}
\phi^{+}(L)y_{t}^{+}+\phi^{-}(L)y_{t}^{-}+\Phi^{x}(L)x_{t}=c+u_{t},\label{eq:var-pm}
\end{equation}
where 
\begin{align*}
\phi^{\pm}(L) & \defeq\phi_{0}^{\pm}-\sum_{i=1}^{k}\phi_{i}^{\pm}L^{i} & \Phi^{x}(L) & \defeq\Phi_{0}^{x}-\sum_{i=1}^{k}\Phi_{i}^{x}L^{i},
\end{align*}
for $\phi_{i}^{\pm}\in\reals^{p\times1}$ and $\Phi_{i}^{x}\in\reals^{p\times(p-1)}$,
and $L$ denotes the lag operator. As in a linear SVAR, $\{u_{t}\}$
may be an i.i.d.\ sequence of mutually orthogonal structural shocks,
but our results below also permit them to be cross-correlated or weakly
dependent. Through an appropriate redefinition of $y_{t}$ and $c$,
we may take $b=0$ without loss of generality, and will do so throughout
the sequel.\footnote{Defining $y_{b,t}\defeq y_{t}-b$, $y_{b,t}^{+}\defeq\max\{y_{b,t},0\}$,
$y_{b,t}^{-}\defeq\min\{y_{b,t},0\}$ and $c_{b}\defeq c-[\phi^{+}(1)+\phi^{-}(1)]b$,
we can rewrite \eqref{var-pm} as
\[
\phi^{+}(L)y_{b,t}^{+}+\phi^{-}(L)y_{b,t}^{-}+\Phi^{x}(L)x_{t}=c_{b}+u_{t}.
\]
} In this case, $y_{t}^{+}$ and $y_{t}^{-}$ respectively equal the
positive and negative parts of $y_{t}$, and $y_{t}=y_{t}^{+}+y_{t}^{-}$.
(Throughout the following, the notation `$a^{\pm}$' connotes $a^{+}$
and $a^{-}$ as objects associated respectively with $y_{t}^{+}$
and $y_{t}^{-}$, or their lags. If we want to instead denote the
positive and negative parts of some $a\in\reals$, we shall do so
by writing $[a]_{+}\defeq\max\{a,0\}$ or $[a]_{-}\defeq\min\{a,0\}$.)

Models of the form of \eqref{var-pm} have previously been employed
in the literature to account for the dynamic effects of censoring,
occasionally binding constraints, and endogenous regime switching.
\citet{SM21} proposed exactly this model, which he termed the \emph{censored
and kinked structural VAR} (CKSVAR) model, to describe the operation
of monetary policy during periods when a zero lower bound may bind
on the policy rate: in our notation, $y_{t}$ corresponds to his `shadow
rate', expressing the central bank's desired policy stance, and
$y_{t}^{+}$ to the actual policy rate. \citet{AMSV21} considered
a model in which one variable is subject to an occasionally binding
constraint, which although in its initial formulation is somewhat
more general, reduces to an instance of the CKSVAR once the conditions
necessary for the model to have a unique solution (for all values
of $u_{t}$) have been imposed (see their Proposition~1(i)). This
version of their model -- i.e.\ that in which the `private sector
regression functions' are piecewise linear and continuous -- is
thus accommodated by \eqref{var-pm}.\footnote{See also \citet{ACHSV21RED}, for a DSGE model with an occasionally
binding constraint, in which agents' decision rules are approximated
by functions with these properties.}

To put some economic flesh on the bones of the representation theory
developed in Sections~\ref{sec:cointegration} and \ref{sec:representation}
below, we here introduce the running example of a stylised structural
model of monetary policy in the presence of a zero lower bound (ZLB).
This model provides a simple, economically interpretable framework
in which we may illustrate the various forms of novel long-run behaviour
permitted by the CKSVAR, by considering alternate parametrisations
of the model. Moreover, as discussed in \secref{cointegration} below,
the model elucidates how that long-run behaviour may provide identifying
information on the relative effectiveness of unconventional monetary
policy (as compared with conventional rate-setting policy), i.e.\ on
whether the zero lower bound really constrains the ability of a central
bank to target inflation.

\needspace{3\baselineskip}
\begin{example}
\label{exa:monetary}Consider the following stylised structural model,
a simplified version of the model of \citet{ILMZ20},  consisting
of a composite IS and Phillips curve (PC) equation
\begin{align}
\pi_{t}-\abv{\pi}_{t} & =\theta[i_{t}^{+}+\mu i_{t}^{-}-(r_{t}^{\ast}+\abv{\pi}_{t})]+\varepsilon_{t}\label{eq:is-pc}
\end{align}
and a policy reaction function (Taylor rule)
\begin{equation}
i_{t}=(r_{t}^{\ast}+\abv{\pi}_{t})+\gamma(\pi_{t}-\abv{\pi}_{t}),\label{eq:taylor}
\end{equation}
where $r_{t}^{\ast}$ denotes the (real) natural rate of interest,
$\abv{\pi}_{t}$ the central bank's inflation target, $\pi_{t}$
inflation, and $\varepsilon_{t}$ a mean zero, i.i.d.\ innovation.
$i_{t}$ measures the stance of monetary policy; thus $i_{t}^{+}\defeq[i_{t}]_{+}$
gives the actual policy rate (constrained to be non-negative), and
$i_{t}^{-}\defeq[i_{t}]_{-}$ the desired stance of policy when the
ZLB binds, to be effected via some form of `unconventional' monetary
policy, such as long-term asset purchases. We maintain that $\gamma>0$
and $\theta<0$. The parameter $\mu\in[0,1]$ reflects the relative
efficacy of unconventional policy, with $\mu=1$ if this is as effective
as conventional policy.

To `close' the model, we consider two alternative specifications
for the underlying processes followed by $\{r_{t}^{\ast}\}$ and $\{\abv{\pi}_{t}\}$.
In the first of these \exname{\ref*{exa:monetary}a}\refstepcounter{examplex}(henceforth,
\label{exa:natrate}\textbf{Example~\theexamplex{}}) the inflation
target is assumed to be constant and is normalised to zero (i.e.\ $\abv{\pi}_{t}=\abv{\pi}=0$),
while the natural real rate of interest follows a random walk AR(1)
process (as in the model of \citealp{LW03REStat}),
\begin{equation}
r_{t}^{\ast}=r_{t-1}^{\ast}+\eta_{t}\label{eq:natural-rate}
\end{equation}
where $\eta_{t}$ is an i.i.d.\ mean zero innovation, possibly correlated
with $\varepsilon_{t}$. Substituting \eqref{taylor} into \eqref{is-pc}
and \eqref{natural-rate}, we render the system as a CKSVAR for $(i_{t},\pi_{t})$
as
\begin{equation}
\begin{bmatrix}1 & 1 & -\gamma\\
0 & \theta(1-\mu) & 1-\theta\gamma
\end{bmatrix}\begin{bmatrix}i_{t}^{+}\\
i_{t}^{-}\\
\pi_{t}
\end{bmatrix}=\begin{bmatrix}1 & 1 & -\gamma\\
0 & 0 & 0
\end{bmatrix}\begin{bmatrix}i_{t-1}^{+}\\
i_{t-1}^{-}\\
\pi_{t-1}
\end{bmatrix}+\begin{bmatrix}\eta_{t}\\
\varepsilon_{t}
\end{bmatrix}.\label{eq:cksvar-natrate}
\end{equation}
This model will provide an illustration of the second kind of nonlinear
cointegration developed in \secref{cointegration} below.

In the second variant of the model \exname{\ref*{exa:monetary}b}\refstepcounter{examplex}(henceforth,
\label{exa:infldrift}\textbf{Example~\theexamplex{}}) the natural
rate is assumed to be constant and, for simplicity of exposition,
normalised to zero (i.e.\ $r_{t}^{\ast}=r^{\ast}=0$), while the
inflation target is allowed to be time-varying, according to
\begin{equation}
\abv{\pi}_{t}=\abv{\pi}_{t-1}+\delta(\pi_{t-1}-\abv{\pi}_{t-1})+\eta_{t}\label{eq:infltarget}
\end{equation}
where $\delta\in(-1,0]$, and $\eta_{t}$ is an i.i.d.\ innovation
as above. When $\delta=0$, this corresponds to a model in which the
inflation target follows a pure random walk, possibly reflecting the
time-varying preferences of the central bank (cf.\ \citealp{CS08AER});
when $\delta<0$, the model allows past deviations of inflation from
target to feed back into the target, such that e.g.\ below-target
inflation induces an \emph{upward} revision of the inflation target.
Motivation for this aspect of the model comes from the manner in which
the ZLB may constrain policy to be excessively deflationary for a
sustained period, something that has prompted the literature to consider
the costs and benefits of adopting a higher inflation target (e.g.\ \citealp[pp.~207f.]{BDM10JMCB};
\citealp{CGW10RES}). Supposing additionally that $\gamma>1$, we
may put \eqref{is-pc}, \eqref{taylor} and \eqref{infltarget} in
the form of a CKSVAR as
\begin{equation}
\begin{bmatrix}-1 & -1 & \gamma\\
\varphi_{1} & \varphi_{\mu} & -\varphi_{1}
\end{bmatrix}\begin{bmatrix}i_{t}^{+}\\
i_{t}^{-}\\
\pi_{t}
\end{bmatrix}=\begin{bmatrix}\delta-1 & \delta-1 & \gamma-\delta\\
0 & 0 & 0
\end{bmatrix}\begin{bmatrix}i_{t-1}^{+}\\
i_{t-1}^{-}\\
\pi_{t-1}
\end{bmatrix}+(\gamma-1)\begin{bmatrix}\eta_{t}\\
\varepsilon_{t}
\end{bmatrix}\label{eq:cksvar-infldrift}
\end{equation}
where $\varphi_{\mu}\defeq(1-\mu\theta\gamma)-\theta(1-\mu)$ and
so $\varphi_{1}=1-\theta\gamma$. Depending on the assumptions made
on the model parameters (in particular $\delta$), this model is capable
of generating either of the first two types of nonlinear cointegration
discussed in \secref{cointegration}.
\end{example}
\addtocounter{examplex}{-2}

While both \citet{SM21} and \citet{AMSV21} motivate and interpret
\eqref{var-pm} as a structural model, empirically motivated reduced-form
models of this kind have also appeared in the literature, particularly
in the univariate ($p=1$) case of \eqref{var-pm}, which encompasses
the dynamic Tobit model (\citealt[p.~186]{Maddala83}; for applications,
see e.g.\ \citealp{DJ02FRB}; \citealp{DJH11}; \citealp{BMMV21JBF};
and \citealp{Byk21JBES}).
\begin{example}[univariate]
\label{exa:dyntobit} Consider \eqref{var-pm} with $p=1$ and $\phi_{0}^{+}=\phi_{0}^{-}=1$,
so that
\begin{equation}
y_{t}=c+\sum_{i=1}^{k}(\phi_{i}^{+}y_{t-i}^{+}+\phi_{i}^{-}y_{t-i}^{-})+u_{t}.\label{eq:univariate-case}
\end{equation}
In the nomenclature of \citet[Sec.~1]{BD22}, if $\phi_{i}^{-}=0$
for all $i\in\{1,\ldots,k\}$, so that only the positive part of $y_{t-i}$
enters the r.h.s., then 
\begin{equation}
y_{t}^{+}=\left[c+\sum_{i=1}^{k}\phi_{i}^{+}y_{t-i}^{+}+u_{t}\right]_{+}\label{eq:censored-Tobit}
\end{equation}
follows a `censored' dynamic Tobit.
\end{example}
We follow \citet{SM21} and \citet{AMSV21} in maintaining the following,
which are necessary and sufficient to ensure that \eqref{var-pm}
has a unique solution for $(y_{t},x_{t})$, for all possible values
of $u_{t}$. Define 
\[
\Phi_{0}\defeq\begin{bmatrix}\phi_{0}^{+} & \phi_{0}^{-} & \Phi_{0}^{x}\end{bmatrix}=\begin{bmatrix}\phi_{0,yy}^{+} & \phi_{0,yy}^{-} & \phi_{0,yx}^{\trans}\\
\phi_{0,xy}^{+} & \phi_{0,xy}^{-} & \Phi_{0,xx}
\end{bmatrix},
\]
$\Phi_{0}^{+}\defeq[\phi_{0}^{+},\Phi_{0}^{x}]$ and $\Phi_{0}^{-}\defeq[\phi_{0}^{-},\Phi_{0}^{x}]$.

\assumpname{DGP}
\begin{assumption}
\label{ass:dgp}~
\begin{enumerate}[label=\ass{\arabic*.}, ref=\ass{.\arabic*}, itemsep=1pt,topsep=2pt]
\item \label{enu:dgp:defn} $\{(y_{t},x_{t})\}$ are generated according
to \eqref{y-threshold}--\eqref{var-pm} with $b=0$, with (possibly
random) initial values $(y_{i},x_{i})$, for $i\in\{-k+1,\ldots,0\}$;
\item \label{enu:dgp:coherence} $\sgn(\det\Phi_{0}^{+})=\sgn(\det\Phi_{0}^{-})\neq0$.
\item \label{enu:dgp:wlog} $\Phi_{0,xx}$ is invertible, and
\[
\sgn\{\phi_{0,yy}^{+}-\phi_{0,yx}^{\trans}\Phi_{0,xx}^{-1}\phi_{0,xy}^{+}\}=\sgn\{\phi_{0,yy}^{-}-\phi_{0,yx}^{\trans}\Phi_{0,xx}^{-1}\phi_{0,xy}^{-}\}>0.
\]
\end{enumerate}
\end{assumption}
For a further discussion of these conditions, including why \assref{dgp}\ref{enu:dgp:wlog}
may be maintained without loss of generality when \assref{dgp}\ref{enu:dgp:coherence}
holds, see \citet[Sec.~2]{DMW23stat}. As in that paper, we shall
designate a CKSVAR as \emph{canonical} if\tightermath{0.8}{5}{
\begin{equation}
\Phi_{0}=\begin{bmatrix}1 & 1 & 0\\
0 & 0 & I_{p-1}
\end{bmatrix}\eqdef\Ican[p].\label{eq:canonical}
\end{equation}
}While it is not always the case that the reduced form of \eqref{var-pm}
corresponds directly to a canonical CKSVAR, by defining the canonical
variables
\begin{equation}
\begin{bmatrix}\tilde{y}_{t}^{+}\\
\tilde{y}_{t}^{-}\\
\tilde{x}_{t}
\end{bmatrix}\defeq\begin{bmatrix}\bar{\phi}_{0,yy}^{+} & 0 & 0\\
0 & \bar{\phi}_{0,yy}^{-} & 0\\
\phi_{0,xy}^{+} & \phi_{0,xy}^{-} & \Phi_{0,xx}
\end{bmatrix}\begin{bmatrix}y_{t}^{+}\\
y_{t}^{-}\\
x_{t}
\end{bmatrix}\eqdef P^{-1}\begin{bmatrix}y_{t}^{+}\\
y_{t}^{-}\\
x_{t}
\end{bmatrix},\label{eq:canon-vars}
\end{equation}
where $\bar{\phi}_{0,yy}^{\pm}\defeq\phi_{0,yy}^{\pm}-\phi_{0,yx}^{\trans}\Phi_{0,xx}^{-1}\phi_{0,xy}^{\pm}>0$
and $P^{-1}$ is invertible under \ref{ass:dgp}; and setting\tightermath{0.8}{5}{
\begin{equation}
\begin{bmatrix}\tilde{\phi}^{+}(\lambda) & \tilde{\phi}^{-}(\lambda) & \tilde{\Phi}^{x}(\lambda)\end{bmatrix}\defeq Q\begin{bmatrix}\phi^{+}(\lambda) & \phi^{-}(\lambda) & \Phi^{x}(\lambda)\end{bmatrix}P,\label{eq:canon-polys}
\end{equation}
}where 
\begin{equation}
Q\defeq\begin{bmatrix}1 & -\phi_{0,yx}^{\trans}\Phi_{0,xx}^{-1}\\
0 & I_{p-1}
\end{bmatrix},\label{eq:Q-canon}
\end{equation}
we obtain a canonical CKSVAR for $(\tilde{y}_{t},\tilde{x}_{t})$.
This is formalised by the following, which reproduces the first part
of Proposition~\thmcanonical{} in \citet{DMW23stat}.
\begin{prop}
\label{prop:canonical}Suppose \ref{ass:dgp} holds. Then there exist
$(\tilde{y}_{t},\tilde{x}_{t})$ such that \eqref{canon-vars}--\eqref{canon-polys}
hold, $\tilde{y}_{t}^{+}=\max\{\tilde{y}_{t},0\}$, $\tilde{y}_{t}^{-}=\min\{\tilde{y}_{t},0\}$
and
\begin{equation}
\tilde{\phi}^{+}(L)\tilde{y}_{t}^{+}+\tilde{\phi}^{-}(L)\tilde{y}_{t}^{-}+\tilde{\Phi}^{x}(L)\tilde{x}_{t}=\tilde{c}+\tilde{u}_{t},\label{eq:tildeVAR}
\end{equation}
is a canonical CKSVAR, where $\tilde{c}=Qc$ and $\tilde{u}_{t}=Qu_{t}$.
\end{prop}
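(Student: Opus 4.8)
The plan is to verify directly that the explicitly constructed objects in \eqref{canon-vars}--\eqref{canon-polys} have the advertised properties; there are four things to check: that $P^{-1}$ is invertible, that the transformed variables again satisfy the kink $\tilde y_t^{+}=\max\{\tilde y_t,0\}$ and $\tilde y_t^{-}=\min\{\tilde y_t,0\}$, that \eqref{tildeVAR} holds, and that its contemporaneous matrix equals $\Ican[p]$. The first is immediate: $P^{-1}$ in \eqref{canon-vars} is block lower-triangular, with the $\tilde y^{+}$- and $\tilde y^{-}$-rows carrying only the diagonal entries $\tilde\phi_{0,yy}^{+}$ and $\tilde\phi_{0,yy}^{-}$ in their first two columns, so $\det P^{-1}=\tilde\phi_{0,yy}^{+}\tilde\phi_{0,yy}^{-}\det\Phi_{0,xx}\neq0$ by \assref{dgp}\ref{enu:dgp:wlog}; hence $P$ exists and can be written out, the only feature I shall need being that its top-left $2\times2$ block is the diagonal matrix with entries $1/\tilde\phi_{0,yy}^{\pm}$.

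The one genuinely substantive step is the preservation of the kink, and it is here that the particular (block-diagonal) shape of the first two rows of $P^{-1}$ is essential — an arbitrary change of basis would destroy the $\max/\min$ structure. Because those rows are diagonal, \eqref{canon-vars} reads simply $\tilde y_t^{+}=\tilde\phi_{0,yy}^{+}y_t^{+}$ and $\tilde y_t^{-}=\tilde\phi_{0,yy}^{-}y_t^{-}$, with $\tilde\phi_{0,yy}^{\pm}>0$ by \assref{dgp}\ref{enu:dgp:wlog}. I would then argue: $\tilde y_t^{+}\geq0$ and $\tilde y_t^{-}\leq0$; at any $t$ at least one of $y_t^{+},y_t^{-}$ vanishes, hence so does the corresponding one of $\tilde y_t^{+},\tilde y_t^{-}$; and splitting into the cases $y_t\geq0$ (where $\tilde y_t\defeq\tilde y_t^{+}+\tilde y_t^{-}=\tilde y_t^{+}\geq0$) and $y_t<0$ (where $\tilde y_t=\tilde y_t^{-}<0$) yields $\tilde y_t^{+}=\max\{\tilde y_t,0\}$ and $\tilde y_t^{-}=\min\{\tilde y_t,0\}$ in both cases.

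For \eqref{tildeVAR}, I would set $\tilde c\defeq Qc$ and $\tilde u_t\defeq Qu_t$, substitute $(y_t^{+},y_t^{-},x_t)^{\trans}=P(\tilde y_t^{+},\tilde y_t^{-},\tilde x_t)^{\trans}$ into the compact form \eqref{var-pm} — legitimate because $P$ is a constant matrix and so commutes through the lag polynomial — premultiply by $Q$, and identify the result with \eqref{tildeVAR} via the definition \eqref{canon-polys} of $(\tilde\phi^{+},\tilde\phi^{-},\tilde\Phi^{x})$. Finally, to see that the system is canonical I would evaluate \eqref{canon-polys} at $\lambda=0$, giving $\tilde\Phi_0=[\tilde\phi_0^{+},\tilde\phi_0^{-},\tilde\Phi_0^{x}]=Q\Phi_0P$, and compute this product: left-multiplication by $Q$ subtracts $\phi_{0,yx}^{\trans}\Phi_{0,xx}^{-1}$ times the $x$-block rows of $\Phi_0$ from its $y$-row, turning the $y$-row into $(\tilde\phi_{0,yy}^{+},\tilde\phi_{0,yy}^{-},0)$ — the Schur complements of $\Phi_{0,xx}$ — and leaving the $x$-block rows $(\phi_{0,xy}^{+},\phi_{0,xy}^{-},\Phi_{0,xx})$ unchanged; then right-multiplication by $P$ rescales the $y$-row by $1/\tilde\phi_{0,yy}^{\pm}$ to $(1,1,0)$, while the last block of $P$ was chosen precisely so that $(\phi_{0,xy}^{+},\phi_{0,xy}^{-},\Phi_{0,xx})P=(0,0,I_{p-1})$. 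Hence $\tilde\Phi_0=\Ican[p]$, as required. I anticipate no real obstacle: everything reduces to block-matrix algebra once the transformation is prescribed, the only point needing a moment's thought being the verification, in the second paragraph, that the kink structure survives the reparametrisation.
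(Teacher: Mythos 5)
Your verification is correct: invertibility of $P^{-1}$ follows from \assref{dgp}\ref{enu:dgp:wlog}, the diagonal form of the first two rows of $P^{-1}$ together with $\tilde{\phi}_{0,yy}^{\pm}>0$ preserves the positive/negative-part structure, and the block computation $Q\Phi_{0}P=\Ican[p]P^{-1}P=\Ican[p]$ is exactly right. Note that the paper does not prove this proposition itself but imports it from the companion paper \citet{DMW23stat}; your direct verification is the natural argument dictated by the explicit construction in \eqref{canon-vars}--\eqref{Q-canon}, and I see no gap in it.
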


To distinguish between a general CKSVAR in which possibly $\Phi_{0}\neq\Ican[p]$,
and its associated canonical form as given by \propref{canonical},
we shall refer to the former as the `structural form' of the CKSVAR.
Since the time series properties of a general CKSVAR are largely inherited
from its derived canonical form, we shall often work with this more
convenient representation of the system, and indicate this as follows.

\assumpname{DGP$^{\ast}$}
\begin{assumption}
\label{ass:dgp-canon}$\{(y_{t},x_{t})\}$ are generated by a canonical
CKSVAR, i.e.\ \assref{dgp} holds with $\Phi_{0}=[\phi_{0}^{+},\phi_{0}^{-},\Phi^{x}]=\Ican[p]$,
so that \eqref{var-two-sided} may be equivalently written as 
\begin{equation}
\begin{bmatrix}y_{t}\\
x_{t}
\end{bmatrix}=c+\sum_{i=1}^{k}\begin{bmatrix}\phi_{i}^{+} & \phi_{i}^{-} & \Phi_{i}^{x}\end{bmatrix}\begin{bmatrix}y_{t-i}^{+}\\
y_{t-i}^{-}\\
x_{t-i}
\end{bmatrix}+u_{t}.\label{eq:canon-var}
\end{equation}
\end{assumption}
\saveexamplex{}

\section{Unit roots and nonlinear cointegration: heuristics}

\label{sec:cointegration}

\subsection{Nonlinearity and cointegration}

\label{subsec:nonlinear-coint}

It is well known that a linear VAR can faithfully replicate the high
persistence, random wandering and long-run co-movement that is characteristic
of a great many macroeconomic time series, via the imposition of unit
autoregressive roots and the familiar rank conditions (\citealp{Joh95}).
The question thus arises as to whether, and how, such behaviour may
also be generated within a CKSVAR, so that the model might still be
applied to series for which a stationary CKSVAR would be inappropriate.
As we shall see, it is possible not only to accommodate linear cointegration
within the CKSVAR, but also to generate a variety of \emph{nonlinear}
forms of cointegration, owing to the richer class of common trend
processes that the model supports. Moreover, as our examples below
illustrate, such departures from linear cointegration may also aid
in the identification of structural parameters.

In developing the CKSVAR with unit roots, we shall find it necessary
to depart from the usual classification of processes according to
their orders of integration, since the nonlinearity in the model generally
prevents it from generating series that are difference stationary.
This is an issue commonly encountered in regime-switching cointegration
models: see e.g.\ the discussion in \citet[pp.~816f.]{GP06OBES},
where this motivates the definition of the `order of summability'
of a time series, and the allied notion of `co-summability' as a
generalisation of linear cointegration (\citealp{BRG14JoE}, pp.\ 335f.).
While those concepts could well be applied to the CKSVAR, the following
properties, which may be more easily verified, will suffice for our
purposes.
\begin{defn}
\label{def:ast-integration}Let $\{w_{t}\}_{t\in\naturals}$ be a
random sequence taking values in $\reals^{p}$. We say that $\{w_{t}\}$
is:
\begin{enumerate}[itemsep=2pt,topsep=3pt]
\item \emph{$\ast$-stationary}, denoted $w_{t}\sim I^{\ast}(0)$, if $\sup_{1\leq t\leq n}\smlnorm{w_{t}}=o_{p}(n^{1/2})$;
or
\item \emph{\label{enu:-star-integrated}$\ast$-integrated} (\emph{of order
one}), denoted $w_{t}\sim I^{\ast}(1)$, if $n^{-1/2}w_{\smlfloor{n\lambda}}\wkc\ell(\lambda)$
on $D_{\reals^{p}}[0,1]$, where $\ell$ is a non-degenerate stochastic
process with continuous sample paths;
\end{enumerate}
and analogously for subvectors (and individual elements) of $\{w_{t}\}$.
\end{defn}
The preceding relates to the discussion of `cointegration' in \citet{MW13JoE},
who consider a model with linear cointegration in which the common
trends belong to a broad class of processes satisfying a weak convergence
criterion (see their eq.\ (2)), which includes strict $I(1)$ and
local-to-unity models as special cases, so that the vector of time
series generated by the model is $I^{\ast}(1)$ as defined above.
The bounded unit root processes of \citet{Cav05ET} also provide an
example of a series that converges weakly upon standardisation by
$n^{-1/2}$, and so are $I^{\ast}(1)$, but not $I(1)$.

\saveexamplex{}

\exname{\ref*{exa:dyntobit}}
\begin{example}[univariate; ctd]
 Within the CKSVAR framework, a simple, non-trivial example of series
that are $I^{\ast}(d)$ but not $I(d)$ is provided by the censored
dynamic Tobit. When $\sum_{i=1}^{k}\phi_{i}^{+}=1$, this model has
a unit root, and \citet{BD22} show that $n^{-1/2}y_{\smlfloor{n\lambda}}^{+}\wkc Y^{+}(\lambda)$
on $D[0,1]$, where $Y^{+}$ is a Brownian motion regulated (at zero)
from below (their Theorem~3.2; see \defref{nonlinearBMs} below),
and $\smlnorm{\Delta y_{t}^{+}}_{2+\delta_{u}}$ is uniformly bounded
(their Lemma~B.2). Thus $\Delta y_{t}^{+}\sim I^{\ast}(0)$ and $y_{t}^{+}\sim I^{\ast}(1)$,
even though, due to the nonlinearity in the model, neither series
are $I(d)$ for any $d$.
\end{example}
\restoreexamplex{}

Here we also need an enlarged notion of `cointegration' that is
sufficiently general to encompass the possibilities of nonlinear cointegration
accommodated by the CKSVAR model, particularly for the analysis of
`case\ \caseclas{}' below, such as is provided by the following
(cf.\ \citealp{GP06OBES}, p.~817).
\begin{defn}
\label{def:coint} Let $\mathscr{Z}^{+}\defeq\{(y,x)\in\reals^{p}\mid y\geq0\}$
and $\mathscr{Z}^{-}\defeq\{(y,x)\in\reals^{p}\mid y\leq0\}$, $r^{\pm}\in\{0,\ldots,p-1\}$,
and $\beta^{\pm}\in\reals^{p\times r^{\pm}}$ have full column rank.
Suppose $z_{t}=(y_{t},x_{t}^{\trans})^{\trans}\sim I^{\ast}(1)$,
but
\[
\indic\{z_{t}\in\mathscr{Z}^{(i)}\}\theta^{\trans}z_{t}\sim I^{\ast}(0)\iff\theta\in\spn\beta^{(i)}
\]
for $(i)\in\{+,-\}$. Then $z_{t}$ is said to be \emph{cointegrated}
\emph{on} $\mathscr{Z}^{(i)}$, with $r^{(i)}$ the \emph{cointegrating
rank on} $\mathscr{Z}^{(i)}$, $\spn\beta^{(i)}$ the \emph{cointegrating
space} \emph{on $\mathscr{Z}^{(i)}$}, and any (nonzero) element of
$\spn\beta^{(i)}$ a \emph{cointegrating vector on} $\mathscr{Z}^{(i)}$,
for $(i)\in\{+,-\}$. If $\beta^{(i)}$ does not depend on $(i)$,
we drop the `on $\mathscr{Z}^{(i)}$' qualifiers.
\end{defn}

\subsection{The CKSVAR with unit roots}

Our next step is to rewrite the CKSVAR, as in \eqref{var-two-sided}
or \eqref{var-pm} above, in the form of a vector error-correction
model (VECM). Define the autoregressive polynomials\tightermath{0.8}{5}{
\[
\Phi^{\pm}(\lambda)\defeq\begin{bmatrix}\phi^{\pm}(\lambda) & \Phi^{x}(\lambda)\end{bmatrix},
\]
}and let $\Gamma_{i}^{\pm}\defeq-\sum_{j=i+1}^{k}\Phi_{j}^{\pm}\eqdef[\gamma_{i}^{\pm},\Gamma^{x}]$
for $i\in\{1,\ldots,k-1\}$, so that $\Gamma^{\pm}(\lambda)\defeq\Phi_{0}^{\pm}-\sum_{i=1}^{k-1}\Gamma_{i}^{\pm}\lambda^{i}$
is such that
\[
\Phi^{\pm}(\lambda)=\Phi^{\pm}(1)\lambda+\Gamma^{\pm}(\lambda)(1-\lambda).
\]
Set $\pi^{\pm}\defeq-\phi^{\pm}(1)$ and $\Pi^{x}\defeq-\Phi^{x}(1)$.
Then
\begin{equation}
\Phi_{0}\begin{bmatrix}\Delta y_{t}^{+}\\
\Delta y_{t}^{-}\\
\Delta x_{t}
\end{bmatrix}=c+\begin{bmatrix}\pi^{+} & \pi^{-} & \Pi^{x}\end{bmatrix}\begin{bmatrix}y_{t-1}^{+}\\
y_{t-1}^{-}\\
x_{t-1}
\end{bmatrix}+\sum_{i=1}^{k-1}\begin{bmatrix}\gamma_{i}^{+} & \gamma_{i}^{-} & \Gamma_{i}^{x}\end{bmatrix}\begin{bmatrix}\Delta y_{t-i}^{+}\\
\Delta y_{t-i}^{-}\\
\Delta x_{t-i}
\end{bmatrix}+u_{t},\label{eq:vecm-general}
\end{equation}
where $\Delta\defeq1-L$ denotes the difference operator, and for
clarity we note that $\Delta y_{t}^{+}=y_{t}^{+}-y_{t-1}^{+}$ (rather
than being the positive part of $\Delta y_{t}$). In the case of a
canonical CKSVAR, \eqref{vecm-general} helpfully reduces to
\begin{equation}
\begin{bmatrix}\Delta y_{t}\\
\Delta x_{t}
\end{bmatrix}=c+\begin{bmatrix}\pi^{+} & \pi^{-} & \Pi^{x}\end{bmatrix}\begin{bmatrix}y_{t-1}^{+}\\
y_{t-1}^{-}\\
x_{t-1}
\end{bmatrix}+\sum_{i=1}^{k-1}\begin{bmatrix}\gamma_{i}^{+} & \gamma_{i}^{-} & \Gamma_{i}^{x}\end{bmatrix}\begin{bmatrix}\Delta y_{t-i}^{+}\\
\Delta y_{t-i}^{-}\\
\Delta x_{t-i}
\end{bmatrix}+u_{t}.\label{eq:vecm}
\end{equation}
While our main results apply to the general CKSVAR, the reader may
find it helpful to work through the remainder of this section under
the supposition that $(y_{t},x_{t})$ are generated by a canonical
CKSVAR.

Just as in a linear (cointegrated) VAR, which corresponds to the special
case of \eqref{vecm} in which $\pi^{+}=\pi^{-}$ and $\gamma_{i}^{+}=\gamma_{i}^{-}$
for all $i\in\{1,\ldots,k-1\}$, the long-run dynamics will be governed
by the matrix of coefficients on the lagged levels. More precisely,
there are two such $p\times p$ matrices, $\Pi^{+}$ and $\Pi^{-}$,
defined by
\begin{equation}
\Pi^{\pm}\defeq[\pi^{\pm},\Pi^{x}]=-\Phi^{\pm}(1).\label{eq:PIpm}
\end{equation}
Although the canonical CKSVAR technically has $2^{k}$ distinct autoregressive
`regimes' (corresponding to the possible sign patterns of $\b y_{t-1}\defeq(y_{t-1},\ldots,y_{t-k})^{\trans}$;
see also \citealp{DMW23stat}, Sec.\ \secregimes{}), the behaviour
of the CKSVAR with unit roots depends largely on the two regimes in
which the elements of $\b y_{t-1}$ are all either positive or negative,
which we shall loosely refer to as the `positive' and `negative'
regimes, to which $\Pi^{+}$ and $\Pi^{-}$ correspond. This simplification
occurs because whenever $y_{t}\sim I^{\ast}(1)$, it spends most of
its time away from the origin, so that all elements of $\b y_{t-1}$
will have the same sign almost all of the time.

Our baseline assumptions on the CKSVAR with unit roots may now be
stated.

\assumpname{CVAR}
\begin{assumption}
\label{ass:co}~
\begin{enumerate}[label=\ass{\arabic*.}, ref=\ass{.\arabic*}, itemsep=1pt,topsep=2pt]
\item \label{enu:co:roots}$\det\Phi^{\pm}(\lambda)$ has $q^{\pm}\in\{1,\ldots,p\}$
unit roots, and all others outside the unit circle; and
\item \label{enu:co:rank}$\rank\Pi^{\pm}=r^{\pm}=p-q^{\pm}$.
\end{enumerate}
\end{assumption}
\assumpname{DET}
\begin{assumption}
\label{ass:det}$c\in\spn\Pi^{+}\intsect\spn\Pi^{-}$.
\end{assumption}
\assumpname{ERR}
\begin{assumption}
\label{ass:coerr} $\{u_{t}\}_{t\in\naturals}$ is a random sequence
in $\reals^{p}$, such that $\sup_{t\in\naturals}\smlnorm{u_{t}}_{2+\delta_{u}}<\infty$
for some $\delta_{u}>0$, and
\begin{equation}
U_{n}(\lambda)\defeq n^{-1/2}\sum_{t=1}^{\smlfloor{n\lambda}}u_{t}\wkc U(\lambda)\label{eq:Unwkc}
\end{equation}
on $D[0,1]$, where $U$ is a Brownian motion in $\reals^{p}$ with
(positive definite) variance $\Sigma$.
\end{assumption}
\begin{rem}
\label{rem:coerr}By the functional martingale central limit theorem,
under the stated moment condition on $\{u_{t}\}$, a sufficient but
not necessary condition for \eqref{Unwkc} is that $\{u_{t}\}$ be
a stationary and ergodic martingale difference sequence with $\Sigma\defeq\expect u_{t}u_{t}^{\trans}$.
However, \assref{coerr} also allows $\{u_{t}\}$ to be weakly dependent,
such as if e.g.\ $u_{t}=\sum_{i=0}^{\infty}\Theta_{i}\eta_{t-i}$,
where $\sum_{i=0}^{\infty}\smlnorm{\Theta_{i}}<\infty$ and $\{\eta_{t}\}$
is a mean zero, i.i.d.\ process with $\smlnorm{\eta_{t}}_{2+\delta_{u}}<\infty$.
\end{rem}
In a linear VAR (i.e.\ where $\Phi^{+}(\lambda)=\Phi^{-}(\lambda)$)
under \assref{co} the Granger--Johansen representation theorem (\citealp[Thm.~4.2 and Cor.~4.3]{Joh95})
implies that $z_{t}\defeq(y_{t},x_{t}^{\trans})^{\trans}\sim I^{\ast}(1)$
and that there exists a full column rank matrix $\beta\in\reals^{p\times r}$
such that $\beta^{\trans}z_{t}\sim I^{\ast}(0)$, where $r=r^{+}=r^{-}$.
Thus $\{z_{t}\}$ is cointegrated in the sense of \defref{coint},
with cointegrating space $\spn\beta$. Here \assref{det} specialises
to $c\in\spn\Phi(1)$, preventing the model from generating any common
\emph{deterministic} trends between the series; it has the same effect
more generally when $\Pi^{+}\neq\Pi^{-}$. (We shall relax this condition,
so as to allow for deterministic trends, in \subsecref{deterministic}
below.)

While linear cointegration may occur in a CKSVAR, other phenomena
are possible, depending on the ranks of $\Pi^{+}$, $\Pi^{-}$ and
$\Pi^{x}$. Within the framework of $I^{\ast}(0)$ and $I^{\ast}(1)$
processes, as delimited by \assref{co}, there are three possibilities,
each of which generate profoundly different trajectories for $\{y_{t}\}$.
These are characterised in \tblref{PI}; in each case $r$ (without
a superscript) is defined such that it is possible to regard the system
as having a cointegrating rank of at least $r$ in each `regime'.
Since, $\Pi^{+}$ and $\Pi^{-}$ differ by only their first column,
$r^{+}$ and $r^{-}$ may differ by at most one, so that the case
where $r^{+}\neq r^{-}$ may be identified with case~\casecens{}
in the table without loss of generality.\footnote{If we were to instead take $r^{+}=r^{-}+1$ in this case, the characteristic
behaviour of the series in the positive and negative regimes, with
$y_{t}^{+}\sim I^{\ast}(1)$ and $y_{t}^{-}\sim I^{\ast}(0)$, would
now be reversed, but the key properties of the model would otherwise
be unaltered.}

\begin{table}
\begin{centering}
\begin{tabular}{ccccccc}
\toprule 
\addlinespace
\textbf{Case} & \textbf{$r^{+}$} & \textbf{$r^{-}$} & \textbf{$\rank\Pi^{x}$} &  & \textbf{$\pi^{+}\in\spn\Pi^{x}$} & \textbf{$\pi^{-}\in\spn\Pi^{x}$}\tabularnewline\addlinespace
\midrule
\casecens{} & $r$ & $r+1$ & $r$ &  & yes & no\tabularnewline\addlinespace
\caseclas{} & $r$ & $r$ & $r$ &  & yes & yes\tabularnewline\addlinespace
\casestat{} & $r$ & $r$ & $r-1$ &  & no & no\tabularnewline\addlinespace
\bottomrule
\end{tabular}
\par\end{centering}
\caption{Possible configurations of $\Pi^{\pm}=[\pi^{\pm},\Pi^{x}]$ and $r^{\pm}=\protect\rank\Pi^{\pm}$}
\label{tbl:PI}
\end{table}

\subsection{Common trends and long-run trajectories}

\label{subsec:common-trends}

To develop some intuition for the properties of the model in these
three cases, in advance of the representation theory developed in
the next section, it is helpful to regard the long-run behaviour of
the processes as being characterised by a \emph{space of common trends}
$\ctspc$, defined as the set of non-stochastic (i.e.\ $u_{t}\defeq0$,
$\forall t$) steady state solutions to \eqref{vecm-general}:
\begin{align*}
\ctspc & \defeq\{(y,x)\in\reals^{p}\mid\pi^{+}[y]_{+}+\pi^{-}[y]_{-}+\Pi^{x}x=0\}\\
 & =\{(y,x)\in\mathscr{Z}^{+}\mid\pi^{+}y+\Pi^{x}x=0\}\union\{(y,x)\in\mathscr{Z}^{-}\mid\pi^{-}y+\Pi^{x}x=0\}\\
 & \eqdef\ctspc^{+}\union\ctspc^{-}.
\end{align*}
This space defines the domain of the limiting processes
\begin{equation}
Z_{n}(\lambda)\defeq\begin{bmatrix}Y_{n}(\lambda)\\
X_{n}(\lambda)
\end{bmatrix}\defeq\frac{1}{n^{1/2}}\begin{bmatrix}y_{\smlfloor{n\lambda}}\\
x_{\smlfloor{n\lambda}}
\end{bmatrix}\wkc\begin{bmatrix}Y(\lambda)\\
X(\lambda)
\end{bmatrix}\eqdef Z(\lambda).\label{eq:YnXn}
\end{equation}

In a linear VAR, $\ctspc=\ker\Pi$ is a $q$-dimensional linear subspace
of $\reals^{p}$, the orthogonal complement of which is the cointegrating
space; and $Z$ is a $p$-dimensional Brownian motion with rank $q$
covariance matrix -- i.e.\ it is a rank $q$ linear function of
$U$ in \eqref{Unwkc} above -- taking values in $\ctspc$. Whereas
in the CKSVAR, $\ctspc$ no longer need be a linear subspace, but
is instead a linear cone\footnote{Recall that a set $S\subset\reals^{p}$ is termed a \emph{linear cone}
if $\lambda s\in S$ for every $s\in S$ and $\lambda\geq0$, i.e.\ if
$S$ is closed under multiplication by \emph{non-negative} scalars.} formed from the union of $\ctspc^{+}$ and $\ctspc^{-}$, the vectors
orthogonal to which are the cointegrating vectors $\beta^{+}$ and
$\beta^{-}$ on the half spaces $\mathscr{Z}^{+}$ and $\mathscr{Z}^{-}$
respectively (recall \defref{coint} above). While $Z$ remains a
function of $U$, that function need not be linear: indeed, in addition
to (linear) Brownian motions (BMs), any of the following \emph{nonlinear}
processes may also appear among the limiting stochastic trends generated
by the CKSVAR. (For further discussion of regulated BMs, see \citealp{Har85book},
Ch.\ 1.)
\begin{defn}
\label{def:nonlinearBMs}Let $W$ be a linear BM initialised from
some $W(0)\in\reals^{p}$. Suppose $p=1$; the scalar process $V$
is said to be a 
\begin{enumerate}[itemsep=2pt,topsep=3pt]
\item \emph{censored} \emph{BM }(\emph{from below}), if $V(\lambda)=\max\{W(\lambda),0\}$
\item \emph{regulated} \emph{BM} (\emph{from below}), if $V(\lambda)=W(\lambda)+\sup_{\lambda^{\prime}\leq\lambda}[-W(\lambda^{\prime})]_{+}$.
\end{enumerate}
If $V$ is as in (i) or (ii), then $-V$ is respectively censored
or regulated \emph{from above}. Suppose now that $p\geq1$, and let
$G:\reals\setmap\reals^{p\times p}$ be a map that depends only on
the sign of its argument, and is such that: (a) $h\defeq e_{1}^{\trans}G(+1)=\mu e_{1}^{\trans}G(-1)$
for some $\mu>0$; and (b) $w\elmap G(h^{\trans}w)w$ is continuous.
Then the $p$-dimensional process $V$ is said to be a
\begin{enumerate}[resume, resume*]
\item \emph{kinked} \emph{BM}, if $V(\lambda)=G[h^{\trans}W(\lambda)]W(\lambda)=G[V_{1}(\lambda)]W(\lambda).$
\end{enumerate}
\end{defn}
\begin{figure}
\begin{adjustwidth}{-2cm}{-2cm}
\begin{centering}
\begin{tabular}{cc}
\includegraphics[viewport=230bp 165bp 612bp 430bp,clip,scale=0.6]{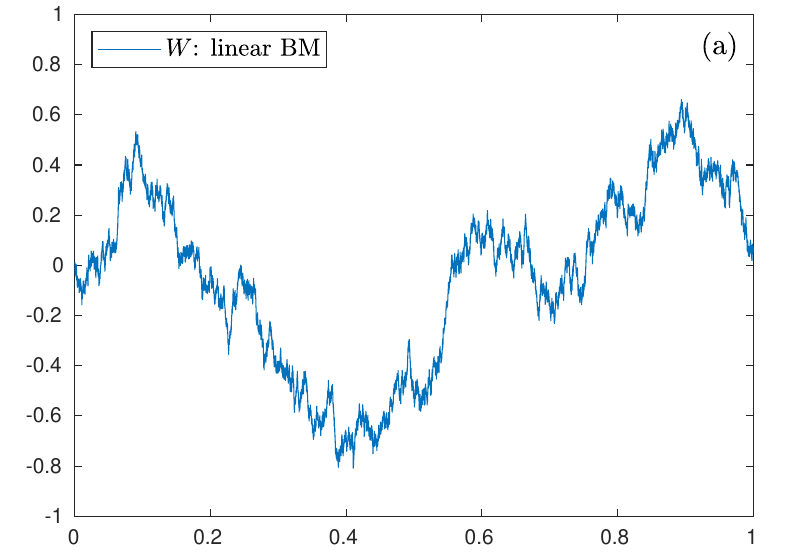} & \includegraphics[viewport=230bp 165bp 612bp 430bp,clip,scale=0.6]{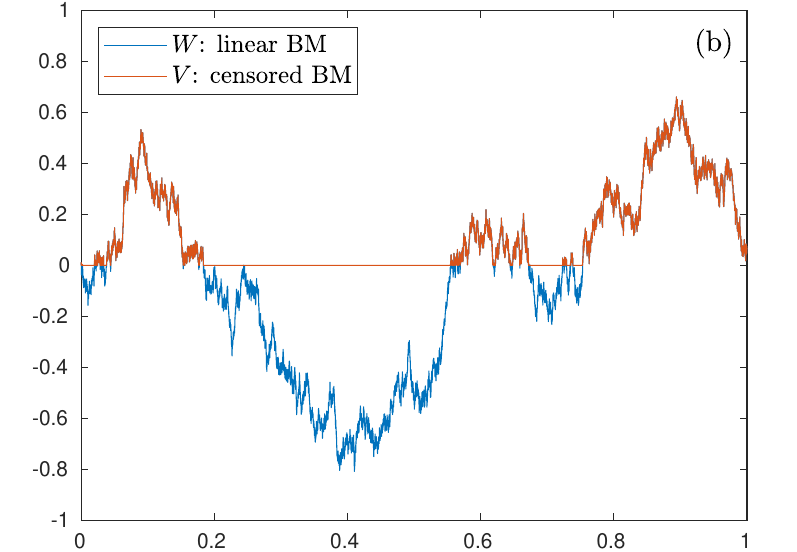}\tabularnewline
\includegraphics[viewport=230bp 165bp 612bp 430bp,clip,scale=0.6]{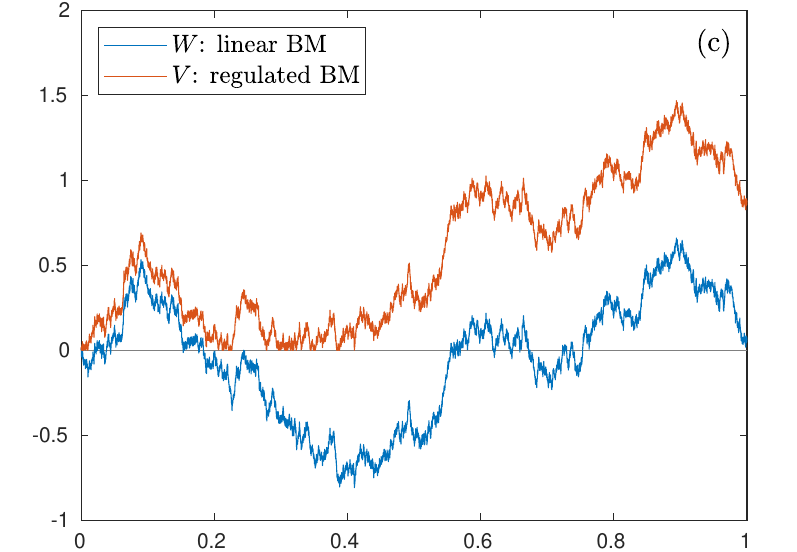} & \includegraphics[viewport=230bp 165bp 612bp 430bp,clip,scale=0.6]{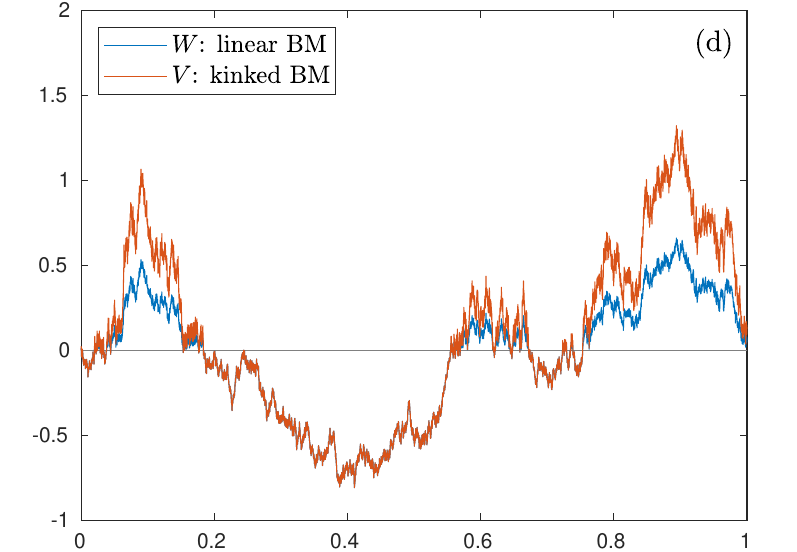}\tabularnewline
\end{tabular}
\par\end{centering}
\caption{Linear Brownian motion and derived nonlinear processes}
\label{fig:uni-bm}

\end{adjustwidth}
\end{figure}

\begin{rem}
Trajectories of these processes (denoted by $V$) in the univariate
case ($p=1$), together with the realisation of the standard Brownian
motion $W$ used to construct them, are plotted in \figref{uni-bm}.
While both censored and regulated BMs are constrained to be positive,
it is evident from panels~(b) and (c) that there are important differences
between them. For the former, the censoring does not feed back into
the underlying dynamics, and $V$ spends long stretches at zero (while
$W$ is negative); whereas for the latter, the $\sup_{\lambda^{\prime}\leq\lambda}[-W(\lambda^{\prime})]_{+}$
term continually reflects $V$ away from zero, so that $V$ spends
relatively little time near zero. The kinked BM in panel~(d) is constructed
as 
\[
V(\lambda)=\sigma_{-}W(\lambda)\indic\{W(\lambda)<0\}+\sigma_{+}W(\lambda)\indic\{W(\lambda)\geq0\}
\]
with $\sigma_{-}=1$ and $\sigma_{+}=2$; hence it tracks $W$ exactly
when $W(\lambda)<0$, but doubles the scale of $W$ when $W(\lambda)>0$.
In other respects, in the univariate case, the trajectory of a kinked
BM more closely resembles that of a linear BM than do either censored
or regulated BMs. In the multivariate case, kinked BMs are linear
combinations of the elements of $W$, with weights that depend on
the sign of $V_{1}$. Thus when plotted individually they appear similarly
to panel~(d), but if $G(\pm1)$ is rank deficient, then there will
also be certain linear combinations of the elements of $V$ that
will be zero, with those combinations depending on the sign of $V_{1}$.
\end{rem}
For $p=2$, various possible shapes of $\ctspc$ are illustrated graphically
in Figures~\ref{fig:ctspc1}--\ref{fig:ctspc3}, along with matching
example trajectories for $(y_{t},x_{t})$. In all figures, $(y_{t},x_{t})$
are generated by a canonical CKSVAR with $k=1$, $c=0$, $u_{t}\distiid N[0,I_{2}]$,
and $y_{0}=x_{0}=0$; the specification of the model is completed
by specifying $\pi^{+}$, $\pi^{-}$ and $\Pi^{x}$ in \eqref{vecm},
the values of which are given in each panel. The associated cointegrating
vectors (on the half spaces $\mathscr{Z}^{+}$ and $\mathscr{Z}^{-}$)
are the vectors orthogonal of $\ctspc^{+}$ and $\ctspc^{-}$, while
the form of the limiting processes $Z$ is suggested by the shape
of the set $\ctspc$, where it concentrates. The main qualitative
features of the three cases, as enumerated in \tblref{PI} above,
are as follows.

\subsubsection{Case~\casecens{}: regulated cointegration}

\begin{figure}
\begin{adjustwidth}{-2cm}{-2cm}
\begin{centering}
\begin{tabular}{cc}
\includegraphics[viewport=20bp 35bp 530bp 365bp,clip,scale=0.5]{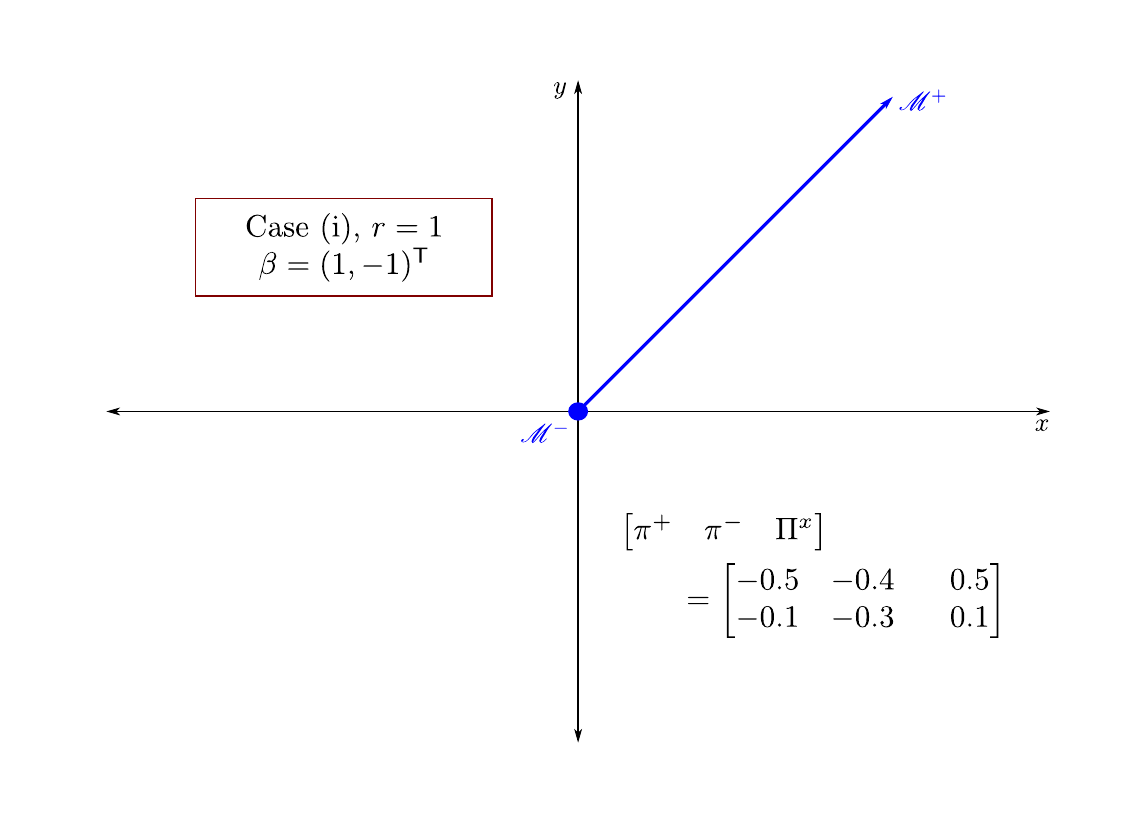} & \includegraphics[viewport=230bp 165bp 612bp 430bp,clip,scale=0.6]{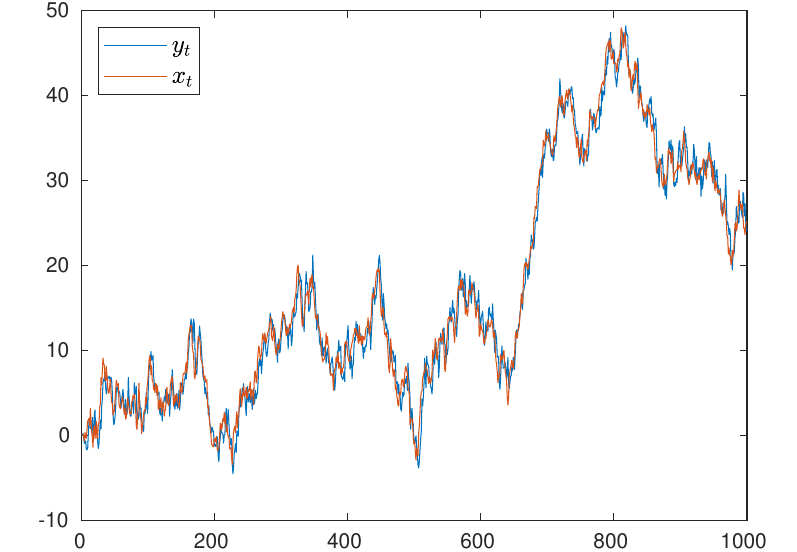}\tabularnewline
\end{tabular}
\par\end{centering}
\caption{Case~\casecens{} configuration of $\protect\ctspc$ and trajectories
of $(y_{t},x_{t})$}
\label{fig:ctspc1}

\end{adjustwidth}
\end{figure}

The distinguishing characteristic of this case is that the common
trends are restricted to the region where $y\geq0$, so that $Y$
will always be a Brownian motion regulated (from below) at zero, even
though $y_{t}$ itself may take negative values. In the case that
$q=1$, as e.g.\ when $p=2$ and $r=1$ as depicted in \figref{ctspc1},
$X$ will also be a regulated process, i.e.\ we have cointegration
where both processes share a common \emph{regulated} stochastic trend.
(When $q\geq2$, $X$ will also depend on $q-1$ additional linear
BMs.) Accordingly, a model configured as in case~\casecens{} would
be most appropriate when $y_{t}$ appears to wander randomly above
a threshold, and makes only brief sojourns below that threshold.

\saveexamplex{}

\exname{\ref*{exa:infldrift}}
\begin{example}[trending inflation target; ctd]
 Recall that this model (with $r_{t}^{\ast}=r^{\ast}=0$) is described
by:
\begin{align}
\pi_{t}-\abv{\pi}_{t} & =\theta(i_{t}^{+}+\mu i_{t}^{-}-\abv{\pi}_{t})+\varepsilon_{t}\tag{\ref{eq:is-pc}b}\label{eq:is-pc21b}\\
i_{t} & =\abv{\pi}_{t}+\gamma(\pi_{t}-\abv{\pi}_{t})\tag{\ref{eq:taylor}b}\label{eq:taylor21b}\\
\abv{\pi}_{t} & =\abv{\pi}_{t-1}+\delta(\pi_{t-1}-\abv{\pi}_{t-1})+\eta_{t},\tag{\ref{eq:infltarget}}\nonumber 
\end{align}
with the associated CKSVAR for $(i_{t},\pi_{t})$ being as in \eqref{cksvar-infldrift}
above. In this first-order model, $\Pi^{\pm}=-\Phi^{\pm}(1)=\Phi_{1}^{\pm}-\Phi_{0}^{\pm}$,
and thus it follows from \eqref{cksvar-infldrift} that
\begin{align}
\Pi^{+} & =\begin{bmatrix}\delta & -\delta\\
-\varphi_{1} & \varphi_{1}
\end{bmatrix}=\begin{bmatrix}\delta\\
-\varphi_{1}
\end{bmatrix}\begin{bmatrix}1 & -1\end{bmatrix} & \Pi^{-} & =\begin{bmatrix}\delta & -\delta\\
-\varphi_{\mu} & \varphi_{1}
\end{bmatrix}.\label{eq:PIpmexample}
\end{align}
Suppose $\delta<0$, which in the context of \ref{eq:infltarget}
implies that past deviations from target indeed feed back into the
central bank's inflation target. Then unless $\mu=1$ (in which case
the model is linear), $\varphi_{\mu}\neq\varphi_{1}$ and $\Pi^{-}$
has full rank. It follows that $\ctspc^{-}=\{0\}$, whereas since
$\rank\Pi^{+}=1$, $\ctspc^{+}$ is the `half' subspace orthogonal
to $\beta\defeq(1,-1)^{\trans}$, as depicted in \figref{ctspc1}.
It follows (via \thmref{cocens} below) that $i_{t}$ and $\pi_{t}$
are cointegrated, with cointegrating vector $\beta$, when $i_{t}$
is positive; but $I^{\ast}(0)$ when $i_{t}$ is negative; their common
limiting stochastic trend is a regulated BM.

For the economics underlying this, note that \eqref{is-pc21b} and
\eqref{taylor21b} imply that when the solution to the model has $i_{t}>0$,
i.e.\ when the ZLB is not binding, monetary policy is able to fully
achieve its objectives, in the sense that inflation is stabilised
to within an i.i.d.\ error of its target, as 
\[
\pi_{t}-\abv{\pi}_{t}=(1-\gamma\theta)^{-1}\varepsilon_{t}
\]
As a consequence, \eqref{infltarget} entails that $\abv{\pi}_{t}$
has a stochastic trend, which is inherited by both $\abv{\pi}_{t}$
and $i_{t}^{+}$ -- in the latter case, because the equilibrium rate
of interest implied by \eqref{taylor21b} is equal to $\abv{\pi}_{t}$.
Both $i_{t}^{+}$ and $\pi_{t}$ thus put the same loading on the
common trend, whence $(1,-1)^{\trans}$ is the cointegrating vector
when $i_{t}>0$. On the other hand, when the solution to the model
has $i_{t}<0$, i.e.\ when the ZLB binds, the lesser effectiveness
of unconventional monetary policy ($\mu<1$) entails that policy is
too contractionary, and so $\pi_{t}$ begins to drift below $\abv{\pi}_{t}$.
However, via \eqref{infltarget} this discrepancy \emph{raises} the
inflation target, and thereby raises the nominal rate of interest
required to achieve target inflation. This feedback actually renders
$\abv{\pi}_{t}$ as $I^{\ast}(0)$, and hence also $\pi_{t}$ and
$i_{t}^{-}$. In a relatively short time, the solution to the model
entails $i_{t}>0$ again, and thus the economy tends to spend relatively
little time in the vicinity of the ZLB.

It will be observed that the qualitative behaviour described above
is wholly contingent on $\mu<1$, i.e.\ on the ZLB as actually constraining
the conduct of monetary policy. This manifests itself, quantitatively,
as $\rank\Pi^{+}+1=\rank\Pi^{-}=2$ when $\mu<1$, as opposed to $\rank\Pi^{+}=\rank\Pi^{-}=2$
when $\mu=1$. Thus, in this setting, a test for $\Pi^{+}$ having
reduced rank would amount to a test of the null that unconventional
monetary policy is less effective than conventional policy, against
the alternative that it is equally effective.
\end{example}
\restoreexamplex{}

\subsubsection{Case~\caseclas{}: kinked cointegration}

\begin{figure}[t]
\begin{adjustwidth}{-2cm}{-2cm}
\begin{centering}
\begin{tabular}{cc}
\includegraphics[viewport=20bp 35bp 530bp 365bp,clip,scale=0.5]{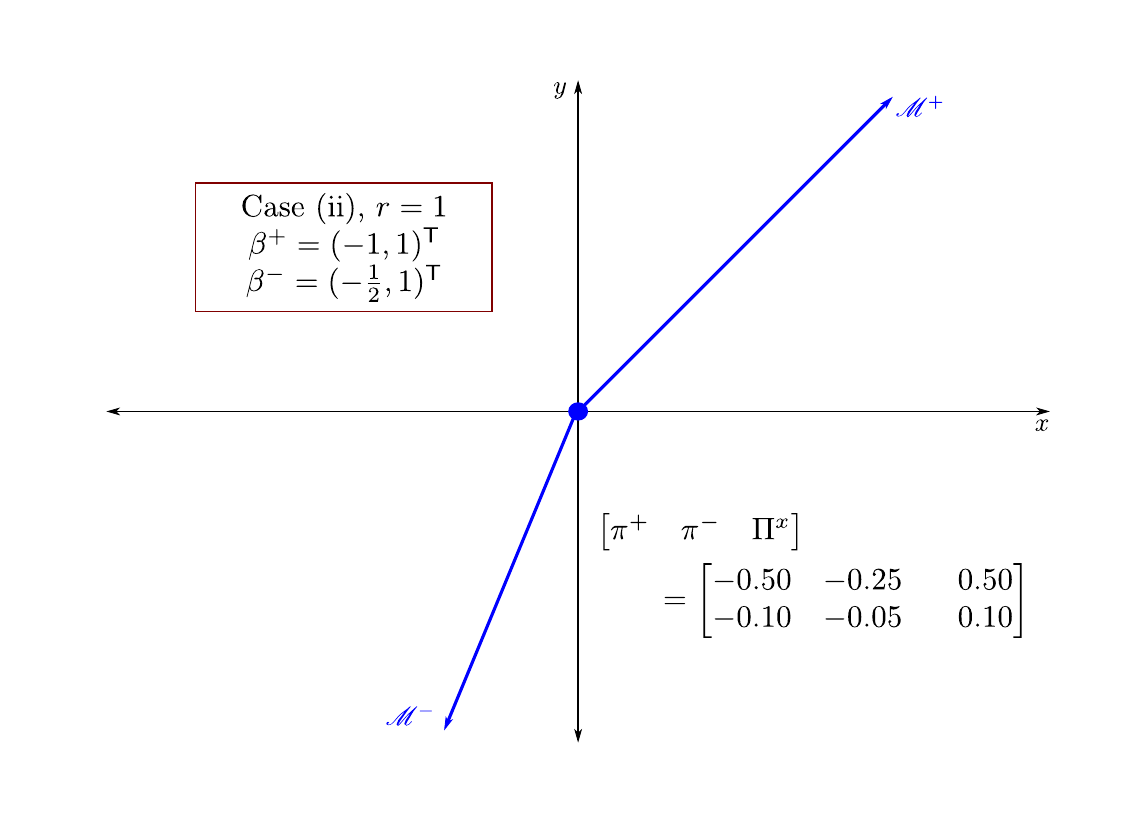} & \includegraphics[viewport=230bp 165bp 612bp 430bp,clip,scale=0.6]{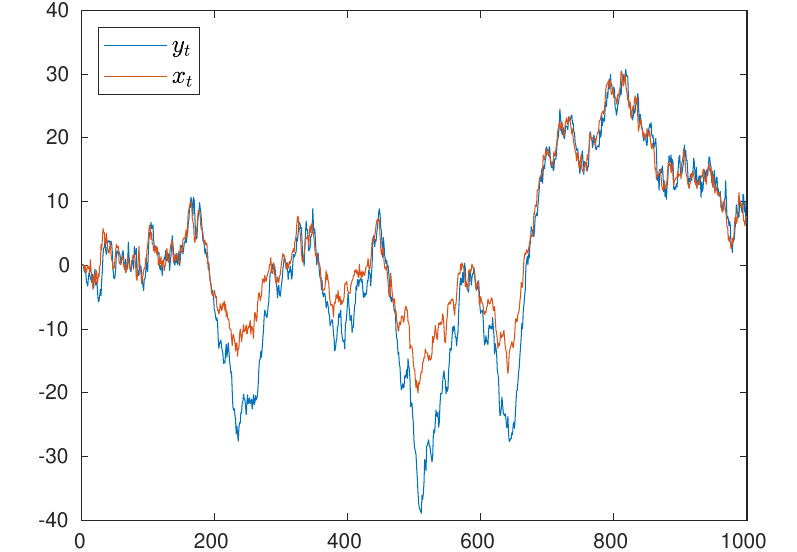}\tabularnewline
\includegraphics[viewport=20bp 35bp 530bp 365bp,clip,scale=0.5]{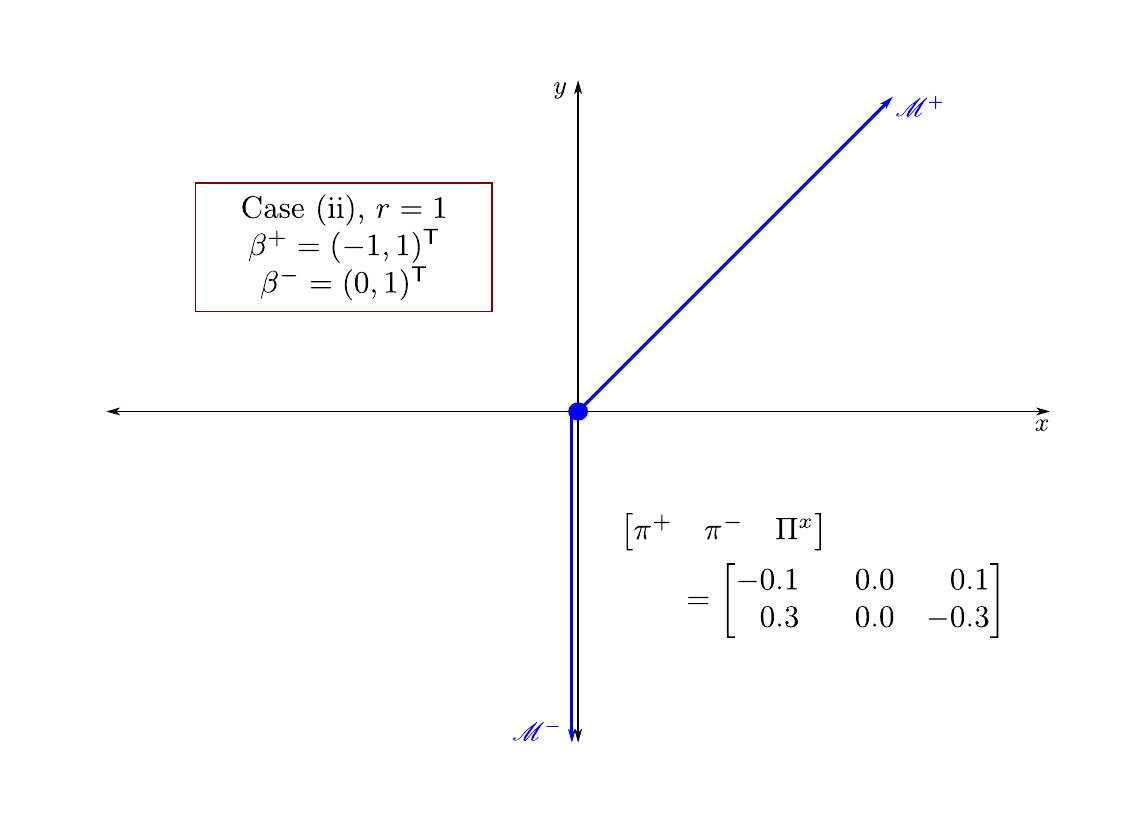} & \includegraphics[viewport=230bp 165bp 612bp 430bp,clip,scale=0.6]{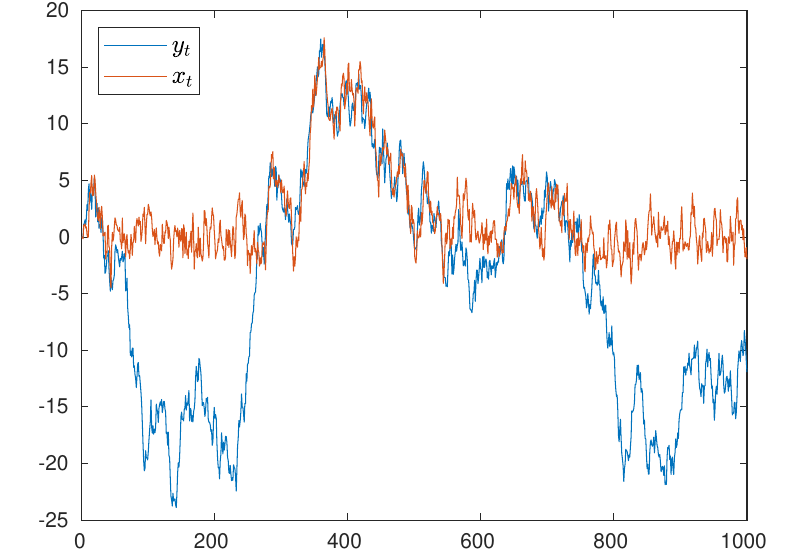}\tabularnewline
\end{tabular}
\par\end{centering}
\caption{Case~\caseclas{} configurations of $\protect\ctspc$ and trajectories
of $(y_{t},x_{t})$}
\label{fig:ctspc2}

\end{adjustwidth}
\end{figure}

This case is perhaps more reminiscent of linear cointegration, which
it accommodates as a special case. Here $\ctspc^{+}$ and $\ctspc^{-}$
trace out `half' subspaces of the same dimension $q$, but they
need not be parallel, giving rise to a kink in $\ctspc$ at the origin,
with $\ctspc$ itself being a linear cone. In general, $(Y,X)$ will
follow a kinked Brownian motion driven by $p$ linear BMs, whose loadings,
and the associated $r$ cointegrating relations, vary with the sign
of $Y$. In the top panel of \figref{ctspc2}, $y_{t}$ and $x_{t}$
are cointegrated, but with distinct cointegrating vectors $\beta^{+}=(1,-1)^{\trans}$
or $\beta^{-}=(\tfrac{1}{2},-1)^{\trans}$ applying on $\mathscr{Z}^{+}$
or $\mathscr{Z}^{-}$, i.e.\ when $y_{t}$ is positive or negative.
We have a kind of `threshold cointegration', with the movement of
$y_{t}$ across zero causing the model to switch between distinct
cointegrating spaces (cf.\ \citealp{SC04ET}). This switch is reflected
in the trajectories of $y_{t}$ and $x_{t}$: when $y_{t}\geq0$,
the two series move together approximately one-for-one; whereas if
$y_{t}\leq0$, $y_{t}$ changes by two units for every one-unit change
in $x_{t}$, in the long run. Relative to case~\casecens{}, the
trajectories of $\{y_{t}\}$ will now much more closely resemble those
of a linear unit root process, and $\{y_{t}\}$ will accordingly tend
to spend long stretches in both the positive and negative regions,
with no tendency to revert to one or the other.

The bottom panel of \figref{ctspc2} presents an important special
case where the cointegrating relationships are such that $\{x_{t}\}$
behaves like an $I^{\ast}(0)$ process when $y_{t}<0$, but cointegrates
with $y_{t}$ when $y_{t}>0$; the limiting process $X$ will thus
be a censored Brownian motion.

\saveexamplex{}

\exname{\ref*{exa:natrate}}
\begin{example}[trending natural rate; ctd]
\label{exa:monetary-coint} This model (with $\abv{\pi}_{t}=\abv{\pi}=0$)
may be rendered as
\begin{align}
\pi_{t} & =\theta(i_{t}^{+}+\mu i_{t}^{-}-r_{t}^{\ast})+\varepsilon_{t}\tag{\ref{eq:is-pc}a}\label{eq:is-pc21a}\\
i_{t} & =r_{t}^{\ast}+\gamma\pi_{t}\tag{\ref{eq:taylor}a}\label{eq:taylor21a}\\
r_{t}^{\ast} & =r_{t-1}^{\ast}+\eta_{t}.\tag{\ref{eq:natural-rate}}\label{eq:natrate21a}
\end{align}
Then similarly to \eqref{PIpmexample}, we have from the corresponding
CKSVAR for $(i_{t},\pi_{t})$, as given in \eqref{cksvar-natrate}
above, that
\begin{align*}
\Pi^{+} & =\begin{bmatrix}0\\
-(1-\theta\gamma)
\end{bmatrix}\begin{bmatrix}0 & 1\end{bmatrix} & \Pi^{-} & =\begin{bmatrix}0\\
-(1-\theta\gamma)
\end{bmatrix}\begin{bmatrix}\gamma^{-1}\tau_{\mu} & 1\end{bmatrix},
\end{align*}
where $\gamma^{-1}\tau_{\mu}=\theta(1-\mu)(1-\theta\gamma)^{-1}\leq0$,
with strict inequality unless $\mu=1$. Thus $\rank\Pi^{+}=\rank\Pi^{-}=1$,
and (by \thmref{coclas} below), there is a common stochastic trend
present in both regimes -- but when $i_{t}>0$ this loads only on
$i_{t}$, and the cointegrating vector is $\beta^{+}=(0,1)^{\trans}$.
If unconventional policy is as effective as conventional policy ($\mu=1$),
this holds also when $i_{t}<0$; otherwise the trend is shared by
both $i_{t}$ and $\pi_{t}$, and their cointegrating vector in the
negative regime is $\beta^{-}=(\gamma^{-1}\tau_{\mu},1)^{\trans}$.
Qualitatively, the behaviour of the series is as plotted for $(y_{t},x_{t})$
in the second row of \figref{ctspc2} (with $y_{t}=-i_{t}$ and $x_{t}=-\pi_{t}$,
i.e.\ with the signs reversed); in the limit, $n^{-1/2}\pi_{\smlfloor{n\lambda}}$
will be a censored BM (from above).

To account for this in economic terms, recall that \eqref{is-pc21a}
and \eqref{taylor21a} imply that when $i_{t}>0$, the central bank
is able to fully stabilise inflation, in the sense that $\pi_{t}=\abv{\pi}_{t}+(1-\gamma\theta)^{-1}\varepsilon_{t}=(1-\gamma\theta)^{-1}\varepsilon_{t}$,
since $\abv{\pi}_{t}$ is constant and normalised to zero. Thus $\pi_{t}\indic\{i_{t}>0\}\sim I^{\ast}(0)$;
whereas, $i_{t}^{+}\sim I^{\ast}(1)$, since it inherits the stochastic
trend in the natural rate of interest. However, if the model solution
requires $i_{t}<0$, then the ZLB constraint inhibits the operation
of monetary policy (if $\mu<1$), and inflation begins to drift away
from its target. That drift corresponds to the stochastic trend in
$r_{t}^{\ast}$, which is thus present in both $i_{t}^{-}$ and $\pi_{t}$,
and hence these series cointegrate.

By contrast, if monetary policy is not effectively constrained by
the ZLB, then $\pi_{t}$ would remain $I^{\ast}(0)$, irrespective
of the sign of $i_{t}$. Thus the long run behaviour of $\pi_{t}$
-- whether it is $I^{\ast}(0)$, or whether it follows a stochastic
trend (and so is $I^{\ast}(1)$) when interest rates are at the ZLB
-- here provides identifying information on the relative effectiveness
of unconventional monetary policy, i.e.\ on whether the ZLB is ever
a truly binding constraint on the central bank, just as it did in
\exaref{infldrift}.
\end{example}
\restoreexamplex{}

\saveexamplex{}

\exname{\ref*{exa:infldrift}}
\begin{example}[trending inflation target; ctd]
 Suppose that $\delta=0$ in \eqref{infltarget}, so the model is
now described by \eqref{is-pc21b}, \eqref{taylor21b} and
\begin{align*}
\abv{\pi}_{t} & =\abv{\pi}_{t-1}+\eta_{t}.
\end{align*}
The inflation target thus remains time-varying, but there is no longer
any feedback from past failures to hit that target. Then \eqref{PIpmexample}
simplifies to
\begin{align*}
\Pi^{+} & =\begin{bmatrix}0\\
-\varphi_{1}
\end{bmatrix}\begin{bmatrix}1 & -1\end{bmatrix} & \Pi^{-} & =\begin{bmatrix}0\\
-\varphi_{1}
\end{bmatrix}\begin{bmatrix}\varphi_{1}^{-1}\varphi_{\mu} & -1\end{bmatrix},
\end{align*}
where $\varphi_{1}^{-1}\varphi_{\mu}=1+\frac{\theta(\gamma-1)(1-\mu)}{1-\theta\gamma}\in(0,1]$.
Thus $i_{t}$ and $\pi_{t}$ are $I^{\ast}(1)$ and cointegrated,
but with cointegrating relations $\beta^{+}=(1,-1)^{\trans}$ and
$\beta^{-}=(\varphi_{1}^{-1}\varphi_{\mu},-1)^{\trans}$ that depend
on the sign of $i_{t}$, unless $\mu=1$. Even though $i_{t}^{-}$
is unobserved, we can still distinguish between the cases $\mu=1$
and $\mu<1$ on the basis that, when $\mu<1$, the long-run variance
of $\pi_{t}$ will differ depending on whether $i_{t}>0$ or $i_{t}=0$,
as is evident in the behaviour of $x_{t}$ in the top panel of \figref{ctspc2}.\footnote{\thmref{coclas} below implies that $n^{-1/2}\pi_{\smlfloor{n\lambda}}$
converges weakly to $\sigma_{\varepsilon}(\gamma-1)[1+(\varphi_{1}^{-1}\varphi_{\mu}-1)\indic\{W(\lambda)<0\}]W(\lambda)$,
for $\sigma_{\varepsilon}^{2}\defeq\expect\varepsilon_{t}^{2}$ and
$W$ a standard BM.}
\end{example}

\restoreexamplex{}

\subsubsection{Case~\casestat{}: linear cointegration in a nonlinear VECM}

\begin{figure}
\begin{adjustwidth}{-2cm}{-2cm}
\begin{centering}
\begin{tabular}{cc}
\includegraphics[viewport=20bp 35bp 530bp 365bp,clip,scale=0.5]{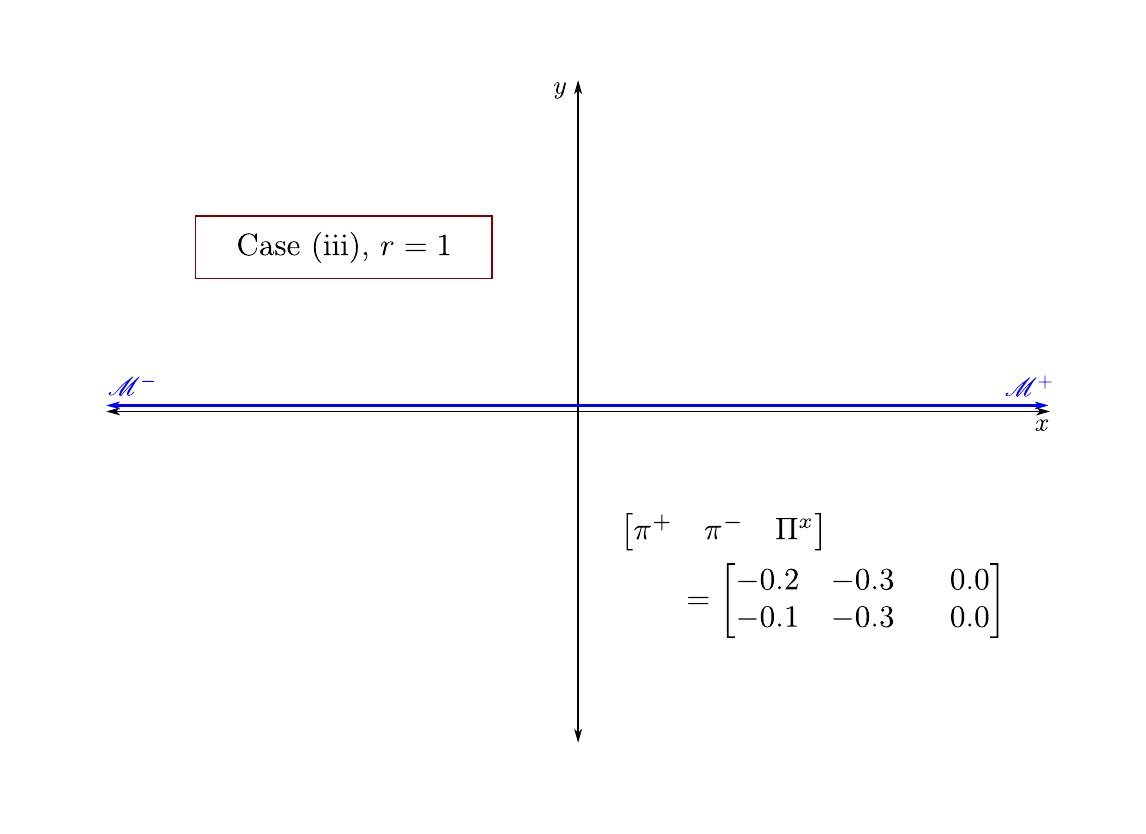} & \includegraphics[viewport=230bp 165bp 612bp 430bp,clip,scale=0.6]{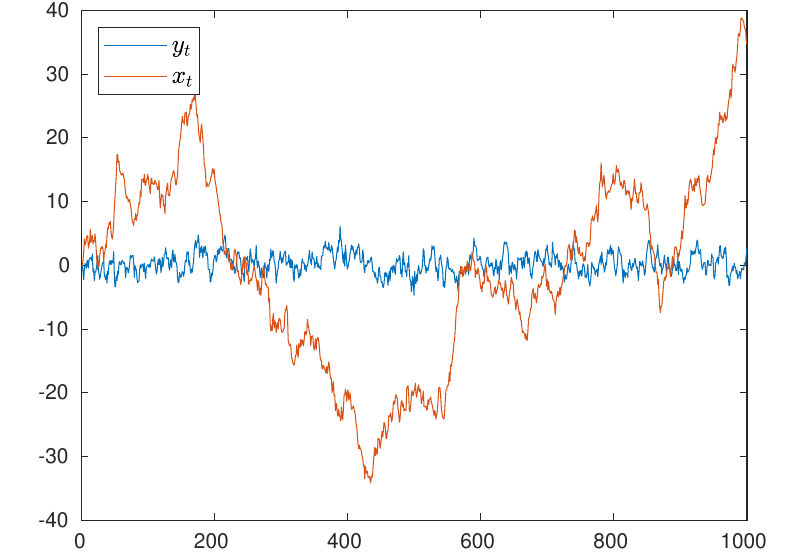}\tabularnewline
\end{tabular}
\par\end{centering}
\caption{Case~\casestat{} configuration of $\protect\ctspc$ and trajectories
of $(y_{t},x_{t})$}
\label{fig:ctspc3}

\end{adjustwidth}
\end{figure}

This case, which is depicted in the \figref{ctspc3}, entails that
no trends are present in $\{y_{t}\}$, which is in fact a stationary
process. The common trends are loaded entirely on $\{x_{t}\}$, and
the cointegrating relationships between the elements of $x_{t}$ are
unaffected by the sign of $y_{t}$, exactly as in literature on regime-switching
nonlinear VECM models (see the text following \eqref{nlvecm} above).
For this reason, a model configured as in case~\casestat{} is appropriate
only when the univariate behaviour of $\{y_{t}\}$ appears stationary,
in which case there will be frequent switches of `regime', particularly
if the mean of $y_{t}$ is near zero.

\section{Representation theory and asymptotics}

\label{sec:representation}

We now proceed to develop analogues of the Granger--Johansen representation
theorem for cases~\casecens{}--\casestat{}. These results are
of interest in their own right, because by characterising the processes
generated by the model, under alternative configurations of $\Pi^{\pm}$,
they delimit the classes of observed time series that the model might
be fruitfully applied to. Indeed, they indicate precisely the restrictions
on $\Pi^{\pm}$ that might be appropriate for specific applications,
according to whether $\{y_{t}\}$ is observed to either: wander randomly
above a threshold, but spend only brief periods below it (case~\casecens{});
wander randomly on both sides of a threshold (case~\caseclas{});
or behave like a stationary process (case~\casestat{}). Beyond such
guidance, our results also lay the groundwork for the development
of the asymptotics of likelihood-based estimators of the CKSVAR model
in the presence of unit roots, by establishing the asymptotic behaviour
of the (standardised) processes generated by the model.

To facilitate the exposition, we shall initially suppose the data
to be generated by a canonical CKSVAR, i.e.\ \assref{dgp-canon}
holds, and that there are no deterministic trends in cases~\casecens{}
and \caseclas{}, i.e.\ \assref{det} holds. Theorems~\ref{thm:cocens}--\ref{thm:costat}
below are stated under these assumptions: the minor modifications
required when \assref{dgp-canon} is replaced by \assref{dgp} are
given in the subsequent remarks, while a general treatment of deterministic
trends follows in \subsecref{deterministic}. To avoid the need to
specify how some of the quantities below should be defined when $k=1$,
we shall treat this as a special case of the model with $k=2$, in
which $\phi_{2}^{+}=\phi_{2}^{-}=0$ and $\Phi_{2}^{x}=0$; and thus
henceforth $k\geq2$, unless otherwise stated. Because of the overlap
between the arguments used to analyse the CKSVAR in each case, it
will be occasionally necessary to redefine objects that have already
appeared in the discussion of another case. While we have endeavoured
to keep such notational conflicts to a minimum, and explicitly indicated
wherever these arise, the reader is advised to treat each of the following
three subsections, and the accompanying subsections of \appref{cointegration}
where the proofs of the theorems appear, as independent of each other.

Our proofs make repeated use of companion form representations of
the VECM \eqref{vecm}, formulated slightly differently for each of
cases~\casecens{}--\casestat{}. In this respect, the nearest analogue
to our arguments, in the setting of a \emph{linear} VAR, is provided
by \citet{Han05EctJ}.
\begin{notation*}
For $A\in\reals^{m\times n}$ having full column rank, $A_{\perp}\in\reals^{m\times(m-n)}$
denotes a full column rank matrix such that $A_{\perp}^{\trans}A=0$;
we refer to $A_{\perp}$ (which is unique only up to its column span)
as `the orthocomplement of $A$'. (Note that it is \emph{not} implied
that the columns of $A_{\perp}$ should be orthogonal vectors; any
further normalisation of $A_{\perp}$ will be noted in the text if
required.) $A_{i:j}$ denotes the submatrix formed from rows $\{i,i+1,\ldots,j\}$
of $A$.
\end{notation*}

\subsection{Case~\casecens{}: regulated cointegration}

\label{subsec:censored}

Recalling \tblref{PI}, we have the familiar factorisation
\[
\Pi^{+}=\alpha^{+}\beta^{+\trans},
\]
where $\alpha^{+},\beta^{+}\in\reals^{p\times r}$ have rank $r$.
As we show below, $\beta^{+}$ spans the cointegrating space on $\mathscr{Z}^{+}=\reals_{+}\times\reals^{p-1}$
(recall \defref{coint}), so that $(y_{t}^{+},x_{t})\sim I^{\ast}(1)$
but $\beta^{+\trans}(y_{t}^{+},x_{t})\sim I^{\ast}(0)$. Moreover,
since $y_{t}^{-}\sim I^{\ast}(0)$, it follows that $\beta^{+\trans}(y_{t},x_{t})\sim I^{\ast}(0)$,
so that the columns of $\beta^{+}$ are globally cointegrating vectors.

In a linear cointegrated VAR, no assumptions additional to \assref{co}
and \assref{coerr} are needed, but the nonlinearity of the CKSVAR
prevents our assumptions on the roots of $\Phi^{\pm}(\lambda)$ from
being sufficient to ensure that the short-memory components (i.e.\ the
equilibrium errors and the first differences) are indeed $I^{\ast}(0)$.
For this reason, two further regularity conditions are required. To
state these, let
\begin{equation}
\b F_{\delta}\defeq\begin{bmatrix}I_{p(k-1)+r}+\b{\beta}^{+\trans}\b{\alpha}^{+} & \b{\beta}_{1:p}^{+\trans}(\phi_{1}^{-}-\phi_{1}^{+})\delta & \b{\beta}_{1:p}^{+\trans}(\varphi^{-}-\varphi^{+})\\
e_{1}^{\trans}\b{\alpha}^{+} & [1+e_{1}^{\trans}(\phi_{1}^{-}-\phi_{1}^{+})]\delta & e_{1}^{\trans}(\varphi^{-}-\varphi^{+})\\
0 & \delta & 0\\
0 & 0 & D
\end{bmatrix}\label{eq:Fkgeneral}
\end{equation}
where $\varphi^{\pm}\defeq[\phi_{2}^{\pm},\ldots,\phi_{k}^{\pm}]$,
$D\defeq[I_{k-2},0_{(k-2)\times1}]$; $\b{\alpha}^{+},\b{\beta}^{+}\in\reals^{kp\times[p(k-1)+r]}$
with
\begin{align}
\b{\alpha}^{+} & \defeq\begin{bmatrix}\alpha^{+} & \Gamma_{1}^{+} & \cdots & \Gamma_{k-1}^{+}\\
 & I_{p}\\
 &  & \ddots\\
 &  &  & I_{p}
\end{bmatrix}, & \b{\beta}^{+\trans} & \defeq\begin{bmatrix}\beta^{+\trans}\\
I_{p} & -I_{p}\\
 & \ddots & \ddots\\
 &  & I_{p} & -I_{p}
\end{bmatrix};\label{eq:betaplus}
\end{align}
and $\b{\beta}_{1:p}^{+}$ denotes the first $p$ rows of $\b{\beta}^{+}$.\footnote{To avoid any ambiguity, the definition of $\b{\beta}^{+\trans}$ in
\eqref{betaplus} should be read `row-wise', with the `\raisebox{-2pt}{$\smash{\ddots}$}'
signifying that successive rows of the matrix are formed by replicating
the relevant block (i.e.\ that to the upper left of the `\raisebox{-2pt}{$\smash{\ddots}$}'),
shifting it to the right by the width of the block; all other entries
are zeros. Thus for $m\in\{1,\ldots,k-1\}$, rows $r+mp+1$ to $r+mp$
of $\b{\beta}^{+\trans}$ are given by $[0_{p\times(m-1)p},I_{p},-I_{p},0_{p\times(k-m-1)}]$.
The definitions given in e.g.\ \eqref{alpabeta} below should be
interpreted similarly.} We also define
\begin{equation}
P_{\beta_{\perp}^{+}}\defeq\beta_{\perp}^{+}[\alpha_{\perp}^{+\trans}\Gamma^{+}(1)\beta_{\perp}^{+}]^{-1}\alpha_{\perp}^{+\trans}\label{eq:Pbetaplus}
\end{equation}
for $\alpha_{\perp}^{+}$ and $\beta_{\perp}^{+}$ the orthocomplements
of $\alpha^{+}$ and $\beta^{+}$. Let $\rho(M)$ denote the spectral
radius of $M\in\reals^{m\times m}$, and for $\mathcal{A}\subset\reals^{m\times m}$
a bounded collection of matrices, let
\[
\rho_{\jsr}(\mathcal{A})\defeq\limsup_{t\goesto\infty}\sup_{B\in\mathcal{A}^{t}}\rho(B)^{1/t}
\]
denote its joint spectral radius (JSR; e.g.\ \citealp{Jungers09},
Defn.\ 1.1), for $\mathcal{A}^{t}\defeq\{\prod_{s=1}^{t}M_{s}\mid M_{s}\in\mathcal{A}\}$
the set of $t$-fold products of matrices in ${\cal A}$.\footnote{For a further discussion of the JSR, and references to the literature
on methods for numerically approximating it, see \citet[Sec.~\secstabilityabstract{}]{DMW23stat}.} Let $z_{t}\defeq(y_{t},x_{t}^{\trans})^{\trans}$.

\assumpname{CO\casecens{}}
\begin{assumption}
\label{ass:cocens}~
\begin{enumerate}[label=\ass{\arabic*.}, ref=\ass{.\arabic*},itemsep=1pt,topsep=2pt]
\item \label{enu:cocens:rank}$r^{+}=\rank\Pi^{x}=r$ and $r^{-}=r+1$,
for some $r\in\{0,1,\ldots,p-1\}$.
\item \label{enu:cocens:jsr}$\rho_{\jsr}(\{\b F_{0},\b F_{1}\})<1$.
\item \label{enu:cocens:regcoef} $\regcoef_{1}<0$, where $\regcoef_{1}$
denotes the first element of $\regcoef\defeq P_{\beta_{\perp}^{+}}\pi^{-}$.
\item \label{enu:cocens:init}%
\begin{enumerate}[label=\ass{\alph*.}, ref=\ass{.\alph*}, leftmargin=0.40cm]
\item $\beta^{+\trans}z_{t}$, $y_{t}^{-}$, and $\Delta z_{t}$ have uniformly
bounded $2+\delta_{u}$ moments, for $t\in\{-k+1,\ldots,0\}$.
\item $n^{-1/2}z_{0}\inprob{\cal Z}_{0}=[\begin{smallmatrix}{\cal Y}_{0}\\
{\cal X}_{0}
\end{smallmatrix}]\in\ctspc$, where ${\cal Z}_{0}$ is non-random.
\end{enumerate}
\end{enumerate}
\end{assumption}
\needspace{3\baselineskip}
\begin{thm}
\label{thm:cocens}Suppose \assref{dgp-canon}, \assref{coerr}, \assref{co}
and \assref{cocens} hold. Then:
\begin{enumerate}[itemsep=2pt,topsep=3pt]
\item \label{enu:thm:cocens:i0}$\xi_{t}^{+}\defeq\beta^{+\trans}z_{t}\sim I^{\ast}(0)$,
$\Delta z_{t}\sim I^{\ast}(0)$, and $y_{t}^{-}\sim I^{\ast}(0)$;
\end{enumerate}
and if additionally \assref{det} holds:
\begin{enumerate}[resume]
\item \label{enu:thm:cocens:wklim}{for $U_{0}(\lambda)\defeq\Gamma^{+}(1){\cal Z}_{0}+U(\lambda)$,
jointly with \eqref{Unwkc},
\[
\begin{bmatrix}Y_{n}(\lambda)\\
X_{n}(\lambda)
\end{bmatrix}\wkc\begin{bmatrix}Y(\lambda)\\
X(\lambda)
\end{bmatrix}=P_{\beta_{\perp}^{+}}U_{0}(\lambda)+\regcoef_{1}^{-1}\regcoef\sup_{\lambda^{\prime}\leq\lambda}[-e_{1}^{\trans}P_{\beta_{\perp}^{+}}U_{0}(\lambda^{\prime})]_{+}
\]
on $D[0,1]$, where in particular $Y(\lambda)=e_{1}^{\trans}P_{\beta_{\perp}^{+}}U_{0}(\lambda)+\sup_{\lambda^{\prime}\leq\lambda}[-e_{1}^{\trans}P_{\beta_{\perp}^{+}}U_{0}(\lambda^{\prime})]_{+}$.}
\end{enumerate}
\end{thm}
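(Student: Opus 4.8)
The plan is to run a nonlinear analogue of the Granger--Johansen construction, in which the usual `common trends $+$ stationary errors' decomposition picks up an extra Skorokhod-type reflection term at the boundary $\{y=0\}$, and in which the stability of the short-memory block is supplied by the joint spectral radius hypothesis \assref{cocens}\ref{enu:cocens:jsr}. The first move is purely algebraic: writing $z_{t}\defeq(y_{t},x_{t}^{\trans})^{\trans}$ and using $y_{t}=y_{t}^{+}+y_{t}^{-}$ (and $\Delta y_{t}=\Delta y_{t}^{+}+\Delta y_{t}^{-}$) to re-express the level and lagged-difference terms in \eqref{vecm}, the error-correction block becomes $\Pi^{+}z_{t-1}+(\pi^{-}-\pi^{+})y_{t-1}^{-}=\alpha^{+}\xi_{t-1}^{+}+(\pi^{-}-\pi^{+})y_{t-1}^{-}$, with $\Pi^{+}=\alpha^{+}\beta^{+\trans}$ (the factorisation from \assref{cocens}\ref{enu:cocens:rank}) and $\xi_{t}^{+}\defeq\beta^{+\trans}z_{t}$; after Abel summation collapses the lagged $y^{-}$ contributions and \assref{co}\ref{enu:co:deter} lets us recentre the constant away, one obtains $\Delta z_{t}=\alpha^{+}\xi_{t-1}^{+}+\sum_{j=1}^{k}(\phi_{j}^{-}-\phi_{j}^{+})y_{t-j}^{-}+\sum_{i=1}^{k-1}\Gamma_{i}^{+}\Delta z_{t-i}+u_{t}$. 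The point of this rewriting is that the sole nonlinearity now sits in the forcing term $\sum_{j}(\phi_{j}^{-}-\phi_{j}^{+})y_{t-j}^{-}$, which is \emph{linear} in the lagged negative parts $y_{t-1}^{-},\ldots,y_{t-k}^{-}$ that the theorem claims to be $I^{\ast}(0)$.

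For part~\ref{enu:thm:cocens:i0} I would cast this as a switched linear system. Stacking the short-memory state $\tilde S_{t}\defeq(\b{\beta}^{+\trans}\b z_{t},y_{t-1}^{-},\Delta y_{t-1}^{-},\ldots,\Delta y_{t-k+1}^{-})^{\trans}$ — with $\b z_{t}\defeq(z_{t}^{\trans},\ldots,z_{t-k+1}^{\trans})^{\trans}$ and $\b{\alpha}^{+},\b{\beta}^{+}$ as in \eqref{betaplus}, so that $\b{\beta}^{+\trans}\b z_{t}=(\xi_{t}^{+\trans},\Delta z_{t}^{\trans},\ldots,\Delta z_{t-k+2}^{\trans})^{\trans}$ — the preceding display yields a recursion $\tilde S_{t}=\b F_{\iota_{t}}\tilde S_{t-1}+Gu_{t}+(\text{const})+E_{t}$, where $\iota_{t}\in\{0,1\}$ equals $0$ or $1$ according as $y_{t}\geq0$ or $y_{t}<0$, making $\b F_{0},\b F_{1}$ exactly the matrices of \eqref{Fkgeneral}, and $E_{t}$ is a correction supported only on the isolated times at which $\{y_{t}\}$ drops from $\{y\geq0\}$ into $\{y<0\}$ in a single step, with $\smlnorm{E_{t}}\leq C\smlnorm{\Delta z_{t}}\leq C'(\smlnorm{\tilde S_{t-1}}+\smlnorm{u_{t}})$ (such a drop forces $0\leq y_{t-1}<\smlabs{\Delta y_{t}}$, so the correction cannot be large). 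Since $\rho_{\jsr}(\{\b F_{0},\b F_{1}\})<1$ there is a norm making $\b F_{0},\b F_{1}$ strict contractions; using that down-crossings are separated by at least one $\b F_{1}$-step on which $E_{t}$ vanishes, and that $\regcoef_{1}<0$ (\assref{cocens}\ref{enu:cocens:regcoef}) gives the reflection the sign needed to keep $\{y_{t}\}$ asymptotically non-negative, a Gronwall/geometric-series estimate together with \assref{coerr} gives $\sup_{t\in\naturals}\smlnorm{\tilde S_{t}}_{2+\delta_{u}}<\infty$, hence $\sup_{t\leq n}\smlnorm{\tilde S_{t}}=o_{p}(n^{1/2})$ by a union bound and Markov's inequality; in particular $\xi_{t}^{+},\Delta z_{t},y_{t}^{-}\sim I^{\ast}(0)$. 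I expect this to be the main obstacle: $y_{t}^{-}=\min\{y_{t},0\}$ is \emph{not} a function of the bounded state $\tilde S_{t-1}$ alone, so one must show the induced corrections $E_{t}$ do not accumulate — and this is precisely what the JSR hypothesis buys.

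For part~\ref{enu:thm:cocens:wklim}, premultiplying the representation of Step~1 by the Granger projector and cumulating gives
\[
z_{t}=P_{\beta_{\perp}^{+}}\Bigl[\Gamma^{+}(1)z_{0}+\sum_{s=1}^{t}u_{s}\Bigr]+\regcoef\sum_{s=1}^{t}y_{s-1}^{-}+\tilde R_{t},
\]
with $\regcoef=P_{\beta_{\perp}^{+}}\pi^{-}=P_{\beta_{\perp}^{+}}(\pi^{-}-\pi^{+})$ (using $P_{\beta_{\perp}^{+}}\pi^{+}=0$ and $P_{\beta_{\perp}^{+}}\Gamma^{+}(1)\beta_{\perp}^{+}=\beta_{\perp}^{+}$), and $\tilde R_{t}$ gathering $\xi_{t}^{+}$, lagged first differences, telescoped $y^{-}$-terms and lower-order initial-value remainders, so $\sup_{t\leq n}\smlnorm{\tilde R_{t}}=o_{p}(n^{1/2})$ by Step~2 and \assref{cocens}\ref{enu:cocens:init}. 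Write $N_{t}\defeq\sum_{s=1}^{t}y_{s-1}^{-}$ and $L_{n}(\lambda)\defeq\regcoef_{1}n^{-1/2}N_{\smlfloor{n\lambda}}$, which is non-negative and non-decreasing because $\regcoef_{1}<0$ and $y_{s}^{-}\leq0$. Scaling by $n^{-1/2}$ and reading off the $e_{1}$-coordinate, $Y_{n}(\lambda)=\mathcal M_{n}(\lambda)+L_{n}(\lambda)+o_{p}(1)$ uniformly on $[0,1]$, where $\mathcal M_{n}(\lambda)\defeq n^{-1/2}e_{1}^{\trans}P_{\beta_{\perp}^{+}}[\Gamma^{+}(1)z_{0}+\sum_{s\leq\smlfloor{n\lambda}}u_{s}]\wkc M(\lambda)\defeq e_{1}^{\trans}P_{\beta_{\perp}^{+}}U_{0}(\lambda)$ by \assref{coerr} and $n^{-1/2}z_{0}\inprob\mathcal Z_{0}$ (\assref{cocens}\ref{enu:cocens:init}). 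Since also $Y_{n}(\lambda)\geq n^{-1/2}y_{\smlfloor{n\lambda}}^{-}=o_{p}(1)$ uniformly and $L_{n}$ increases only at indices $s$ with $Y_{n}((s-1)/n)<0$, the triple $(\mathcal M_{n},L_{n},Y_{n})$ solves the one-sided Skorokhod problem at $0$ up to $o_{p}(1)$ slack; the Lipschitz continuity of the Skorokhod map in the uniform topology then gives $(P_{\beta_{\perp}^{+}}U_{n},L_{n},Y_{n})\wkc(P_{\beta_{\perp}^{+}}U,\Psi(M),M+\Psi(M))$, where $\Psi(M)(\lambda)=\sup_{\lambda^{\prime}\leq\lambda}[-M(\lambda^{\prime})]_{+}$. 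Finally $\regcoef\sum_{s=1}^{t}y_{s-1}^{-}=\regcoef_{1}^{-1}\regcoef\cdot\regcoef_{1}N_{t}$, so substituting into the display delivers $Z_{n}\wkc P_{\beta_{\perp}^{+}}U_{0}+\regcoef_{1}^{-1}\regcoef\sup_{\lambda^{\prime}\leq\lambda}[-e_{1}^{\trans}P_{\beta_{\perp}^{+}}U_{0}(\lambda^{\prime})]_{+}$, which is the assertion, the displayed formula for $Y$ being its first coordinate.

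When only \assref{dgp} holds, \propref{canonical} reduces matters to the canonical case, and mapping back through $P$ re-expresses $\beta^{+}$, $\regcoef$, $P_{\beta_{\perp}^{+}}$ and the rest via that invertible transformation, exactly as recorded in the remarks following the theorem; nothing in the argument above changes.
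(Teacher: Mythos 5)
Your overall strategy coincides with the paper's: isolate the nonlinearity in the forcing terms $\sum_{j}(\phi_{j}^{-}-\phi_{j}^{+})y_{t-j}^{-}$, establish $I^{\ast}(0)$-ness of the short-memory block via the joint spectral radius, then premultiply by the Granger projector, cumulate, and identify the limit of $\{y_{t}^{+}\}$ as a reflected (Skorokhod) process. Part~(ii) of your argument is essentially sound modulo part~(i): the paper reaches the same place by deriving the exact recursion $y_{t}^{+}=[y_{t-1}^{+}+e_{1}^{\trans}P_{\beta_{\perp}^{+}}u_{t}+\Delta\eta_{1,t}]_{+}$ (this is where $\regcoef_{1}<0$ is actually used, to solve $y_{t}^{+}-\regcoef_{1}y_{t}^{-}=w_{t}$ uniquely as $y_{t}^{+}=[w_{t}]_{+}$) and then invoking a discrete reflection-map convergence lemma, rather than your ``approximate Skorokhod problem with $o_{p}(1)$ slack''; both are workable.

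The genuine gap is in part~(i), at exactly the point you flag as ``the main obstacle''. You model the short-memory dynamics as a \emph{binary} switched system with matrices $\b F_{0},\b F_{1}$ plus an additive correction $E_{t}$ at down-crossing times, and you bound $\smlnorm{E_{t}}\leq C^{\prime}(\smlnorm{\tilde S_{t-1}}+\smlnorm{u_{t}})$. But a correction proportional to the state is equivalent to perturbing the autoregressive matrix at crossing times by a matrix of norm up to $C^{\prime}$, where $C^{\prime}$ is governed by $\smlnorm{\phi_{j}^{-}-\phi_{j}^{+}}$ and is in no sense small. The hypothesis $\rho_{\jsr}(\{\b F_{0},\b F_{1}\})<1$ says nothing about products containing such perturbed factors, and your ``separated by at least one clean contraction step'' observation does not rescue this: one contraction by $\gamma<1$ need not offset an expansion by $\gamma+C^{\prime}$. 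The paper's resolution (\lemref{statcomp}) is to enlarge the state with the auxiliary variable $\pseudy_{t}$ satisfying $y_{t}=y_{t-1}^{+}+\pseudy_{t}$, so that $y_{t}^{-}=[y_{t-1}^{+}+\pseudy_{t}]_{-}=\delta_{t}\pseudy_{t}$ with $\delta_{t}\in[0,1]$ given by \eqref{delcens}; the dynamics are then \emph{exactly} $\b{\zeta}_{t}=\b{\mu}+\b F_{\delta_{t-1}}\b{\zeta}_{t-1}+\b{\varepsilon}_{t}$ with no additive remainder, the family $\{\b F_{\delta}\mid\delta\in[0,1]\}$ is the convex hull of $\{\b F_{0},\b F_{1}\}$, and Jungers' Proposition~1.8 gives $\rho_{\jsr}(\{\b F_{\delta}\mid\delta\in[0,1]\})=\rho_{\jsr}(\{\b F_{0},\b F_{1}\})<1$, after which \lemref{stochbound} delivers the uniform moment bound. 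Without tracking $\pseudy_{t}$ as a state coordinate you cannot write the crossing-time transition as a convex combination of $\b F_{0}$ and $\b F_{1}$, and the JSR hypothesis cannot be brought to bear; this is the missing idea, and since part~(ii) consumes the conclusion $y_{t}^{-},\xi_{t}^{+},\Delta z_{t}\sim I^{\ast}(0)$, the gap propagates to the whole proof.
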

\begin{rem}
\label{rem:cocens}\subremark{}\label{enu:rem:cocens:canon} If \assref{dgp-canon}
is replaced by \assref{dgp}, then the theorem continues to hold as
stated, except that \assref{cocens}\ref{enu:cocens:jsr} should be
replaced by
\begin{enumerate}[leftmargin=2cm, label=\ass{CO(i).\arabic*$^\prime$}, start=2]
\item \label{enu:cocens:jsr-struct}$\rho_{\jsr}(\{\tilde{\b F}_{0},\tilde{\b F}_{1}\})<1$;
\end{enumerate}
where the tildes refer to the parameters of the canonical CKSVAR derived
from the structural form via \propref{canonical}. (This obtains because
the derived canonical variables then themselves follow a canonical
CKSVAR that satisfies the conditions of the theorem; see \suppref{cointstruct}.)

\subremark{} The contrast with a linear cointegrated VAR is marked.
$Y$ is now a regulated Brownian motion, which also enters into other
components of $X$. Indeed, as noted in \subsecref{common-trends}
above, some components of $X$ may themselves be regulated BMs.

\subremark{}\label{enu:rem:cocens:jsr} Part~\ref{enu:thm:cocens:i0}
of the theorem is proved by obtaining a nonlinear VAR representation
for $\xi_{t}^{+}$, $\Delta z_{t}$ and $y_{t}^{-}$, whose companion
form can be expressed in terms of the matrices $\{\b F_{\delta}\mid\delta\in[0,1]\}$
(\lemref{statcomp}). Since the parameters of that VAR depend on $y_{t}^{+}$,
these processes cannot be stationary, but \assref{cocens}\enuref{cocens:jsr}
ensures that the system is sufficiently `constrained' that they
will be $I^{\ast}(0)$. A necessary but not sufficient condition for
\assref{cocens}\enuref{cocens:jsr} is that $\b F_{0}$ and $\b F_{1}$
have all their eigenvalues inside the unit circle. (This is implied
by \assref{co}: see \suppref{cocens:verification-of-remark}.)

\subremark{} It will be seen from the proof that part~\enuref{thm:cocens:wklim}
holds if \assref{cocens}\enuref{cocens:jsr} is replaced by \emph{any}
condition sufficient to ensure $(\xi_{t}^{+},\Delta z_{t},y_{t}^{-})\sim I^{\ast}(0)$.
There may thus be some scope for relaxing this assumption, which takes
essentially a worst-case approach to the behaviour of the nonlinear
VAR governing the evolution of these processes. However, even in the
more tractable univariate set-up of \citet{BD22}, it is far from
obvious what this condition (their Assumption~A4) might be replaced
by.

\subremark{}\label{enu:rem:kappa} In deriving the weak limit of
$n^{-1/2}y_{\smlfloor{n\lambda}}$, a key step is to obtain a univariate
representation for $y_{t}^{+}$ as a regulated process. The proof
of \thmref{cocens} shows that it is possible to write $y_{t}^{+}-\regcoef_{1}y_{t}^{-}=w_{t}$
for a certain series $\{w_{t}\}$: the role of \assref{cocens}\enuref{cocens:regcoef}
is to ensure that this equation is solved uniquely by taking $y_{t}^{+}=[w_{t}]_{+}$
and $y_{t}^{-}=\regcoef_{1}^{-1}[w_{t}]_{-}$. It is possible that
this condition may be redundant: as shown in \suppref{cocens:verification-of-remark},
it is implied by our other assumptions if either $k=1$ or $p=1$.
\end{rem}
\saveexamplex{}

\exname{\ref*{exa:dyntobit}}
\begin{example}[univariate; ctd]
 In the univariate ($p=1$) model \eqref{univariate-case}, case~\casecens{}
with $r=0$ corresponds to a model in which $c=0$, $\sum_{i=1}^{k}\phi_{i}^{+}=1$
and $\phi^{-}(\lambda)$ has all its roots outside the unit circle,
so that 
\begin{equation}
y_{t}=\sum_{i=1}^{k}(\phi_{i}^{+}y_{t-i}^{+}+\phi_{i}^{-}y_{t-i}^{-})+u_{t}=(1+\pi^{+})y_{t-1}^{+}+\sum_{i=1}^{k-1}\gamma_{i}^{+}\Delta y_{t-i}^{+}+\sum_{i=1}^{k}\phi_{i}^{-}y_{t-i}^{-}+u_{t}\label{eq:ur-and-stat}
\end{equation}
may be loosely regarded as an autoregressive model with a unit root
regime (since $\pi^{+}=0$) and a stationary regime (though the model
technically has $2^{k}$ distinct autoregressive regimes). \thmref{cocens}
implies that $y_{t}^{-}\sim I^{\ast}(0)$, and
\begin{equation}
Y_{n}(\lambda)\wkc Y(\lambda)=K(\lambda)-\sup_{\lambda^{\prime}\leq\lambda}[-K(\lambda^{\prime})]_{+}\label{eq:univariate-regulated}
\end{equation}
on $D[0,1]$, where $K(\lambda)=\gamma^{+}(1)^{-1}U(\lambda)$, for
$\gamma^{+}(1)=1-\sum_{i=1}^{k-1}\gamma_{i}^{+}$. $Y$ is thus a
regulated Brownian motion. \eqref{univariate-regulated} extends the
results of \citet[Thm.~3.1]{LLS11Bern} and \citet{GLY13JoE} from
a first- to a higher-order autoregressive setting. It also agrees
with the limit theory developed in \citet[Thm.~3.2]{BD22}, when their
censored dynamic Tobit model (in which $\phi_{i}^{-}=0$ for all $i\in\{1,\ldots,k\}$)
is specialised to one with an exact unit root and no intercept.
\end{example}
\restoreexamplex{}

\subsection{Case~\caseclas{}: kinked cointegration}

We turn next to the case in which the cointegrating rank $r$ is the
same across both the positive and negative regimes, though the cointegrating
space itself need not be. Here we also suppose that $\rank\Pi^{x}=r$,
which as discussed in \subsecref{common-trends} entails that $y_{t}\sim I^{\ast}(1)$.
Under the foregoing, we must have $\pi^{\pm}\in\spn\Pi^{x}$, and
so
\begin{equation}
\Pi^{\pm}=\Pi^{x}\begin{bmatrix}\theta^{\pm} & I_{p-1}\end{bmatrix}=\alpha\begin{bmatrix}\beta_{y}^{\pm} & \beta_{x}^{\trans}\end{bmatrix}\eqdef\alpha\beta^{\pm\trans},\label{eq:pikink}
\end{equation}
where $\alpha\in\reals^{p\times r}$, $\beta_{x}\in\reals^{(p-1)\times r}$
and $\beta^{\pm}\in\reals^{p\times r}$ have rank $r$, and $\theta^{\pm}\in\reals^{p-1}$
is such that $\Pi^{x}\theta^{\pm}=\pi^{\pm}$. Let $\indic^{+}(y)\defeq\indic\{y\geq0\}$
and $\indic^{-}(y)\defeq\indic\{y<0\}$, and set $\beta(y)\defeq\beta^{+}\indic^{+}(y)+\beta^{-}\indic^{-}(y)$.\footnote{There is unavoidably some arbitrariness with respect to how such objects
are defined when $y=0$, but since these only play a role in the model
when multiplied by $y$, it does not matter which convention is adopted.
Throughout the paper, we use the functions $\indic^{\pm}(y)$ to ensure
that all such definitions are mutually consistent.} Then we can define the equilibrium errors as 
\[
\xi_{t}\defeq\beta(y_{t})^{\trans}z_{t}=\indic^{+}(y_{t})\beta^{+\trans}z_{t}+\indic^{-}(y_{t})\beta^{-\trans}z_{t}.
\]
Observe how \eqref{pikink} implies that the `loadings' $\alpha$
of the equilibrium errors will be the same in both regimes, even though
the cointegrating vectors that define those errors need not be. (Case~\casestat{}
entails the opposite, with fixed cointegrating vectors but loadings
that depend on the sign of $y_{t}$: see \eqref{costatfactor} below.)

The theorem below establishes that $\xi_{t}\sim I^{\ast}(0)$, and
that $\beta^{+}$ and $\beta^{-}$ span the cointegrating spaces on
$\mathscr{Z}^{+}$ and $\mathscr{Z}^{-}$ respectively. The limiting
common trends are kinked Brownian motions given by a kind of projection
of $U$ onto $\ctspc$, defined in terms of
\begin{gather}
P_{\beta_{\perp}}(y)\defeq\beta_{\perp}(y)[\alpha_{\perp}^{\trans}\Gamma(1;y)\beta_{\perp}(y)]^{-1}\alpha_{\perp}^{\trans},\label{eq:projk1}\\
\begin{aligned}\beta_{\perp}(y) & \defeq\begin{bmatrix}1 & 0\\
-\theta(y) & \beta_{x,\perp}
\end{bmatrix}, & \qquad\Gamma(1;y) & \defeq\Gamma^{+}(1)\indic^{+}(y)+\Gamma^{-}(1)\indic^{-}(y)\end{aligned}
,\label{eq:betaperpfn}
\end{gather}
where $\theta(y)\defeq\indic^{+}(y)\theta^{+}+\indic^{-}(y)\theta^{-}$.
Such objects as $P_{\beta_{\perp}}(y)$ take only two distinct values,
depending on the sign of $y$, and we routinely use the notation $P_{\beta_{\perp}}(+1)$
and $P_{\beta_{\perp}}(-1)$ to indicate these. Similarly to case~\casecens{},
beyond our assumptions on the ranks of $\Pi^{\pm}$ and $\Pi^{x}$,
two further regularity conditions are needed to ensure that the system
is well behaved. To state these, let $\b{\alpha},\b{\beta}(y)\in\reals^{[k(p+1)-1]\times[r+(k-1)(p+1)]}$
with
\begin{align}
\b{\alpha}\defeq & \begin{bmatrix}\alpha & \Gamma_{1} & \Gamma_{2} & \cdots & \Gamma_{k-1}\\
 & I_{p+1}\\
 &  & I_{p+1}\\
 &  &  & \ddots\\
 &  &  &  & I_{p+1}
\end{bmatrix}, & \b{\beta}(y)^{\trans} & \defeq\begin{bmatrix}\beta(y)^{\trans}\\
S(y) & -I_{p+1}\\
 & I_{p+1} & -I_{p+1}\\
 &  & \ddots & \ddots\\
 &  &  & I_{p+1} & -I_{p+1}
\end{bmatrix},\label{eq:alpabeta}
\end{align}
where $\Gamma_{i}\defeq[\gamma_{i}^{+},\gamma_{i}^{-},\Gamma_{i}^{x}]$
for $i\in\{1,\ldots,k-1\}$, and
\begin{equation}
S(y)\defeq\begin{bmatrix}\indic^{+}(y) & 0\\
\indic^{-}(y) & 0\\
0 & I_{p-1}
\end{bmatrix}\label{eq:S}
\end{equation}
so that $S(y_{t})z_{t}=(y_{t}^{+},y_{t}^{-},x_{t}^{\trans})^{\trans}$,
where $z_{t}=(y_{t},x_{t}^{\trans})^{\trans}$.

\assumpname{CO\caseclas{}}
\begin{assumption}
\label{ass:coclas}~
\begin{enumerate}[label=\ass{\arabic*.}, ref=\ass{.\arabic*},itemsep=1pt,topsep=2pt]
\item \label{enu:coclas:rank}$r^{+}=r^{-}=\rank\Pi^{x}=r$, for some $r\in\{0,1,\ldots,p-1\}$.
\item \label{enu:coclas:jsr}$\rho_{\jsr}(\{I+\b{\beta}(+1)^{\trans}\b{\alpha},I+\b{\beta}(-1)^{\trans}\b{\alpha}\})<1$.
\item \label{enu:coclas:det}$\sgn\det\alpha_{\perp}^{\trans}\Gamma(1;+1)\beta_{\perp}(+1)=\sgn\det\alpha_{\perp}^{\trans}\Gamma(1;-1)\beta_{\perp}(-1)\neq0$.
\item \label{enu:coclas:init}%
\begin{enumerate}[label=\ass{\alph*.}, ref=\ass{.\alph*}, leftmargin=0.40cm]
\item $\beta(y_{t})^{\trans}z_{t}$, and $\Delta z_{t}$ have uniformly
bounded $2+\delta_{u}$ moments, for $t\in\{-k+1,\ldots,0\}$.
\item $n^{-1/2}z_{0}\inprob{\cal Z}_{0}=[\begin{smallmatrix}{\cal Y}_{0}\\
{\cal X}_{0}
\end{smallmatrix}]\in\ctspc$, where ${\cal Z}_{0}$ is non-random.
\end{enumerate}
\end{enumerate}
\end{assumption}
\begin{thm}
\label{thm:coclas}Suppose \assref{dgp-canon}, \assref{coerr}, \assref{co}
and \assref{coclas} hold, and let $\vartheta^{\trans}\defeq e_{1}^{\trans}P_{\beta_{\perp}}(+1)$
and $U_{0}(\lambda)\defeq\Gamma(1;{\cal Y}_{0}){\cal Z}_{0}+U(\lambda)$.
Then:
\begin{enumerate}[itemsep=2pt,topsep=3pt]
\item \label{enu:thm:coclas:i0}$\xi_{t}\defeq\beta(y_{t})^{\trans}z_{t}\sim I^{\ast}(0)$
and $\Delta z_{t}\sim I^{\ast}(0)$;
\end{enumerate}
and if additionally \assref{det} holds:
\begin{enumerate}[resume]
\item \label{enu:thm:coclas:wklim}on $D[0,1]$, jointly with \eqref{Unwkc},
\begin{equation}
\begin{bmatrix}Y_{n}(\lambda)\\
X_{n}(\lambda)
\end{bmatrix}\wkc\begin{bmatrix}Y(\lambda)\\
X(\lambda)
\end{bmatrix}=P_{\beta_{\perp}}[Y(\lambda)]U_{0}(\lambda)=P_{\beta_{\perp}}[\vartheta^{\trans}U_{0}(\lambda)]U_{0}(\lambda),\label{eq:coclasXY}
\end{equation}
where in particular $\sgn Y(\lambda)=\sgn\vartheta^{\trans}U_{0}(\lambda)$.
\end{enumerate}
\end{thm}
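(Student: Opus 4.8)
The plan is to mirror the proof of \thmref{cocens}, with the two‑regime structure of case~\caseclas{} playing the role of the regime‑plus‑boundary structure there. The starting point is that, by \assref{coclas}\ref{enu:coclas:rank}, $\pi^{\pm}\in\spn\Pi^{x}$, so \eqref{pikink} holds with a \emph{common} loading matrix $\alpha$, and the VECM \eqref{vecm} may therefore be written
\[
\Delta z_{t}=c+\alpha\,\beta(y_{t-1})^{\trans}z_{t-1}+\sum_{i=1}^{k-1}\Gamma_{i}\bigl[S(y_{t-i})z_{t-i}-S(y_{t-i-1})z_{t-i-1}\bigr]+u_{t},
\]
since $S(y_{t})z_{t}=(y_{t}^{+},y_{t}^{-},x_{t}^{\trans})^{\trans}$ and $\Delta[S(y_{t})z_{t}]=(\Delta y_{t}^{+},\Delta y_{t}^{-},\Delta x_{t}^{\trans})^{\trans}$, while $\Gamma_{i}S(y)z=\Gamma_{i}^{\sgn y}z$.

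For part~\enuref{thm:coclas:i0} I would form the state vector $w_{t}\defeq(\xi_{t}^{\trans},\Delta z_{t}^{\trans},\ldots,\Delta z_{t-k+2}^{\trans})^{\trans}$ with $\xi_{t}=\beta(y_{t})^{\trans}z_{t}$, and establish (as the analogue of \lemref{statcomp}) a companion‑form recursion of the schematic shape
\[
w_{t}=\bigl[I+\b{\beta}(\sgn y_{t})^{\trans}\b{\alpha}\bigr]w_{t-1}+\b{\beta}(\sgn y_{t})^{\trans}(\tilde c+R\,u_{t})+\varrho_{t},
\]
where the ``switching correction'' $\varrho_{t}$ is supported on the (sparse) set of times at which the signs among $y_{t},y_{t-1},\ldots$ are not all equal and captures the discrepancies $\beta(y_{t})^{\trans}z_{t-1}-\beta(y_{t-1})^{\trans}z_{t-1}$ and $[S(y_{t-i})-S(y_{t-i-1})]z_{t-i}$. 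The key structural observation is that $\beta^{+}-\beta^{-}$ and $S(+1)-S(-1)$ have nonzero entries only in the $y$‑coordinate, so on a sign switch at time $s$ the associated correction is bounded by $C\smlabs{y_{s-1}}\leq C\smlabs{\Delta y_{s}}$ and is dominated exactly as in the proof of \thmref{cocens}. Iterating the recursion and invoking \assref{coclas}\ref{enu:coclas:jsr}, which forces every product $\prod(I+\b{\beta}(\pm1)^{\trans}\b{\alpha})$ to decay geometrically, together with $\sup_{t}\smlnorm{u_{t}}_{2+\delta_{u}}<\infty$ (so $\max_{s\leq n}\smlnorm{u_{s}}=o_{p}(n^{1/2})$) and the moment bounds in \assref{coclas}\ref{enu:coclas:init}, yields $\sup_{t\leq n}\smlnorm{w_{t}}=o_{p}(n^{1/2})$, i.e.\ $\xi_{t}\sim I^{\ast}(0)$ and $\Delta z_{t}\sim I^{\ast}(0)$; a routine argument then identifies $\spn\beta^{\pm}$ as the cointegrating spaces on $\mathscr{Z}^{\pm}$.

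For part~\enuref{thm:coclas:wklim} I would run a switching Granger--Johansen decomposition. Premultiplying the VECM by $\alpha_{\perp}^{\trans}$ annihilates the error‑correction term and, since $c\in\spn\alpha$ by \assref{co}\ref{enu:co:deter}, the intercept; cumulating over $t$, using part~\enuref{thm:coclas:i0}, and exploiting that $\sum_{i=1}^{k-1}\Gamma_{i}S(y_{t-i})z_{t-i}$ equals $[I-\Gamma^{+}(1)]z_{t}$ up to an $I^{\ast}(0)$ term on the set where $y$ has constant sign near $t$, one obtains $\alpha_{\perp}^{\trans}\Gamma(1;y_{t})z_{t}=\alpha_{\perp}^{\trans}\Gamma(1;y_{0})z_{0}+\alpha_{\perp}^{\trans}\sum_{s=1}^{t}u_{s}+r_{t}$ with $r_{t}\sim I^{\ast}(0)$. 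Since on $\{y_{t}\geq0\}$ we have $\beta^{+\trans}z_{t}=\xi_{t}\sim I^{\ast}(0)$, the component of $z_{t}$ along $\beta^{+}$ is $o_{p}(n^{1/2})$, so $z_{t}=\beta_{\perp}(+1)[\alpha_{\perp}^{\trans}\Gamma^{+}(1)\beta_{\perp}(+1)]^{-1}\alpha_{\perp}^{\trans}\Gamma^{+}(1)z_{t}+o_{p}(n^{1/2})$ there (the inverse existing by \assref{coclas}\ref{enu:coclas:det}); combining with the previous display, and arguing symmetrically on $\{y_{t}<0\}$, gives
\[
n^{-1/2}z_{\smlfloor{n\lambda}}=P_{\beta_{\perp}}(\sgn y_{\smlfloor{n\lambda}})\bigl[n^{-1/2}\Gamma(1;y_{0})z_{0}+U_{n}(\lambda)\bigr]+o_{p}(1).
\]
Taking $e_{1}^{\trans}$ of the limiting identity forces $Y(\lambda)=e_{1}^{\trans}P_{\beta_{\perp}}[\sgn Y(\lambda)]U_{0}(\lambda)$; a calculation using \assref{coclas}\ref{enu:coclas:det} shows $e_{1}^{\trans}P_{\beta_{\perp}}(+1)$ and $e_{1}^{\trans}P_{\beta_{\perp}}(-1)$ are positive scalar multiples of one another, so with $\vartheta^{\trans}\defeq e_{1}^{\trans}P_{\beta_{\perp}}(+1)$ this self‑consistent equation has the unique solution characterised by $\sgn Y(\lambda)=\sgn\vartheta^{\trans}U_{0}(\lambda)$, whence $P_{\beta_{\perp}}[Y(\lambda)]=P_{\beta_{\perp}}[\vartheta^{\trans}U_{0}(\lambda)]$ and the two stated forms coincide. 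The passage to the weak limit on $D[0,1]$ then follows from \assref{coerr} ($U_{n}\wkc U$) and $n^{-1/2}z_{0}\inprob\mathcal{Z}_{0}$ via the continuous mapping theorem applied to $w\mapsto P_{\beta_{\perp}}(\vartheta^{\trans}w)w$: this map is continuous --- precisely the condition making the limit a genuine kinked Brownian motion in the sense of \defref{nonlinearBMs}, and again a consequence of \assref{coclas}\ref{enu:coclas:det} --- and a non‑degenerate Brownian motion a.s.\ spends zero time on $\{\vartheta^{\trans}w=0\}$, so $U_{0}$ is a.s.\ a continuity point.

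The main obstacle is part~\enuref{thm:coclas:i0}: the coefficients of the companion recursion for $w_{t}$ switch with $\sgn y_{t}$, which is governed by the divergent common‑trend component rather than by $w_{t}$ itself, so the system is not an autonomous Markov chain and no stationary ergodicity result applies; instead one must treat it, conditionally on the realised sign path, as a time‑varying linear recursion and extract uniform geometric decay from $\rho_{\jsr}<1$. The delicate bookkeeping is controlling the corrections $\varrho_{t}$ --- the genuinely nonlinear feature, coupling $w_{t}$ back to the excursions of $y_{t}$ across the origin --- and this rests entirely on $\beta^{+}-\beta^{-}$ and $S(+1)-S(-1)$ being supported on the single $y$‑coordinate, so that the correction at a sign switch at time $s$ is $O(\smlabs{\Delta y_{s}})$ and negligible on the relevant vanishingly sparse set. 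A secondary subtlety, feeding into part~\enuref{thm:coclas:wklim}, is verifying the continuity of $w\mapsto P_{\beta_{\perp}}(\vartheta^{\trans}w)w$ at the kink, which is where the coherency‑type condition \assref{coclas}\ref{enu:coclas:det} does its work.
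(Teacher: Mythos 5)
Your overall architecture coincides with the paper's: part~\ref{enu:thm:coclas:i0} via a companion-form recursion for $(\xi_{t},\Delta z_{t}^{\ast},\ldots)$ controlled by the joint spectral radius condition, and part~\ref{enu:thm:coclas:wklim} by annihilating the error-correction term with $\b{\alpha}_{\perp}^{\trans}$, cumulating, and solving the self-consistent equation $Y=e_{1}^{\trans}P_{\beta_{\perp}}[\sgn Y]U_{0}$ using the fact (a consequence of \assref{coclas}\ref{enu:coclas:det} via a rank-one-perturbation identity) that $e_{1}^{\trans}P_{\beta_{\perp}}(+1)$ and $e_{1}^{\trans}P_{\beta_{\perp}}(-1)$ are positive scalar multiples of one another, followed by the CMT. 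Your part~\ref{enu:thm:coclas:wklim} is essentially the paper's argument: the paper establishes Lipschitz continuity of $(y,u)\elmap P_{\beta_{\perp}}(y)u$ on the relevant domain (\lemref{gcty}) rather than appealing to the occupation time of the kink, but you also assert that continuity and attribute it to the correct assumption, so this is a matter of detail rather than of substance.

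The genuine gap is in part~\ref{enu:thm:coclas:i0}, in the treatment of the switching correction $\varrho_{t}$. You propose to treat the discrepancy $[\beta(y_{t})-\beta(y_{t-1})]^{\trans}z_{t}$ as an additive input, bounded at a switch time $s$ by $C\smlabs{\Delta y_{s}}$ and negligible on a ``vanishingly sparse'' set. This is circular on two counts. First, $\Delta y_{s}$ is a component of the state vector you are trying to bound (equivalently, by the VECM it is of order $\smlnorm{w_{s-1}}+\smlnorm{u_{s}}+1$), so $\varrho_{s}$ is a state-dependent perturbation of the same order as $w_{s-1}$, not an exogenous input to which a bound like \lemref{stochbound} applies; a contraction can be destroyed by such perturbations unless they are provably rare. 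Second, the sparsity of sign switches is a consequence of $y_{t}\sim I^{\ast}(1)$, i.e.\ of part~\ref{enu:thm:coclas:wklim}, which you cannot invoke while proving part~\ref{enu:thm:coclas:i0}. The paper's \lemref{nlvarcase2} resolves this exactly rather than approximately: writing $y_{t}=\delta_{t}\Delta y_{t}$ with $\delta_{t}\in[0,1]$ at a switch and substituting $\Delta\b z_{t}=\b c+\b{\alpha}\b{\xi}_{t-1}+\b u_{t}$, the correction is absorbed into the autoregressive coefficient, yielding $\b{\xi}_{t}=\abv{\b{\beta}}_{t}^{\trans}\b c+[I+\abv{\b{\beta}}_{t}^{\trans}\b{\alpha}]\b{\xi}_{t-1}+\abv{\b{\beta}}_{t}^{\trans}\b u_{t}$ with $\abv{\b{\beta}}_{t}$ a convex combination of $\b{\beta}(+1)$ and $\b{\beta}(-1)$. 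The JSR hypothesis then controls all products of matrices in the convex hull because the JSR of a set equals that of its convex hull (\citealp[Prop.~1.8]{Jungers09}) --- a step your sketch also omits, since geometric decay of products of the two extreme matrices alone does not cover the interpolated matrices that actually arise at switch times.
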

\begin{rem}
\label{rem:coclas}\subremark{}\label{enu:rem:coclas:canon} Similarly
to \remref{cocens}\ref{enu:rem:cocens:canon} above, if \assref{dgp-canon}
is replaced by \assref{dgp}, then the theorem continues to hold exactly
as stated, except that \assref{coclas}\ref{enu:coclas:jsr} should
be modified to
\begin{enumerate}[leftmargin=2cm, label=\ass{CO(ii).\arabic*$^\prime$}, start=2]
\item \label{enu:coclas:jsr-struct}$\rho_{\jsr}(\{I+\tilde{\b{\beta}}(+1)^{\trans}\tilde{\b{\alpha}},I+\tilde{\b{\beta}}(-1)^{\trans}\tilde{\b{\alpha}}\})<1$,
\end{enumerate}
where the tildes refer to the parameters of the canonical CKSVAR derived
from the structural form via \propref{canonical}. (See \suppref{cointstruct}.)

\subremark{} Even when $\beta^{+}=\beta^{-}$, such that the cointegrating
space is the same in both the positive and negative regimes, $(Y,X)$
will generally be a kinked Brownian motion because of the residual
dependence of $P_{\beta_{\perp}}(y)=\beta_{\perp}[\alpha_{\perp}^{\trans}\Gamma(1;y)\beta_{\perp}]^{-1}\alpha_{\perp}^{\trans}$
on $y$ via $\Gamma(1;y)$. Indeed, in the univariate model \eqref{univariate-case}
under case~(ii) with $r=0$,
\[
\Delta y_{t}=\sum_{i=1}^{k-1}(\gamma_{i}^{+}\Delta y_{t-i}^{+}+\gamma_{i}^{-}\Delta y_{t-i}^{-})+u_{t},
\]
\thmref{coclas} entails $Y(\lambda)=\gamma[1;U_{0}(\lambda)]^{-1}U_{0}(\lambda)$
is a kinked Brownian motion, whose variance depends on the sign of
$U_{0}(\lambda)$.

\subremark{} \assref{coclas}\enuref{coclas:jsr} plays an analogous
role to \assref{cocens}\enuref{cocens:jsr} above, ensuring that
the nonlinear VAR representation (see \lemref{nlvarcase2}) obtained
for the short-memory components $(\xi_{t},\Delta z_{t})$ is sufficiently
well-behaved that these processes are $I^{\ast}(0)$. Part~\ref{enu:thm:coclas:wklim}
would continue to hold if \assref{coclas}\enuref{coclas:jsr} were
replaced by any other condition sufficient for $(\xi_{t},\Delta z_{t})\sim I^{\ast}(0)$.
A necessary condition for \assref{coclas}\enuref{coclas:jsr} is
that the eigenvalues of $I+\b{\beta}(\pm1)^{\trans}\b{\alpha}$ should
lie strictly inside the unit circle, which is implied by \assref{co}
(see \lemref{lincoint}).

\subremark{} The map $(y,u)\elmap P_{\beta_{\perp}}(y)u$ is not
in general continuous, a fact that could interfere with the convergence
in \eqref{coclasXY}. \assref{coclas}\enuref{coclas:det} helps to
ensure this map is continuous on a sufficiently large domain to permit
\eqref{coclasXY} to follow via an application of the continuous mapping
theorem (CMT), with $Y$ and $X$ having continuous paths.

\subremark{}\label{enu:rem:coclas:detsign} Given $\alpha,\beta\in\reals^{p\times r}$
with full column rank, their orthocomplements $\alpha_{\perp},\beta_{\perp}\in\reals^{p\times q}$
are unique only up to their column span. Since in a linearly cointegrated
system \assref{co} implies that $\alpha_{\perp}^{\trans}\Gamma(1)\beta_{\perp}$
has nonzero determinant (\lemref{lincoint}), we may normalise $\alpha_{\perp}$
and/or $\beta_{\perp}$ so that $\det\alpha_{\perp}^{\trans}\Gamma(1)\beta_{\perp}=1$.
It should therefore be emphasised that \ref{ass:coclas}\ref{enu:coclas:det}
applies when $\beta_{\perp}(+1)$ and $\beta_{\perp}(-1)$ are related
via \eqref{betaperpfn}, so we are not entirely free to choose $\beta_{\perp}(\pm1)$
such that the signs of these determinants can be brought into agreement.
\end{rem}
\begin{example}
\label{exa:abh}To illustrate how \thmref{coclas} may be applied
to derive the long-run behaviour of a structural model -- even one
in which the observables do not follow a (nonlinear) VAR -- here
we consider the model of \citet{ABH23mimeo}. They regard a vector
of observable series $w_{t}$ (the inflation rate, GDP per capita,
and the nominal interest rate) as fluctuating in a stationary manner
around their long-run components $\bar{w}_{t}$, as per
\[
A(L)(w_{t}-\bar{w}_{t})=\err_{t},
\]
where $A(L)$ is the lag polynomial of a stationary VAR. The long-run
behaviour of $w_{t}$ will thus be governed by that of $\bar{w}_{t}$,
which consists of trend inflation $\bar{\pi}_{t}$, potential output
$\bar{y}_{t}$, and the trend nominal interest rate $\bar{\imath}_{t}$.
The first two components are assumed, together with the trend growth
rate $g_{t}$ of potential output, to evolve according to the nonlinear
VAR\begin{subequations}\label{eq:abh1}
\begin{align}
\bar{\pi}_{t} & =\bar{\pi}_{t-1}+u_{t}^{\pi}\label{eq:abhinfl}\\
\Delta\bar{y}_{t}-\delta(\bar{\pi}_{t}-\tau) & =-\delta(\bar{\pi}_{t-1}-\tau)+g_{t-1}+u_{t}^{y}\label{eq:abhdy}\\
g_{t} & =g_{t-1}+u_{t}^{g},\label{eq:abhg}
\end{align}
\end{subequations}where $\delta(\cdot)$ is a piecewise linear function
of the form $\delta(x)=\delta^{+}[x]_{-}+\delta^{-}[x]_{+}$, which
captures their `long-run Phillips curve'; see their equations~(5)--(9).
(Since we can trivially rewrite the model in terms of $\bar{\pi}_{t}-\tau$
rather than $\bar{\pi}_{t}$, here we also suppose that $\tau=0$,
for simplicity.) The trend in the nominal interest rate is specified
to be a linear function of $\bar{\pi}_{t}$, $g_{t}$, and an additional
stochastic trend $z_{t}$, as per their equations (10) and (11):\begin{subequations}\label{eq:abh2}
\begin{align}
\bar{\imath}_{t} & =\bar{\pi}_{t}+cg_{t}+z_{t}\\
z_{t} & =z_{t-1}+u_{t}^{z}.
\end{align}
\end{subequations}Since $\bar{\imath}_{t}$ does not enter \eqref{abh1},
we can analyse the long-run behaviour of this subsystem separately,
and thence deduce that of $\bar{\imath}_{t}$.

Because $\bar{\pi}_{t}$ enters \eqref{abh1} nonlinearly, the long-run
properties of $(\bar{\pi},\Delta\bar{y}_{t},g_{t})$ cannot be determined
on the basis of the ordinary Granger--Johansen representation theorem,
but instead requires an application of the theory developed in the
present work. To that end, we render the system \eqref{abh1} in CKSVAR
form as
\begin{multline}
\begin{bmatrix}1\\
-\delta^{+}\\
0
\end{bmatrix}\bar{\pi}_{t}^{+}+\begin{bmatrix}1\\
-\delta^{-}\\
0
\end{bmatrix}\bar{\pi}_{t}^{-}+\begin{bmatrix}0 & 0\\
1 & 0\\
0 & 1
\end{bmatrix}\begin{bmatrix}\Delta\bar{y}_{t}\\
g_{t}
\end{bmatrix}\\
=\begin{bmatrix}1\\
-\delta^{+}\\
0
\end{bmatrix}\bar{\pi}_{t-1}^{+}+\begin{bmatrix}1\\
-\delta^{-}\\
0
\end{bmatrix}\bar{\pi}_{t-1}^{-}+\begin{bmatrix}0 & 0\\
0 & 1\\
0 & 1
\end{bmatrix}\begin{bmatrix}\Delta\bar{y}_{t-1}\\
g_{t-1}
\end{bmatrix}+\begin{bmatrix}u_{t}^{\pi}\\
u_{t}^{y}\\
u_{t}^{g}
\end{bmatrix}.\label{eq:ABH}
\end{multline}
Direct calculation (for further details on this and the subsequent
calculations, see \suppref{abh-calculations}) yields
\begin{align*}
\Pi^{+}=\Pi^{-} & =\begin{bmatrix}0 & 0 & 0\\
0 & -1 & 1\\
0 & 0 & 0
\end{bmatrix}=\begin{bmatrix}0\\
1\\
0
\end{bmatrix}\begin{bmatrix}0 & -1 & 1\end{bmatrix}, & \Pi^{x} & =\begin{bmatrix}0 & 0\\
-1 & 1\\
0 & 0
\end{bmatrix}
\end{align*}
so that $\rank\Pi^{+}=\rank\Pi^{-}=\rank\Pi^{x}=1$, indicating that
the system falls within the purview of case~\caseclas{} with $\alpha=(0,1,0)^{\trans}$
and $\beta^{+}=\beta^{-}=(0,-1,1)^{\trans}$. It remains to verify
the conditions of \thmref{coclas}. We have immediately that \assref{coclas}\ref{enu:coclas:rank}
holds with $r=1$. By calculating $\tilde{\Pi}^{\pm}$ for the associated
canonical form of the model, we obtain that $1+\tilde{\beta}^{+\trans}\tilde{\alpha}=1+\tilde{\beta}^{-\trans}\tilde{\alpha}=0$,
so \ref{enu:coclas:jsr-struct} is trivially satisfied. Finally, since
in this first-order model $\Gamma^{\pm}(1)=\Phi_{0}^{\pm}$, it may
be shown that $\det\alpha_{\perp}^{\trans}\Gamma^{\pm}(1)\beta_{\perp}^{\pm}=\det I_{2}=1$,
and thus \assref{coclas}\ref{enu:coclas:det} holds. Since
\[
P_{\beta_{\perp}^{\pm}}=\beta_{\perp}^{\pm}[\alpha_{\perp}^{\trans}\Gamma^{\pm}(1)\beta_{\perp}^{\pm}]^{-1}\alpha_{\perp}^{\trans}=\begin{bmatrix}1 & 0\\
0 & 1\\
0 & 1
\end{bmatrix}\begin{bmatrix}1 & 0 & 0\\
0 & 0 & 1
\end{bmatrix}
\]
is regime-invariant, it follows therefore by \thmref{coclas} that
(supposing that all processes are initialised at zero, for simplicity)
\[
n^{-1/2}\begin{bmatrix}\bar{\pi}_{\smlfloor{n\lambda}}\\
\Delta\bar{y}_{\smlfloor{n\lambda}}\\
g_{\smlfloor{n\lambda}}
\end{bmatrix}\wkc\begin{bmatrix}1 & 0\\
0 & 1\\
0 & 1
\end{bmatrix}\begin{bmatrix}U^{\pi}(\lambda)\\
U^{g}(\lambda)
\end{bmatrix},
\]
where $n^{-1/2}\sum_{t=1}^{\smlfloor{n\lambda}}(u_{t}^{\pi},u_{t}^{g})\wkc[U^{\pi}(\lambda),U^{g}(\lambda)]$.

There are thus two common stochastic trends in the subsystem \eqref{abh1},
one of which is shared between $\Delta\bar{y}_{t}$ and $g_{t}$;
there is a single (linear) cointegrating relation given by the vector
$\beta^{+}=\beta^{-}=(0,-1,1)^{\trans}$, which eliminates these trends.
It follows moreover from \eqref{abh2} that
\[
n^{-1/2}\bar{\imath}_{\smlfloor{n\lambda}}=n^{-1/2}(\bar{\pi}_{\smlfloor{n\lambda}}+cg_{\smlfloor{n\lambda}}+z_{\smlfloor{n\lambda}})\wkc U^{\pi}(\lambda)+cU^{g}(\lambda)+U^{z}(\lambda),
\]
where $n^{-1/2}\sum_{t=1}^{\smlfloor{n\lambda}}u_{t}^{z}\wkc U^{z}(\lambda)$,
and so $\bar{\imath}_{t}$ does not cointegrate with $(\bar{\pi},\Delta\bar{y}_{t},g_{t})$.
\end{example}

\subsection{Case~\casestat{}: linear cointegration in a nonlinear VECM}

\label{subsec:case:stat}

Finally, we consider the other case with a regime-invariant cointegrating
rank $r$, but in which $\rank\Pi^{x}=r-1$. Then we have the factorisation
\begin{equation}
\Pi^{\pm}=\begin{bmatrix}\pi^{\pm} & \Pi^{x}\end{bmatrix}=\begin{bmatrix}\pi^{\pm} & \alpha_{x}\end{bmatrix}\begin{bmatrix}1 & 0\\
0 & \beta_{x}^{\trans}
\end{bmatrix}\eqdef\alpha^{\pm}\beta^{\trans},\label{eq:costatfactor}
\end{equation}
where $\alpha_{x}\in\reals^{p\times(r-1)}$ and $\beta_{x}\in\reals^{(p-1)\times(r-1)}$
have rank $r-1$, and $\alpha^{\pm},\beta\in\reals^{p\times r}$ have
rank $r$. (Note that the dimensions of $\beta_{x}$ here differ from
those in case~\caseclas{}.) The model is thus one in which the positive
and negative regimes share a common cointegrating space, which contains
$e_{p,1}$. Thus we would expect $y_{t}$ to behave like a stationary
process: and indeed, the model falls within the very general framework
of \citet{Saik08ET}, whose results could be applied to establish
the stationarity and ergodicity of the equilibrium errors 
\[
\xi_{t}\defeq\beta^{\trans}z_{t}=\begin{bmatrix}y_{t}\\
\beta_{x}^{\trans}x_{t}
\end{bmatrix},
\]
and hence of $y_{t}$. Our technical contribution here is to exploit
the structure of the CKSVAR so as to permit his conditions, which
refer to the JSR of a collection of autoregressive matrices, to be
relaxed to merely requiring the stability of a certain deterministic
subsystem (for which control over the JSR is a sufficient but not
necessary condition).\footnote{It should be emphasised that we do \emph{not }claim to be relaxing
the conditions of \citet[Thm.~1]{Saik08ET} for the \emph{general}
class of regime-switching error correction models considered in that
paper. Rather, we are able to exploit the fact that our model (the
CKSVAR configured as per \eqref{costatfactor}) is a special case
of that framework to obtain weaker sufficient conditions for the ergodicity
of $\beta^{\trans}z_{t}$, in our setting.}

Our only regularity condition on the system, in this case, is a stability
condition of this kind. To present the system to which this applies,
define
\begin{align*}
\phi_{i}(y) & \defeq\phi_{i}^{+}\indic^{+}(y)+\phi_{i}^{-}\indic^{-}(y) & \Phi_{i}(y) & \defeq\begin{bmatrix}\phi_{i}(y) & \Phi_{i}^{x}\end{bmatrix}
\end{align*}
for $i\in\{1,\ldots,k\}$, $\b y_{t-1}\defeq(y_{t-1},\ldots,y_{t-k})^{\trans}$,
and recognise that the factorisation \eqref{costatfactor} applies
more generally to
\[
\Pi(\b y_{t-1})\defeq\sum_{i=1}^{k}\Phi(y_{t-i})-I_{p}=\begin{bmatrix}\sum_{i=1}^{k}\phi_{i}(y_{t-i})-e_{1} & \alpha_{x}\end{bmatrix}\beta^{\trans}\eqdef\alpha(\b y_{t-1})\beta^{\trans}.
\]
Further, define $\b{\alpha}(\b y_{t-1}),\b{\beta}\in\reals^{kp\times[r+(k-1)p]}$
as
\begin{align}
\b{\alpha}(\b y_{t-1}) & \defeq\begin{bmatrix}\alpha(\b y_{t-1}) & \Gamma_{1}(\b y_{t-1}) & \cdots & \Gamma_{k-1}(\b y_{t-1})\\
 & I_{p}\\
 &  & \ddots\\
 &  &  & I_{p}
\end{bmatrix}, & \b{\beta}^{\trans} & \defeq\begin{bmatrix}\beta^{\trans}\\
I_{p} & -I_{p}\\
 & \ddots & \ddots\\
 &  & I_{p} & -I_{p}
\end{bmatrix},\label{eq:alphbetcase3}
\end{align}
for
\begin{equation}
\Gamma_{i}(\b y_{t-1})\defeq-\sum_{j=i+1}^{k}\Phi_{j}(y_{t-j})\label{eq:gamstatinitial}
\end{equation}
all of which depend only on the \emph{signs} of the elements of $\b y_{t-1}$.
Collect the short memory components of the model as
\[
\chi_{t}\defeq\begin{bmatrix}\xi_{t}\\
\beta_{\perp}^{\trans}\Delta z_{t}
\end{bmatrix}=\begin{bmatrix}y_{t}\\
\beta_{x}^{\trans}x_{t}\\
\beta_{x,\perp}^{\trans}\Delta x_{t}
\end{bmatrix}
\]
for $\beta_{x,\perp}\in\reals^{p-1}$ having $\rank\beta_{x,\perp}=p-r$,
such that $\beta_{x,\perp}^{\trans}\beta_{x}=0$. Letting $\b{\chi}_{t}\defeq(\chi_{t}^{\trans},\ldots,\chi_{t-k+1}^{\trans})^{\trans}$,
we show in the proof of \thmref{costat} below that
\[
\chi_{t}=Bc+M[I_{p(k-1)+r}+\b{\beta}^{\trans}\b{\alpha}(\b G\b{\chi}_{t-1})]\b H\b{\chi}_{t-1}+Bu_{t}
\]
for $B\in\reals^{p\times p}$ invertible, $M\in\reals^{p\times[p(k-1)+r]}$,
$\b G\in\reals^{k\times pk}$ and $\b H\in\reals^{[p(k-1)+r]\times kp}$.
$\{\b{\chi}_{t}\}$ thus evolves according to a regime-switching VAR,
a Markov process that will be stationary and geometrically ergodic
under the conditions given below.

The first of these conditions relates to the innovations $\{u_{t}\}$:
for technical reasons, requiring that $u_{t}$ have a (conditional)
Lebesgue density that is bounded away from zero, an assumption that
is common in the literature on ergodic Markov processes, greatly facilitates
our analysis.

\needspace{3\baselineskip}

\assumpname{ERR$^{\prime}$}
\begin{assumption}
\label{ass:err}$\{u_{t}\}$ is i.i.d.\ with a Lebesgue density that
is bounded away from zero on compact subsets of $\reals^{p}$, $\expect u_{t}=0$,
and $\expect\smlnorm{u_{t}}^{m_{0}}<\infty$ for some $m_{0}\geq1$.
\end{assumption}

Our main condition on the model parameters is the following.

\assumpname{CO\casestat{}}
\begin{assumption}
\label{ass:costat}~
\begin{enumerate}[label=\ass{\arabic*.}, ref=\ass{.\arabic*}]
\item \label{enu:costat:rank}$r^{+}=r^{-}=r$ and $\rank\Pi^{x}=r-1$;
and 
\item \label{enu:costat:stable}the deterministic system:
\[
\hat{\b{\xi}}_{t}=(I_{p(k-1)+r}+\b{\beta}^{\trans}\b{\alpha}(\b G\hat{\b{\xi}}_{t-1}))\hat{\b{\xi}}_{t-1}
\]
is \emph{stable} in the sense that $\hat{\b{\xi}}_{t}\goesto0$ for
every initialisation $\hat{\b{\xi}}_{0}\in\reals^{p(k-1)+r}$.
\end{enumerate}
\end{assumption}

Finally, to state our main result for case~\casestat{}, we recall
the following (cf.\ \citealp{Lieb2005}, p.~671; \citealp{MS08JTSA},
pp.~460f.).
\begin{defn}
Let $\{w_{t}\}_{t\in\naturals_{0}}$ be a Markov chain taking values
in $\reals^{d_{w}}$, with $m$-step transition kernel $P^{m}(w,A)\defeq\Prob\{w_{t+m}\in A\mid w_{t}=w\}$,
and $\mathcal{Q}:\reals^{d_{w}}\setmap\reals_{+}$. We say that $\{w_{t}\}$
is $\mathcal{Q}$-\emph{geometrically ergodic}, with stationary distribution
$\pi$, if $\int_{\reals^{d_{w}}}\mathcal{Q}(w)\pi(\deriv w)<\infty$,
and there exist $a,b>0$ and $\gamma\in(0,1)$ such that
\[
\sup_{B\in\Borel}\smlabs{P^{m}(w,B)-\pi(B)}\leq(a+b\mathcal{Q}(w))\gamma^{m}
\]
for all $w\in\reals^{d_{w}}$, where $\Borel$ denotes the Borel sigma-field
on $\reals^{d_{w}}$.
\end{defn}

If $\{w_{t}\}$ is $\mathcal{Q}$-geometrically ergodic, it will be
stationary if given a stationary initialisation, i.e.\ if $w_{0}$
also has distribution $\pi$; moreover, it will have geometrically
decaying $\beta$-mixing coefficients. For these and further properties,
and a discussion of how this concept relates to other notions of ergodicity
used in the literature, see \citet[pp.~671--73]{Lieb2005}.
\begin{thm}
\label{thm:costat}Suppose \assref{dgp-canon}, \assref{err}, \assref{co}
and \assref{costat} hold. Then $\{\b{\chi}_{t}\}$ is $\mathcal{Q}$-geometrically
ergodic, for $\mathcal{Q}(\b{\chi})\defeq1+\smlnorm{\b{\chi}}^{m_{0}}$.
\end{thm}
\begin{rem}
\label{rem:costat}\subremark{} If \assref{dgp-canon} is replaced
by \assref{dgp}, then \assref{costat}\ref{enu:costat:stable} should
be replaced by
\begin{enumerate}[leftmargin=2cm, label=\ass{CO(iii).\arabic*$^\prime$}, start=2]
\item \emph{\label{enu:costat:stable-struct}the following deterministic
system is stable:}
\[
\hat{\b{\xi}}_{t}=(I_{p(k-1)+r}+\tilde{\b{\beta}}^{\trans}\tilde{\b{\alpha}}(\b G\hat{\b{\xi}}_{t-1}))\hat{\b{\xi}}_{t-1},
\]
\end{enumerate}
where the tildes refer to the parameters of the canonical CKSVAR derived
from the structural form via \propref{canonical}. The theorem then
delivers the ${\cal Q}$-geometric ergodicity of $\tilde{\b{\chi}}_{t}=(\tilde{\chi}_{t}^{\trans},\ldots,\tilde{\chi}_{t-k+1}^{\trans})^{\trans}$,
where
\[
\tilde{\chi}_{t}\defeq\begin{bmatrix}\tilde{\beta}^{\trans}\tilde{z}_{t}\\
\tilde{\beta}_{\perp}^{\trans}\Delta\tilde{z}_{t}
\end{bmatrix}
\]
is formed from the canonical parameters and variables. Since, as shown
in \suppref{cointstruct}, we can write $\beta^{\trans}z_{t}=(y_{t},(\beta_{x}^{\trans}x_{t})^{\trans})^{\trans}$
and $\Delta x_{t}$ as measurable (indeed, Lipschitz continuous) functions
of $\tilde{\beta}^{\trans}\tilde{z}_{t}$ and $(\tilde{\chi}_{t},\tilde{\chi}_{t-1})$
respectively, these processes will inherit the stationarity and geometric
$\beta$-mixing properties of $\b{\chi}_{t}$ that are a corollary
of $\mathcal{Q}$-geometric ergodicity.

\subremark{} Since $\{\Delta x_{t}\}$ is geometrically $\beta$-mixing,
and $x_{t}=\sum_{s=1}^{t}\Delta x_{s}+x_{0}$, the preceding result
may be used as a starting point for the derivation of the asymptotics
of $n^{-1/2}x_{\smlfloor{n\lambda}}$, which should converge to a
multivariate Brownian motion (possibly after linear detrending). However,
as discussed in \citet[pp.~307f.]{Saik08ET}, determining the rank
of the long-run variance of $\{\Delta x_{t}\}$, and hence the rank
of the limiting process, is non-trivial. While its rank is bounded
above by $q=p-r$, it need not be equal to $q$; guaranteeing the
latter is likely to require further conditions on the model parameters.
We leave this for future work.
\end{rem}

\subsection{Deterministic trends}

\label{subsec:deterministic}

To simplify the exposition of our results for cases~\casecens{}
and \caseclas{} above, we have so far maintained \assref{det}, which
prevents the model from generating any common \emph{deterministic}
trends. Since this is likely to be restrictive in applications --
there being many macroeconomic series that exhibit \emph{both} stochastic
and deterministic trends -- it is important to clarify that this
simplifying assumption may be almost entirely dispensed with.

The presence of deterministic trends complicates the analysis of cases~\casecens{}
and \caseclas{}, because these deterministic trends will generally
dominate any common stochastic trends, such that the limit theory
developed in Theorems~\ref{thm:cocens} and \ref{thm:coclas} no
longer applies directly to the standardised process $Z_{n}(\lambda)\defeq n^{-1/2}z_{\smlfloor{n\lambda}}$.
Nonetheless, as explained below, these results continue to provide
an accurate description of the asymptotics of this process \emph{upon
linear detrending}. (Since we did not maintain \assref{det} in the
context of case~\casestat{}, we have nothing to say about that case
in this section.)

We therefore now contemplate a model in which \assref{det} is relaxed,
i.e.\ in which it is no longer required that $c\in\spn\Pi^{+}\intsect\Pi^{-}$.
In a linear cointegrated VAR, an unrestricted intercept $c\in\reals^{p}$
permits the model to impart deterministic trends to all elements of
$z_{t}=(y_{t},x_{t}^{\trans})^{\trans}$, but in such a way that these
deterministic trends are eliminated by the cointegrating relations
(simultaneously with the common stochastic trends; see \citealp{Joh95},
Sec.~5.7). This remains true in the CKSVAR, with the caveat that
if a deterministic trend is imparted to $y_{t}$, then this dominant
drift component will asymptotically push $y_{t}$ so far into either
the positive or negative region (depending on the sign of the drift)
as to render the threshold nonlinearity at zero irrelevant -- and
so the series would come to be adequately described by a linear VAR.

Since we would only apply the CKSVAR when $\{y_{t}\}$ spent an appreciable
portion of the sample on both sides of zero, it is appropriate to
work within an asymptotic framework in which both regions continue
to be visited with nonvanishing probability as $T\goesto\infty$.\footnote{Here we maintain that the location of the threshold delimiting the
regimes is normalised to zero; of course the argument continues to
apply if that threshold were instead fixed at some other finite level.} The cleanest way to ensure this is to restrict $c$ such that a deterministic
trend may be imparted to $x_{t}$, but not to $y_{t}$. In the context
of cases~\casecens{} and \caseclas{}, essentially the same condition
is required, but is denoted slightly differently so as to be consistent
with the notation used in the analysis of these two cases. (Recall
the definitions of $P_{\beta_{\perp}^{+}}$ and $P_{\beta_{\perp}}(+1)$
appearing in \eqref{Pbetaplus} and \eqref{projk1} above.)

\assumpname{DET$^{\prime}$}
\begin{assumption}
\label{ass:detprime}In case~\casecens{}, $e_{1}^{\trans}P_{\beta_{\perp}^{+}}c=0$;
in case~\caseclas{}, $e_{1}^{\trans}P_{\beta_{\perp}}(+1)c=0$.
\end{assumption}

To state our results in terms of the structural form of the CKSVAR,
we shall say that `\assref{cocens}\ass{$^\prime$} holds' whenever
\assref{cocens} holds with \assref{cocens}\ref{enu:cocens:jsr}
replaced by \ref{enu:cocens:jsr-struct}, and mutatis mutandis for
\assref{coclas}\ass{$^\prime$} (see part~(i) of Remarks~\ref{rem:cocens}
and \ref{rem:coclas} above). To deal with the possible presence of
a deterministic trend in $x_{t}$, in case~\casecens{} we define
the linearly detrended processes
\[
\begin{bmatrix}y_{t}^{d}\\
x_{t}^{d}
\end{bmatrix}=z_{t}^{d}\defeq z_{t}-(P_{\beta_{\perp}^{+}}c)t
\]
noting that $y_{t}^{d}=y_{t}$ under \assref{detprime}; in case~\caseclas{},
$z_{t}^{d}$ is defined similarly with $P_{\beta_{\perp}^{+}}$ replaced
by $P_{\beta_{\perp}}(+1)$. Define $X_{n}^{d}(\lambda)\defeq n^{-1/2}x_{\smlfloor{n\lambda}}^{d}$
to be the standardised process corresponding to $x_{t}^{d}$. The
\emph{only} modification that then needs to be made to the conclusions
of Theorems~\ref{thm:cocens} and \ref{thm:coclas} above is that
$X_{n}$ should be replaced by $X_{n}^{d}$.
\begin{thm}
\label{thm:codet}Suppose that \assref{dgp}, \assref{coerr}, \assref{detprime}
and \assref{co} hold. If additionally \assref{cocens}\ass{$^\prime$}
(respectively \assref{coclas}\ass{$^\prime$}) holds, then the conclusions
of \thmref{cocens} (respectively \thmref{coclas}) hold with with
$X_{n}^{d}$ in place of $X_{n}$.
\end{thm}

\begin{rem}
\subremark{} The preceding illustrates how common stochastic and
deterministic trends may be simultaneously accomodated within a CKSVAR.
In particular, because the deterministic trends enter $z_{t}$ through
$(P_{\beta_{\perp}^{+}}c)t$ (or $[P_{\beta_{\perp}}(+1)c]t$), it
remains true that $\beta^{+\trans}z_{t}\sim I^{\ast}(0)$ in case~\casecens{}
and $\beta(y_{t})^{\trans}z_{t}\sim I^{\ast}(0)$ in case \caseclas{},
since these transformations of $z_{t}$ now eliminate both its common
stochastic \emph{and} deterministic trends. These objects thus continue
to describe the (nonlinear) cointegrating relations between the elements
of $z_{t}$.

\subremark{} It would be possible to accommodate a deterministic
trend in $y_{t}$ within our asymptotic framework, provided that this
component is sufficiently small that it does not overwhelm the stochastic
trend component in $y_{t}$, such that the vicinity of $y_{t}=0$
continues to be visited with non-negligible probability in the limit.
Mathematically, this could be engineered by relaxing \assref{detprime}
so as to permit $e_{1}^{\trans}P_{\beta_{\perp}^{+}}c$ (or $e_{1}^{\trans}P_{\beta_{\perp}}(+1)c$)
to be nonzero but of order $n^{-1/2}$: in which case the conclusions
of Theorem~\ref{thm:codet} would remain unaltered, except that the
limiting processes would now also incorporate a drift term. (See \citealp{BD22},
for an illustration of the form that this takes in the univariate
form of case~\casecens{}.) Thus even if \assref{detprime} is \emph{not}
imposed on the model, so that $c$ is unrestricted, it is still possible
to interpret $\beta^{+}$ and $\beta(y)$ in the manner suggested
above, in cases~\casecens{} and \caseclas{} respectively.
\end{rem}

\section{Conclusion}

\label{sec:conclusion}

The CKSVAR provides a flexible yet tractable framework in which to
structurally model vector time series subject to an occasionally binding
constraint, such as the zero lower bound on interest rates, and more
general threshold nonlinearities. Nonetheless, even that seemingly
limited amount of nonlinearity radically changes the properties of
the model relative to a linear VAR. When unit autoregressive roots
are introduced into the model, it is able to accommodate varieties
of long-run behaviour that cannot be generated within a linear VAR,
such as nonlinear common stochastic trends (censored, regulated and
kinked Brownian motions) and cointegrating relationships that may
be regime-dependent. This is not merely a theoretical curiosity but
rather something that, as our examples illustrate, allows the long-run
properties of the model to carry useful identifying information on
structural parameters, as might pertain e.g.\ to the relative effectiveness
of unconventional monetary policy.

Our results provide a complete characterisation of the forms of nonlinear
cointegration (between processes $\ast$-integrated of order one)
generated by the CKSVAR. In deriving these, we have given the first
treatment of how nonlinear cointegration, in the profound sense of
nonlinear common stochastic trends and nonlinear cointegrating relations,
may be systematically generated within a (nonlinear) VAR, and thus
the first extension of the Granger--Johansen representation theorem
to a nonlinearly cointegrated setting. The special structure of the
CKSVAR makes this problem peculiarly tractable, while being flexible
enough to generate interesting departures from linear cointegration.
Our results indicate how progress may now be made in the analysis
of more general nonlinear VARs with unit roots, while our representation
theory provides the foundations for inference on cointegrating relations
in the CKSVAR. Our findings with respect to these problems will be
reported elsewhere; some initial results regarding inference in this
setting, in the univariate case, are given in \citet{BD22} and \citet{DJ25}.

{\singlespacing

\bibliographystyle{ecta}
\bibliography{cksvar}

}

\appendix

\section{Auxiliary lemmas}

\label{app:prelims}

We here collect three elementary lemmas, whose proofs appear in \suppref{auxiliaryproofs}
for completeness. The third records, for convenience, some properties
of the linear cointegrated VAR which are closely related to Lemmas~A.1
and A.2 in \citet{Han05EctJ}.
\begin{lem}
\label{lem:stochbound}Let $\{v_{t}\}$, $\{A_{t}\}$, $\{B_{t}\}$
and $\{c_{t}\}$ be random sequences, respectively taking values in
$\reals^{d_{v}}$, $\reals^{d_{w}\times d_{w}}$, $\reals^{d_{w}\times d_{v}}$
and $\reals^{d_{w}}$, where $t\in\naturals$. Suppose $\{w_{t}\}$
satisfies 
\[
w_{t}=c_{t}+A_{t}w_{t-1}+B_{t}v_{t}
\]
for some given (random) $w_{0}$, and:
\begin{enumerate}[itemsep=2pt,topsep=3pt]
\item $A_{t}\in\mset A$, $B_{t}\in\mset B$ and $c_{t}\in{\cal C}$ for
all $t\in\naturals$, where $\mset A$, $\mset B$ and ${\cal C}$
are bounded subsets of $\reals^{d_{w}\times d_{w}}$, $\reals^{d_{w}\times d_{v}}$
and $\reals^{d_{w}}$ respectively, and $\rho_{\jsr}(\mset A)<1$;
\item $m_{0}\geq1$ is such that $\smlnorm{w_{0}}_{m_{0}}+\sup_{t\in\naturals}\smlnorm{v_{t}}_{m_{0}}<\infty$.
\end{enumerate}
Then for each $\gamma\in(\rho_{\jsr}(\mset A),1)$, there exists a
$C<\infty$ such that for all $t\in\naturals$,
\[
\smlnorm{w_{t}}\leq C\left[\sum_{s=0}^{t-1}\gamma^{s}(1+\smlnorm{v_{t-s}})+\gamma^{t}\smlnorm{w_{0}}\right],
\]
and for all $n\in\naturals$ and $m\in[1,m_{0}]$,
\[
\max_{1\leq t\leq n}\smlnorm{w_{t}}_{m}\leq C\left(1+\smlnorm{w_{0}}_{m}+\max_{1\leq t\leq n}\smlnorm{v_{t}}_{m}\right).
\]
\end{lem}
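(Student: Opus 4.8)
The plan is to unfold the affine recursion into an explicit sum, dominate the matrix products appearing in it using the hypothesis $\rho_{\jsr}(\mset A)<1$, and then pass from the resulting pathwise inequality to a bound on $L^{m}$ norms by Minkowski's inequality. The whole argument is essentially bookkeeping.

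First I would record the consequence of $\rho_{\jsr}(\mset A)<1$ that is needed. Since $\mset A$ is bounded, $\bar A\defeq\sup_{A\in\mset A}\smlnorm A<\infty$, and by the standard characterisation of the joint spectral radius through any submultiplicative norm (see \citet[Ch.~1]{Jungers09}) one has $\lim_{t\goesto\infty}\bigl(\sup_{B\in\mset A^{t}}\smlnorm B\bigr)^{1/t}=\rho_{\jsr}(\mset A)$. Hence, fixing any $\rho$ with $\rho_{\jsr}(\mset A)<\rho<1$, there is a constant $C_{0}\in[1,\infty)$ such that $\sup_{B\in\mset A^{t}}\smlnorm B\le C_{0}\rho^{t}$ for every $t\ge0$, the finitely many small values of $t$ being handled using submultiplicativity of $\smlnorm\cdot$ and boundedness of $\mset A$. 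Set also $\bar c\defeq\sup_{c\in{\cal C}}\smlnorm c<\infty$ and $\bar B\defeq\sup_{B\in\mset B}\smlnorm B<\infty$.

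Next I would iterate $w_{t}=c_{t}+A_{t}w_{t-1}+B_{t}v_{t}$ down to $w_{0}$, obtaining for each $t\ge1$
\[
w_{t}=\Pi_{t,0}w_{0}+\sum_{j=1}^{t}\Pi_{t,j}(c_{j}+B_{j}v_{j}),\qquad\Pi_{t,j}\defeq A_{t}A_{t-1}\cdots A_{j+1},
\]
with the convention $\Pi_{t,t}\defeq I$; here $\Pi_{t,j}\in\mset A^{t-j}$, so $\smlnorm{\Pi_{t,j}}\le C_{0}\rho^{t-j}$ by the previous step. Taking Euclidean norms pathwise and then the triangle inequality for $\smlnorm{\cdot}_{m}$ (Minkowski), together with $\sum_{j=1}^{t}\rho^{t-j}\le(1-\rho)^{-1}$ and $\sum_{j=1}^{t}\rho^{t-j}\smlnorm{v_{j}}_{m}\le(1-\rho)^{-1}\max_{1\le j\le n}\smlnorm{v_{j}}_{m}$, this yields, for every $1\le t\le n$,
\[
\smlnorm{w_{t}}_{m}\le C_{0}\smlnorm{w_{0}}_{m}+\frac{C_{0}\bar c}{1-\rho}+\frac{C_{0}\bar B}{1-\rho}\max_{1\le j\le n}\smlnorm{v_{j}}_{m}.
\]
Maximising over $t\in\{1,\dots,n\}$ and taking $C\defeq C_{0}+C_{0}(\bar c+\bar B)/(1-\rho)$ gives the asserted inequality.

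It remains to observe that $C$ depends only on $\mset A,\mset B,{\cal C}$ through $C_{0},\rho,\bar c,\bar B$, hence on neither $n$ nor $m$; and that the right-hand side is finite for every $m\in[1,m_{0}]$ because hypothesis (ii) and monotonicity of $L^{p}$-norms give $\smlnorm{w_{0}}_{m}\le\smlnorm{w_{0}}_{m_{0}}<\infty$ and $\sup_{t}\smlnorm{v_{t}}_{m}\le\sup_{t}\smlnorm{v_{t}}_{m_{0}}<\infty$. The only ingredient that is not completely elementary is the norm characterisation of the JSR used in the first step — equivalently, that $\rho_{\jsr}(\mset A)<1$ forces geometric decay of $\sup_{B\in\mset A^{t}}\smlnorm B$ — which I would simply cite; the one point that deserves a word of care is the uniformity of $C$ in $m$, which is precisely what lets a single constant serve for the whole range $[1,m_{0}]$.
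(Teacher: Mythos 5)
Your proof is correct and follows essentially the same route as the paper's: both arguments reduce to the fact that $\rho_{\jsr}(\mset A)<1$ forces geometric decay of products of matrices from $\mset A$, then unfold the affine recursion and apply Minkowski's inequality, with a constant that is manifestly uniform in $n$ and $m$. The only cosmetic difference is that the paper invokes \citet[Prop.~1.4]{Jungers09} to obtain an extremal norm $\smlnorm{\cdot}_{\ast}$ with $\sup_{A\in\mset A}\smlnorm A_{\ast}\leq\gamma<1$ and then converts back to the Euclidean norm by equivalence of norms, whereas you bound $\sup_{B\in\mset A^{t}}\smlnorm B\leq C_{0}\rho^{t}$ directly; these are interchangeable consequences of the same hypothesis.
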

\needspace{4\baselineskip}
\begin{lem}
\label{lem:rk1perturb}Suppose $c\in\reals^{m}$, $d,v\in\reals^{n}$,
and $A,B_{1},B_{2}\in\reals^{m\times n}$ have full column rank and
are such that $B_{1}-B_{2}=cd^{\trans}$ and $\det(A^{\trans}B_{1})\cdot\det(A^{\trans}B_{2})>0$.
Then
\begin{enumerate}
\item there exists a $\mu>0$ such that $d^{\trans}(A^{\trans}B_{1})^{-1}=\mu d^{\trans}(A^{\trans}B_{2})^{-1}$;
and
\item if $d^{\trans}(A^{\trans}B_{1})^{-1}v=0$, then $(A^{\trans}B_{1})^{-1}v=(A^{\trans}B_{2})^{-1}v$.
\end{enumerate}
\end{lem}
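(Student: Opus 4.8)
The plan is to reduce the whole statement to a single application of the Sherman--Morrison formula, applied to the $n\times n$ matrices $M_i\defeq A^{\trans}B_i$, which are invertible since the hypothesis $\det(A^{\trans}B_1)\cdot\det(A^{\trans}B_2)>0$ forces $\det M_i\neq0$. The first step is to notice that $M_1-M_2=A^{\trans}(B_1-B_2)=(A^{\trans}c)d^{\trans}$, so, writing $a\defeq A^{\trans}c\in\reals^{n}$, the two matrices are related by the rank-one update $M_1=M_2+ad^{\trans}$.

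The second step is to pin down the relevant scalar and check that it does not vanish. By the matrix determinant lemma, $\det M_1=\det(M_2+ad^{\trans})=(\det M_2)(1+d^{\trans}M_2^{-1}a)$, so $1+d^{\trans}M_2^{-1}a=\det M_1/\det M_2$, which is strictly positive by the sign hypothesis on the two determinants. In particular $1+d^{\trans}M_2^{-1}a\neq0$, so Sherman--Morrison applies and gives
\[
M_1^{-1}=M_2^{-1}-\frac{M_2^{-1}a\,d^{\trans}M_2^{-1}}{1+d^{\trans}M_2^{-1}a}.
\]

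For part (i) I would left-multiply this identity by $d^{\trans}$ and factor $d^{\trans}M_2^{-1}$ out on the right; the scalar prefactor collapses to $(1+d^{\trans}M_2^{-1}a)^{-1}$, yielding $d^{\trans}M_1^{-1}=\mu\,d^{\trans}M_2^{-1}$ with $\mu\defeq(1+d^{\trans}M_2^{-1}a)^{-1}=\det M_2/\det M_1>0$, as required. For part (ii) I would instead right-multiply the Sherman--Morrison identity by $v$: by part (i) the hypothesis $d^{\trans}M_1^{-1}v=0$ is equivalent to $d^{\trans}M_2^{-1}v=0$ (the two quantities differ only by the positive factor $\mu$), so the correction term, which carries $d^{\trans}M_2^{-1}v$ as a factor, vanishes and we are left with $M_1^{-1}v=M_2^{-1}v$.

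I do not expect a real obstacle: the only points needing care are the bookkeeping for the sign of $\mu$ and the verification that the Sherman--Morrison denominator is nonzero, both handled by the matrix determinant lemma together with the hypothesis $\det(A^{\trans}B_1)\cdot\det(A^{\trans}B_2)>0$. The degenerate case $a=A^{\trans}c=0$ (equivalently $M_1=M_2$) is subsumed by the same computation, with $\mu=1$.
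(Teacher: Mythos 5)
Your proposal is correct and follows essentially the same route as the paper's proof: the matrix determinant lemma (Cauchy's rank-one perturbation formula) to show the Sherman--Morrison denominator equals $\det(A^{\trans}B_{1})/\det(A^{\trans}B_{2})>0$, then premultiplication by $d^{\trans}$ for part (i) and postmultiplication by $v$ (using part (i) to kill the correction term) for part (ii). The value $\mu=\det(A^{\trans}B_{2})/\det(A^{\trans}B_{1})$ you obtain matches the paper's $\chi^{-1}$.
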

\begin{lem}
\label{lem:lincoint}Suppose $\Phi(\lambda)\defeq I_{p}-\sum_{i=1}^{k}\Phi_{i}\lambda^{i}$
and $\Pi\defeq-\Phi(1)$ satisfy \assref{co} (i.e.\ without the
`$\pm$' superscripts). Define
\[
\b{\Pi}\defeq\begin{bmatrix}\Pi+\Gamma_{1} & -\Gamma_{1}+\Gamma_{2} & \cdots & -\Gamma_{k-1}\\
I_{p} & -I_{p}\\
 & \ddots & \ddots\\
 &  & I_{p} & -I_{p}
\end{bmatrix},
\]
and let $\Gamma(\lambda)\defeq I_{p}-\sum_{i=1}^{k-1}\Gamma_{i}\lambda^{i}$
be such that $\Phi(\lambda)=\Phi(1)\lambda+\Gamma(\lambda)(1-\lambda)$.\footnote{As per the footnote to \eqref{betaplus}, the definition of $\b{\Pi}$
here should be read `row-wise', so that for $m\in\{2,\ldots,k\}$,
rows $(m-1)p+1$ to $mp$ of $\b{\Pi}$ are given by $[0_{p\times(m-2)p},I_{p},-I_{p},0_{p\times(k-m)p}]$.
The matrix definitions appearing subsequently in these appendices
should be interpreted similarly.} Then:
\begin{enumerate}[itemsep=2pt,topsep=3pt]
\item there exist $\alpha,\beta\in\reals^{p\times r}$ with rank $r$,
such that $\Pi=\alpha\beta^{\trans}$ and $\b{\Pi}=\b{\alpha}\b{\beta}^{\trans}$,
where
\begin{align*}
\b{\alpha} & \defeq\begin{bmatrix}\alpha & \Gamma_{1} & \cdots & \Gamma_{k-1}\\
 & I_{p}\\
 &  & \ddots\\
 &  &  & I_{p}
\end{bmatrix} & \b{\beta}^{\trans} & \defeq\begin{bmatrix}\beta^{\trans}\\
I_{p} & -I_{p}\\
 & \ddots & \ddots\\
 &  & I_{p} & -I_{p}
\end{bmatrix};
\end{align*}
\item $I_{p(k-1)+r}+\b{\beta}^{\trans}\b{\alpha}$ has its eigenvalues strictly
inside the unit circle, and $\b{\alpha}_{\perp}^{\trans}\b{\beta}_{\perp}=\alpha_{\perp}^{\trans}\Gamma(1)\beta_{\perp}$
and $\b{\beta}^{\trans}\b{\alpha}$ are full rank, where
\begin{align}
\b{\alpha}_{\perp}^{\trans} & \defeq\alpha_{\perp}^{\trans}\begin{bmatrix}I_{p} & -\Gamma_{1} & \cdots & -\Gamma_{k-1}\end{bmatrix} & \b{\beta}_{\perp}^{\trans} & \defeq\beta_{\perp}^{\trans}\begin{bmatrix}I_{p} & I_{p} & \cdots & I_{p}\end{bmatrix};\label{eq:abperp}
\end{align}
\item $\b P_{\b{\beta}_{\perp}}\defeq\b{\beta}_{\perp}[\b{\alpha}_{\perp}^{\trans}\b{\beta}_{\perp}]^{-1}\b{\alpha}_{\perp}^{\trans}$
and $\b P_{\b{\alpha}}\defeq\b{\alpha}[\b{\beta}^{\trans}\b{\alpha}]^{-1}\b{\beta}^{\trans}$
are complementary projections, and the upper left $p\times p$ block
of $\b P_{\b{\beta}_{\perp}}$ is $P_{\beta_{\perp}}\defeq\beta_{\perp}[\alpha_{\perp}^{\trans}\Gamma(1)\beta_{\perp}]^{-1}\alpha_{\perp}^{\trans}$.
\end{enumerate}
\end{lem}

\section{Proofs of theorems}

\label{app:cointegration}

\subsection{Proof of \thmref{cocens}}

Defining
\begin{align}
v_{t} & \defeq u_{t}+(\pi^{-}-\pi^{+})y_{t-1}^{-}+\sum_{i=1}^{k-1}(\gamma_{i}^{-}-\gamma_{i}^{+})\Delta y_{t-i}^{-}=u_{t}+\sum_{i=1}^{k}(\phi_{i}^{-}-\phi_{i}^{+})y_{t-i}^{-}\label{eq:vdef}
\end{align}
and recalling $z_{t}=(y_{t},x_{t}^{\trans})^{\trans}$, we may rewrite
the model \eqref{vecm} as
\begin{align*}
\Delta z_{t} & =c+\Pi^{+}z_{t-1}+\sum_{i=1}^{k-1}\Gamma_{i}^{+}\Delta z_{t-i}+v_{t}
\end{align*}
where $\Pi^{+}=[\pi^{+},\Pi^{x}]$ and $\Gamma_{i}^{+}=[\gamma_{i}^{+},\Gamma^{x}]$.
Conformably defining
\begin{align*}
\b{\Pi}^{+} & \defeq\begin{bmatrix}\Pi^{+}+\Gamma_{1}^{+} & -\Gamma_{1}^{+}+\Gamma_{2}^{+} & \cdots & -\Gamma_{k-1}^{+}\\
I_{p} & -I_{p}\\
 & \ddots & \ddots\\
 &  & I_{p} & -I_{p}
\end{bmatrix} & \b c & \defeq\begin{bmatrix}c\\
0_{p}\\
\vdots\\
0_{p}
\end{bmatrix} & \b v_{t} & \defeq\begin{bmatrix}v_{t}\\
0_{p}\\
\vdots\\
0_{p}
\end{bmatrix} & \b z_{t} & \defeq\begin{bmatrix}z_{t}\\
z_{t-1}\\
\vdots\\
z_{t-k+1}
\end{bmatrix}
\end{align*}
so that $\b{\Pi}^{+}=\b{\alpha}^{+}\b{\beta}^{+\trans}$, where $\b{\alpha}^{+}$
and $\b{\beta}^{+}$ are as in \eqref{betaplus}, we render the system
in companion form as
\begin{equation}
\Delta\b z_{t}=\b c+\b{\Pi}^{+}\b z_{t-1}+\b v_{t}.\label{eq:poscompanion}
\end{equation}

The next lemma provides a (nonlinear) VAR representation for the short-memory
components, which comprise: (a) the equilibrium errors (using the
cointegrating vectors $\beta^{+}$ from the positive regime) and (lagged)
differences,
\[
\b{\xi}_{t}^{+}\defeq\b{\beta}^{+\trans}\b z_{t}=(\xi_{t}^{+\trans},\Delta z_{t}^{\trans},\ldots,\Delta z_{t-k+2}^{\trans})^{\trans}
\]
where $\xi_{t}^{+}\defeq\beta^{+\trans}z_{t}$; (b) the lagged levels
$\b y_{t-1}^{-}\defeq(y_{t-1}^{-},\ldots,y_{t-k+1}^{-})^{\trans}$;
and (c) an auxiliary series $\pseudy_{t}$. Collect these in $\b{\zeta}_{t}\in\reals^{p(k-1)+k+r}$,
and conformably define
\begin{align*}
\b{\zeta}_{t} & =\begin{bmatrix}\b{\xi}_{t}^{+}\\
\pseudy_{t}\\
\b y_{t-1}^{-}
\end{bmatrix} & \b{\varepsilon}_{t} & \defeq\begin{bmatrix}\b{\beta}_{1:p}^{+\trans}u_{t}\\
e_{1}^{\trans}u_{t}\\
0_{k-1}
\end{bmatrix} & \b{\mu} & \defeq\begin{bmatrix}\b{\beta}_{1:p}^{+\trans}c\\
e_{1}^{\trans}c\\
0_{k-1}
\end{bmatrix}
\end{align*}
where $\b{\beta}_{1:p}^{+}$ denotes the first $p$ rows of $\b{\beta}^{+}$.
Recall the definition of $\b F_{\delta}$ given in \eqref{Fkgeneral}
above.
\begin{lem}
\label{lem:statcomp}Suppose \assref{co} and \assref{cocens}\enuref{cocens:rank}
hold. Set $\delta_{0}\defeq1$ and $\abv y_{0}\defeq y_{0}^{-}$.
Then $\{\b{\zeta}_{t}\}$ follows
\begin{align}
\b{\zeta}_{t} & =\b{\mu}+\b F_{\delta_{t-1}}\b{\zeta}_{t-1}+\b{\varepsilon}_{t}\label{eq:srcomp}\\
\delta_{t} & =\begin{cases}
(y_{t-1}^{+}+\pseudy_{t})/\pseudy_{t} & \text{if }y_{t-1}^{+}+\pseudy_{t}<0,\\
0 & \text{otherwise};
\end{cases}\label{eq:delcens}
\end{align}
which implies, in particular, that $y_{t}^{-}=\delta_{t}\pseudy_{t}$
for all $t\in\naturals$.
\end{lem}
\begin{proof}
Premultiply \eqref{poscompanion} by $\b{\beta}^{+\trans}$ and then
use \eqref{vdef} to obtain
\begin{align}
\b{\xi}_{t}^{+} & =\b{\beta}_{1:p}^{+\trans}c+(I+\b{\beta}^{+\trans}\b{\alpha}^{+})\b{\xi}_{t-1}^{+}+\b{\beta}_{1:p}^{+\trans}v_{t}\nonumber \\
 & =\b{\beta}_{1:p}^{+\trans}c+(I+\b{\beta}^{+\trans}\b{\alpha}^{+})\b{\xi}_{t-1}^{+}\nonumber \\
 & \qquad\qquad+\b{\beta}_{1:p}^{+\trans}(\phi_{1}^{-}-\phi_{1}^{+})y_{t-1}^{-}+\b{\beta}_{1:p}^{+\trans}\sum_{i=2}^{k}(\phi_{i}^{-}-\phi_{i}^{+})y_{t-i}^{-}+\b{\beta}_{1:p}^{+\trans}u_{t}.\label{eq:xirep}
\end{align}
To obtain the law of motion for $y_{t}^{-}$, we take the first equation
in \eqref{poscompanion}
\[
\Delta y_{t}=e_{1}^{\trans}c+\b e_{1}^{\trans}\b{\alpha}^{+}\b{\xi}_{t-1}^{+}+e_{1}^{\trans}v_{t}
\]
(where $e_{1}$ and $\b e_{1}$ denote the first column of $I_{p}$
and $I_{kp}$ respectively), which using \eqref{vdef} can be rearranged
as
\begin{align}
y_{t}-y_{t-1}^{+} & =e_{1}^{\trans}c+[1+e_{1}^{\trans}(\phi_{1}^{-}-\phi_{1}^{+})]y_{t-1}^{-}\label{eq:pseudy}\\
 & \qquad\qquad+\sum_{i=2}^{k}e_{1}^{\trans}(\phi_{i}^{-}-\phi_{i}^{+})y_{t-i}^{-}+\b e_{1}^{\trans}\b{\alpha}^{+}\b{\xi}_{t-1}^{+}+e_{1}^{\trans}u_{t}.\nonumber 
\end{align}
It follows that if we define $\pseudy_{t}$ to equal the r.h.s., then
$y_{t}=y_{t-1}^{+}+\pseudy_{t}$, and so
\[
y_{t}^{-}=[y_{t}]_{-}=[y_{t-1}^{+}+\pseudy_{t}]_{-}=\delta_{t}\pseudy_{t}
\]
where $\delta_{t}\in[0,1]$ is defined in \eqref{delcens}. (Observe
that if $y_{t-1}^{+}+\pseudy_{t}<0$, then $\abv y_{t}<-y_{t-1}^{+}\leq0$,
so this object is well defined.) Making the substitution $y_{t-1}^{-}=\delta_{t-1}\pseudy_{t-1}$
on the r.h.s.\ of \eqref{xirep} and \eqref{pseudy}, we see that
the trajectories of $\{\b{\xi}_{t}^{+}\}$ and $\{y_{t}^{-}\}$ for
$t\geq1$ (conditional on the initial values $\{z_{t}\}_{t=-k+1}^{0}$)
will be reproduced exactly by\begin{subequations}\label{eq:coclas:sm}
\begin{align}
\b{\xi}_{t}^{+} & =\b{\beta}_{1:p}^{+\trans}c+(I+\b{\beta}^{+\trans}\b{\alpha}^{+})\b{\xi}_{t-1}^{+}\\
 & \qquad\qquad\qquad+\b{\beta}_{1:p}^{+\trans}(\phi_{1}^{-}-\phi_{1}^{+})\delta_{t-1}\pseudy_{t-1}+\b{\beta}_{1:p}^{+\trans}\sum_{i=2}^{k}(\phi_{i}^{-}-\phi_{i}^{+})y_{t-i}^{-}+\b{\beta}_{1:p}^{+\trans}u_{t},\nonumber \\
\pseudy_{t} & =e_{1}^{\trans}c+[1+e_{1}^{\trans}(\phi_{1}^{-}-\phi_{1}^{+})]\delta_{t-1}\pseudy_{t-1}\nonumber \\
 & \qquad\qquad\qquad+\sum_{i=2}^{k}e_{1}^{\trans}(\phi_{i}^{-}-\phi_{i}^{+})y_{t-i}^{-}+\b e_{1}^{\trans}\b{\alpha}^{+}\b{\xi}_{t-1}^{+}+e_{1}^{\trans}u_{t},\\
y_{t-1}^{-} & =\delta_{t-1}\pseudy_{t-1},
\end{align}
\end{subequations}with $\delta_{0}\defeq1$ and $\abv y_{0}\defeq y_{0}^{-}$.
Finally, we note that \eqref{srcomp} provides the companion form
representation of this system.
\end{proof}

By \citet[Prop.~1.8]{Jungers09}, \assref{cocens}\enuref{cocens:jsr}
implies that $\rho_{\jsr}(\{\b F_{\delta}\mid\delta\in[0,1]\})=\rho_{\jsr}(\{\b F_{0},\b F_{1}\})<1$.
In view of \assref{coerr} and \assref{cocens}\enuref{cocens:init},
applying \lemref{stochbound} to the representation \eqref{srcomp}
then yields that $\sup_{t\in\naturals}\smlnorm{\b{\zeta}_{t}}_{2+\delta_{u}}<\infty$,
and hence $\b{\zeta}_{t}\sim I^{\ast}(0)$. In particular, $\b{\xi}_{t}^{+}\sim I^{\ast}(0)$
and $y_{t}^{-}\sim I^{\ast}(0)$, which gives part~\enuref{thm:cocens:i0}
of the theorem.

For the purposes of proving part~\enuref{thm:cocens:wklim}, we shall
for the moment we replace \assref{det} by the weaker condition 
\begin{equation}
e_{1}^{\trans}P_{\beta_{\perp}^{+}}c=0,\label{eq:detprime-case1}
\end{equation}
for $P_{\beta_{\perp}^{+}}$ as defined in \eqref{Pbetaplus}, noting
only at the very end of the proof how the result simplifies when \assref{det}
holds. As discussed in \subsecref{deterministic}, the condition \eqref{detprime-case1}
(which appears there as \assref{detprime}) permits the model to impart
a deterministic trend to $x_{t}$, but not to $y_{t}$.

Our first step is to extract the common trend components. Define the
$q\times kp$ matrices
\begin{align*}
\b{\alpha}_{\perp}^{+\trans} & \defeq\alpha_{\perp}^{+\trans}\begin{bmatrix}I_{p} & -\Gamma_{1}^{+} & \cdots & -\Gamma_{k-1}^{+}\end{bmatrix} & \b{\beta}_{\perp}^{+\trans} & \defeq\beta_{\perp}^{+\trans}\begin{bmatrix}I_{p} & I_{p} & \cdots & I_{p}\end{bmatrix},
\end{align*}
which are orthogonal to $\b{\alpha}^{+}$ and $\b{\beta}^{+}$, and
the complementary projections 
\begin{align*}
\b P_{\b{\beta}_{\perp}^{+}} & \defeq\b{\beta}_{\perp}^{+}[\b{\alpha}_{\perp}^{+\trans}\b{\beta}_{\perp}^{+}]^{-1}\b{\alpha}_{\perp}^{+\trans} & \b P_{\b{\alpha}^{+}} & \defeq\b{\alpha}^{+}[\b{\beta}^{+\trans}\b{\alpha}^{+}]^{-1}\b{\beta}^{+\trans}
\end{align*}
(see \lemref{lincoint}). Premutiplying \eqref{poscompanion} by $\b{\alpha}_{\perp}^{+\trans}$
and cumulating yields
\begin{align*}
\b{\alpha}_{\perp}^{+\trans}\b z_{t} & =\b{\alpha}_{\perp}^{+\trans}\b z_{0}+\b{\alpha}_{\perp}^{+\trans}\sum_{s=1}^{t}\b v_{s}+(\b{\alpha}_{\perp}^{+\trans}\b c)t.
\end{align*}
Hence

\[
\b z_{t}=(\b P_{\b{\beta}_{\perp}^{+}}+\b P_{\b{\alpha}^{+}})\b z_{t}=\b P_{\b{\beta}_{\perp}^{+}}\left(\b z_{0}+\sum_{s=1}^{t}\b v_{s}+\b ct\right)+\b z_{\xi,t}
\]
where $\b z_{\xi,t}\defeq\b P_{\b{\alpha}^{+}}\b z_{t}=\b{\alpha}^{+}[\b{\beta}^{+\trans}\b{\alpha}^{+}]^{-1}\b{\xi}_{t}^{+}\sim I^{\ast}(0)$.

Since the top left $p\times p$ block of $\b P_{\b{\beta}_{\perp}^{+}}$
is $P_{\beta_{\perp}^{+}}=\beta_{\perp}^{+}[\alpha_{\perp}^{+\trans}\Gamma^{+}(1)\beta_{\perp}^{+}]^{-1}\alpha_{\perp}^{+\trans}$
(\lemref{lincoint}), the preceding implies
\begin{align}
z_{t} & =[\b P_{\b{\beta}_{\perp}^{+}}\b z_{0}]_{1:p}+P_{\beta_{\perp}^{+}}\sum_{s=1}^{t}v_{s}+(P_{\beta_{\perp}^{+}}c)t+z_{\xi,t}\label{eq:ztsmldecmp}
\end{align}
where $z_{\xi,t}\defeq[\b z_{\xi,t}]_{1:p}$. Under \assref{cocens}\enuref{cocens:init},
$n^{-1/2}z_{t}\inprob{\cal Z}_{0}$ for each $t\in\{-k+1,\ldots,0\}$;
hence
\begin{equation}
n^{-1/2}[\b P_{\b{\beta}_{\perp}^{+}}\b z_{0}]_{1:p}\inprob\beta_{\perp}^{+}[\alpha_{\perp}^{+\trans}\Gamma^{+}(1)\beta_{\perp}^{+}]^{-1}\alpha_{\perp}^{+\trans}\Gamma^{+}(1){\cal Z}_{0}={\cal Z}_{0}\label{eq:cocens-init}
\end{equation}
where the final equality follows since ${\cal Z}_{0}\in\ctspc\subset\spn\beta_{\perp}^{+}$.
Now recalling \eqref{vdef},
\begin{align*}
\sum_{s=1}^{t}v_{s} & =\sum_{s=1}^{t}u_{s}+(\pi^{-}-\pi^{+})\sum_{s=0}^{t-1}y_{s}^{-}+\sum_{i=1}^{k-1}(\gamma_{i}^{-}-\gamma_{i}^{+})[y_{t-i}^{-}-y_{-i}^{-}],
\end{align*}
we have from \eqref{ztsmldecmp} that
\[
z_{t}=[\b P_{\b{\beta}_{\perp}^{+}}\b z_{0}]_{1:p}+(P_{\beta_{\perp}^{+}}c)t+P_{\beta_{\perp}^{+}}\sum_{s=1}^{t}u_{s}+\regcoef\sum_{s=0}^{t-1}y_{s}^{-}+P_{\beta_{\perp}^{+}}\sum_{i=1}^{k-1}(\gamma_{i}^{-}-\gamma_{i}^{+})[y_{t-i}^{-}-y_{-i}^{-}]+z_{\xi,t},
\]
where $\regcoef\defeq P_{\beta_{\perp}^{+}}(\pi^{-}-\pi^{+})=P_{\beta_{\perp}^{+}}\pi^{-}$.
Using that $z_{t}=[\begin{smallmatrix}y_{t}\\
x_{t}
\end{smallmatrix}]=[\begin{smallmatrix}y_{t}^{+}\\
x_{t}
\end{smallmatrix}]+e_{1}y_{t}^{-}$, and defining
\[
\eta_{t}\defeq-(e_{1}+\regcoef)y_{t}^{-}+P_{\beta_{\perp}^{+}}\sum_{i=1}^{k-1}(\gamma_{i}^{-}-\gamma_{i}^{+})[y_{t-i}^{-}-y_{-i}^{-}]+z_{\xi,t},
\]
which is $I^{\ast}(0)$ by part~\enuref{thm:cocens:i0} of the theorem,
we may rewrite the preceding as
\begin{align}
\begin{bmatrix}y_{t}^{+}\\
x_{t}
\end{bmatrix} & =[\b P_{\b{\beta}_{\perp}^{+}}\b z_{0}]_{1:p}+(P_{\beta_{\perp}^{+}}c)t+\regcoef\sum_{s=0}^{t}y_{s}^{-}+P_{\beta_{\perp}^{+}}\sum_{s=1}^{t}u_{s}+\eta_{t}\label{eq:reginit}
\end{align}

Under \eqref{detprime-case1}, the first equation in \eqref{reginit}
is
\begin{align}
y_{t}^{+} & =e_{1}^{\trans}[\b P_{\b{\beta}_{\perp}^{+}}\b z_{0}]_{1:p}+\regcoef_{1}\sum_{s=0}^{t}y_{s}^{-}+e_{1}^{\trans}P_{\beta_{\perp}^{+}}\sum_{s=1}^{t}u_{s}+\eta_{1,t}\label{eq:reginit1}
\end{align}
where $\eta_{1,t}=e_{1}^{\trans}\eta_{t}$. Taking first differences
yields
\[
y_{t}^{+}-\regcoef_{1}y_{t}^{-}=y_{t-1}^{+}+e_{1}^{\trans}P_{\beta_{\perp}^{+}}u_{t}+\Delta\eta_{1,t}.
\]
Since $\regcoef_{1}<0$ by \assref{cocens}\enuref{cocens:regcoef},
only one of $y_{t}^{+}$ and $\regcoef y_{t}^{-}$ can be nonzero,
and must have opposite signs; hence
\[
y_{t}^{+}=[y_{t-1}^{+}+e_{1}^{\trans}P_{\beta_{\perp}^{+}}u_{t}+\Delta\eta_{1,t}]_{+}.
\]
As noted above $\eta_{t}\sim I^{\ast}(0)$, while $n^{-1/2}y_{0}\inprob{\cal Y}_{0}$
by \assref{cocens}\enuref{cocens:init}. Hence by \assref{coerr},
\begin{align*}
 & n^{-1/2}y_{0}+n^{-1/2}\sum_{t=1}^{\smlfloor{n\lambda}}(e_{1}^{\trans}P_{\beta_{\perp}^{+}}u_{t}+\Delta\eta_{t})\\
 & \qquad\qquad={\cal Y}_{0}+n^{-1/2}e_{1}^{\trans}P_{\beta_{\perp}^{+}}\sum_{t=1}^{\smlfloor{n\lambda}}u_{t}+n^{-1/2}(\eta_{1,\smlfloor{n\lambda}}-\eta_{1,0})+o_{p}(1)\\
 & \qquad\qquad\wkc{\cal Y}_{0}+e_{1}^{\trans}P_{\beta_{\perp}^{+}}U(\lambda)=e_{1}^{\trans}P_{\beta_{\perp}^{+}}[\Gamma^{+}(1){\cal Z}_{0}+U(\lambda)]=e_{1}^{\trans}P_{\beta_{\perp}^{+}}U_{0}(\lambda).
\end{align*}
on $D[0,1]$, where the penultimate equality follows from $P_{\beta_{\perp}^{+}}\Gamma^{+}(1){\cal Z}_{0}={\cal Z}_{0}$,
as per \eqref{cocens-init} above. Therefore by \citet[Lem.~A.1]{BD22},
and $y_{t}^{-}\sim I^{\ast}(0)$, 
\begin{align}
n^{-1/2}y_{\smlfloor{n\lambda}} & =n^{-1/2}y_{\smlfloor{n\lambda}}^{+}+o_{p}(1)\nonumber \\
 & \wkc e_{1}^{\trans}P_{\beta_{\perp}^{+}}U_{0}(\lambda)+\sup_{\lambda^{\prime}\leq\lambda}[-e_{1}^{\trans}P_{\beta_{\perp}^{+}}U_{0}(\lambda^{\prime})]_{+}=Y(\lambda).\label{eq:ycenslim}
\end{align}

Since \eqref{detprime-case1} permits a deterministic trend to be
present in $x_{t}$, as evident from \eqref{reginit} above, we consider
the linearly detrended process
\[
x_{t}^{d}\defeq x_{t}-(E_{-1}^{\trans}P_{\beta_{\perp}^{+}}c)t
\]
where $E_{-1}\in\reals^{p\times(p-1)}$ collects the final $p-1$
columns of $I_{p}$. To obtain the weak limit of $n^{-1/2}x_{\smlfloor{n\lambda}}^{d},$
we substitute
\[
\sum_{s=0}^{t}y_{s}^{-}=\regcoef_{1}^{-1}y_{t}^{+}-\regcoef_{1}^{-1}e_{1}^{\trans}\left([\b P_{\b{\beta}_{\perp}^{+}}\b z_{0}]_{1:p}+P_{\beta_{\perp}^{+}}\sum_{s=1}^{t}u_{s}+\eta_{t}\right)
\]
from \eqref{reginit1} into the final $p-1$ equations from \eqref{reginit},
to obtain
\begin{align*}
x_{t}^{d} & =x_{t}-(E_{-1}^{\trans}P_{\beta_{\perp}^{+}}c)t\\
 & =E_{-1}^{\trans}\left\{ [\b P_{\b{\beta}_{\perp}^{+}}\b z_{0}]_{1:p}+\regcoef\sum_{s=0}^{t}y_{s}^{-}+P_{\beta_{\perp}^{+}}\sum_{s=1}^{t}u_{s}+\eta_{t}\right\} \\
 & =E_{-1}^{\trans}\left\{ (I_{p}-\regcoef_{1}^{-1}\regcoef e_{1}^{\trans})\left[[\b P_{\b{\beta}_{\perp}^{+}}\b z_{0}]_{1:p}+P_{\beta_{\perp}^{+}}\sum_{s=1}^{t}u_{s}+\eta_{t}\right]+\regcoef_{1}^{-1}\regcoef y_{t}^{+}\right\} .
\end{align*}
for $E_{-1}$ the final $p-1$ columns of $I_{p}$. Hence, by $\eta_{t}\sim I^{\ast}(0)$,
\assref{coerr}, \eqref{cocens-init} and \eqref{ycenslim},
\begin{align}
n^{-1/2}x_{\smlfloor{n\lambda}}^{d} & =E_{-1}^{\trans}\left\{ (I_{p}-\regcoef_{1}^{-1}\regcoef e_{1}^{\trans})\left[{\cal Z}_{0}+P_{\beta_{\perp}^{+}}n^{-1/2}\sum_{s=1}^{\smlfloor{n\lambda}}u_{s}\right]+\regcoef_{1}^{-1}\regcoef n^{-1/2}y_{\smlfloor{n\lambda}}^{+}\right\} +o_{p}(1)\nonumber \\
 & \wkc E_{-1}^{\trans}\left\{ (I_{p}-\regcoef_{1}^{-1}\regcoef e_{1}^{\trans})P_{\beta_{\perp}^{+}}\left[\Gamma^{+}(1){\cal Z}_{0}+U(\lambda)\right]+\regcoef_{1}^{-1}\regcoef Y(\lambda)\right\} \nonumber \\
 & =E_{-1}^{\trans}\left\{ P_{\beta_{\perp}^{+}}U_{0}(\lambda)+\regcoef_{1}^{-1}\regcoef\sup_{\lambda^{\prime}\leq\lambda}[-e_{1}^{\trans}P_{\beta_{\perp}^{+}}U_{0}(\lambda^{\prime})]_{+}\right\} \label{eq:xcenslim}
\end{align}
on $D[0,1]$.

Finally, we suppose that \assref{det} holds in place of \eqref{detprime-case1}.
In this case, $P_{\beta_{\perp}^{+}}c=0$ and therefore $x_{t}^{d}=x_{t}$
for all $t\in\{1,\ldots,T\}$. Thus part~\ref{enu:thm:cocens:wklim}
of the theorem follows from \eqref{ycenslim} and \eqref{xcenslim}.\hfill\qedsymbol{}

\subsection{Proof of \thmref{coclas}}

Because $\{y_{t}\}$ now behaves like an integrated process in both
the positive and negative regimes, we have to take quite a different
approach from that utilised in case~\casecens{}. Recall $z_{t}^{\trans}=(y_{t},x_{t}^{\trans})$,
but now also define $z_{t}^{\ast\trans}\defeq(y_{t}^{+},y_{t}^{-},x_{t}^{\trans})$,
so that \eqref{vecm} may be written as
\begin{align*}
\Delta z_{t}=\begin{bmatrix}\Delta y_{t}\\
\Delta x_{t}
\end{bmatrix} & =c+\begin{bmatrix}\pi^{+} & \pi^{-} & \Pi^{x}\end{bmatrix}\begin{bmatrix}y_{t-1}^{+}\\
y_{t-1}^{-}\\
x_{t-1}
\end{bmatrix}+\sum_{i=1}^{k-1}\begin{bmatrix}\gamma_{i}^{+} & \gamma_{i}^{-} & \Gamma_{i}^{x}\end{bmatrix}\begin{bmatrix}\Delta y_{t-i}^{+}\\
\Delta y_{t-i}^{-}\\
\Delta x_{t-i}
\end{bmatrix}+u_{t}\\
 & =c+\Pi(y_{t-1})z_{t-1}+\sum_{i=1}^{k-1}\Gamma_{i}\Delta z_{t-i}^{\ast}+u_{t},
\end{align*}
where $\Pi(y)\defeq\Pi^{+}\indic^{+}(y)+\Pi^{-}\indic^{-}(y)$. Taking
$\b z_{t}^{\trans}\defeq(z_{t}^{\trans},z_{t-1}^{\ast\trans},\ldots,z_{t-k+1}^{\ast\trans})$
-- the first $p$ elements of which are $z_{t}$, not $z_{t}^{\ast}$
-- as the state vector, and conformably defining
\begin{align*}
\b{\Pi}(y) & \defeq\begin{bmatrix}\Pi(y)+\Gamma_{1}S(y) & -\Gamma_{1}+\Gamma_{2} & -\Gamma_{2}+\Gamma_{3} & \cdots & -\Gamma_{k-1}\\
S(y) & -I_{p+1}\\
 & I_{p+1} & -I_{p+1}\\
 &  & \ddots & \ddots\\
 &  &  & I_{p+1} & -I_{p+1}
\end{bmatrix} &  & \begin{aligned}\b c & \defeq\begin{bmatrix}c\\
0_{(p+1)(k-1)}
\end{bmatrix}\\
\\\b u_{t} & \defeq\begin{bmatrix}u_{t}\\
0_{(p+1)(k-1)}
\end{bmatrix}
\end{aligned}
\end{align*}
so that $\b{\Pi}(y)=\b{\alpha}\b{\beta}(y)^{\trans}$, where $\b{\alpha}$,
$\b{\beta}(y)$ and $S(y)$ are as defined in \eqref{alpabeta}--\eqref{S},
we can rewrite the system in companion form as
\begin{equation}
\Delta\b z_{t}=\b c+\b{\Pi}(y_{t-1})\b z_{t-1}+\b u_{t}.\label{eq:vecmcoclas}
\end{equation}
Note the definitions of $\b z_{t}$, $\b c$ and $\b u_{t}$ here
differ from those in case~\casecens{}.

Our next result is the counterpart of \lemref{statcomp} for the present
case, in that it provides a nonlinear VAR representation for the short-memory
components of the model: the equilibrium errors $\xi_{t}\defeq\beta(y_{t})^{\trans}z_{t}$
and the differences $\Delta z_{t}^{\ast}$, as collected in
\[
\b{\xi}_{t}\defeq\b{\beta}(y_{t})^{\trans}\b z_{t}=(\xi_{t}^{\trans},\Delta z_{t}^{\ast\trans},\ldots,\Delta z_{t-k+2}^{\ast\trans})^{\trans}.
\]

\begin{lem}
\label{lem:nlvarcase2}Suppose that \assref{dgp-canon}, \assref{co}
and \assref{coclas}\enuref{cocens:rank} hold. Let
\begin{equation}
\delta_{t}\defeq\begin{cases}
0 & \text{if }y_{t-1}\cdot y_{t}\geq0,\\
y_{t}/\Delta y_{t} & \text{if }y_{t-1}\cdot y_{t}<0
\end{cases}\label{eq:delocens}
\end{equation}
for $t\in\naturals$, which takes values only in $[0,1]$, and define
$\abv{\b{\beta}}_{t}\defeq(1-\delta_{t})\b{\beta}(y_{t-1})+\delta_{t}\b{\beta}(y_{t})$.
Then $\{\b{\xi}_{t}\}_{t\in\naturals}$ satisfies
\begin{equation}
\b{\xi}_{t}=\abv{\b{\beta}}_{t}^{\trans}\b c+[I_{r+(k-1)(p+1)}+\abv{\b{\beta}}_{t}^{\trans}\b{\alpha}]\b{\xi}_{t-1}+\abv{\b{\beta}}_{t}^{\trans}\b u_{t}.\label{eq:xicase2}
\end{equation}
\end{lem}
\begin{proof}
Using the definition of $\b{\xi}_{t}$, we may write \eqref{vecmcoclas}
as
\begin{equation}
\Delta\b z_{t}=\b c+\b{\alpha}\b{\xi}_{t-1}+\b u_{t}.\label{eq:delzxi}
\end{equation}
Let $\Delta\b{\beta}(y_{t})\defeq\b{\beta}(y_{t})-\b{\beta}(y_{t-1})$,
and note that $[\Delta\b{\beta}(y_{t})]^{\trans}\b z_{t}=[\Delta\b{\beta}(y_{t})]_{1:1}^{\trans}y_{t}$,
for $[\Delta\b{\beta}(y_{t})]_{1:1}$ the first row of $\Delta\b{\beta}(y_{t})$,
since all its other rows are zero. Thus with the aid of \eqref{delzxi}
we obtain
\begin{align}
\b{\xi}_{t} & =\b{\beta}(y_{t})^{\trans}\b z_{t}=\b{\beta}(y_{t-1})^{\trans}(\b z_{t-1}+\Delta\b z_{t})+[\Delta\b{\beta}(y_{t})]^{\trans}\b z_{t}\nonumber \\
 & =\b{\xi}_{t-1}+\b{\beta}(y_{t-1})^{\trans}\Delta\b z_{t}+[\Delta\b{\beta}(y_{t})]^{\trans}\b z_{t}\nonumber \\
 & =\b{\beta}(y_{t-1})^{\trans}\b c+[I+\b{\beta}(y_{t-1})^{\trans}\b{\alpha}]\b{\xi}_{t-1}+\b{\beta}(y_{t-1})^{\trans}\b u_{t}+[\Delta\b{\beta}(y_{t})]_{1:1}^{\trans}y_{t}.\label{eq:xirepcase2}
\end{align}
By construction, for $\delta_{t}$ as defined in \eqref{delocens},
we have that $y_{t}=\delta_{t}\Delta y_{t}$ whenever $y_{t-1}$ and
$y_{t}$ have opposite signs. Since this is also the only case in
which $\Delta\b{\beta}(y_{t})\neq0$, it follows that
\begin{align*}
[\Delta\b{\beta}(y_{t})]_{1:1}^{\trans}y_{t}=\delta_{t}[\Delta\b{\beta}(y_{t})]_{1:1}^{\trans}\Delta y_{t} & =\delta_{t}[\Delta\b{\beta}(y_{t})]^{\trans}\Delta\b z_{t}\\
 & =\delta_{t}[\b{\beta}(y_{t})-\b{\beta}(y_{t-1})]^{\trans}(\b c+\b{\alpha}\b{\xi}_{t-1}+\b u_{t}).
\end{align*}
Substituting this into \eqref{xirepcase2} and recalling the definition
of $\abv{\b{\beta}}_{t}$ thus yields \eqref{xicase2}.

Finally, observe that $\delta_{t}$ is nonzero only if $y_{t-1}$
and $y_{t}$ have opposite signs: in which case $\smlabs{y_{t}}\leq\smlabs{\Delta y_{t}}$,
and so $\smlabs{\delta_{t}}\leq1$. Since $y_{t}$ and $\Delta y_{t}$
must also have the same sign in this case, we deduce that $\delta_{t}\in[0,1]$,
as claimed.
\end{proof}
Since $\abv{\b{\beta}}_{t}$ is always a convex combination of $\b{\beta}(-1)$
and $\b{\beta}(+1)$, the autoregressive matrices in the representation
\eqref{xicase2} are contained in the set
\[
\mathcal{A}\defeq\{I+((1-\delta)\b{\beta}(+1)+\delta\b{\beta}(-1))^{\trans}\b{\alpha}\mid\delta\in[0,1]\}.
\]
By \citet[Prop.~1.8]{Jungers09}, $\rho_{\jsr}(\mathcal{A})=\rho_{\jsr}(\{I+\b{\beta}(+1)^{\trans}\b{\alpha},I+\b{\beta}(-1)^{\trans}\b{\alpha}\})$,
which is strictly less than unity by \assref{coclas}\enuref{coclas:jsr}.
That $\b{\xi}_{t}\sim I^{\ast}(0)$ now follows from \assref{coerr},
\assref{coclas}\enuref{coclas:init} and \lemref{stochbound}. In
particular, $\xi_{t}=\beta(y_{t})^{\trans}z_{t}$, $\Delta y_{t}^{+}$,
$\Delta y_{t}^{-}$ and $\Delta x_{t}$ are all $I^{\ast}(0)$, and
hence so too is
\[
\Delta y_{t}=(y_{t}-y_{t-1})=(y_{t}^{+}-y_{t-1}^{+})+(y_{t}^{-}-y_{t-1}^{-})=\Delta y_{t}^{+}+\Delta y_{t}^{-},
\]
as claimed in part~\enuref{thm:coclas:i0} of the theorem.

For the purposes of proving part~\enuref{thm:coclas:wklim} of the
theorem, we shall initially suppose (as we did in the proof of \thmref{cocens})
that \assref{det} is replaced by the weaker condition 
\begin{equation}
e_{1}^{\trans}P_{\beta_{\perp}}(+1)c=0,\label{eq:detprime-case2}
\end{equation}
for $P_{\beta_{\perp}}(+1)$ as defined in \eqref{projk1}, noting
only at the very end of the proof how the result simplifies when \assref{det}
holds.

Our next step is to extract the common trend components. Recall the
definitions of $\beta_{\perp}(y)$ and $\Gamma(1;y)$ in \eqref{betaperpfn},
and note that we may take
\begin{align*}
\b{\alpha}_{\perp}^{\trans} & =\alpha_{\perp}^{\trans}\begin{bmatrix}I_{p} & -\Gamma_{1} & \cdots & -\Gamma_{k-1}\end{bmatrix}, & \b{\beta}_{\perp}(y)^{\trans} & \defeq\beta_{\perp}(y)^{\trans}\begin{bmatrix}I_{p} & S(y)^{\trans} & \cdots & S(y)^{\trans}\end{bmatrix},
\end{align*}
as the orthocomplements of $\b{\alpha}$ and $\b{\beta}(y)$ respectively.
Observe that
\begin{equation}
\b{\alpha}_{\perp}^{\trans}\b{\beta}_{\perp}(y)=\alpha_{\perp}^{\trans}\left[I_{p}-\sum_{i=1}^{k-1}\Gamma_{i}S(y)\right]\beta_{\perp}(y)=\alpha_{\perp}^{\trans}\Gamma(1;y)\beta_{\perp}(y)\label{eq:aperpbperp}
\end{equation}
where the r.h.s.\ is invertible for each $y\in\reals$ by \assref{coclas}\ref{enu:coclas:det}.
By Lemma~A.1 in \citet{Han05EctJ}, $\b{\beta}(y)^{\trans}\b{\alpha}$
is therefore also invertible, whence it follows that the complementary
projections 
\begin{align*}
\b P_{\b{\beta}_{\perp}}(y) & \defeq\b{\beta}_{\perp}(y)[\b{\alpha}_{\perp}^{\trans}\b{\beta}_{\perp}(y)]^{-1}\b{\alpha}_{\perp}^{\trans} & \b P_{\b{\alpha}}(y) & \defeq\b{\alpha}[\b{\beta}(y)^{\trans}\b{\alpha}]^{-1}\b{\beta}(y)^{\trans}
\end{align*}
are well defined for each $y\in\reals$.

Now premultiplying \eqref{vecmcoclas} by $\b{\alpha}_{\perp}^{\trans}$
and cumulating yields
\[
\b{\alpha}_{\perp}^{\trans}\b z_{t}=\b{\alpha}_{\perp}^{\trans}\b z_{0}+\b{\alpha}_{\perp}^{\trans}\sum_{s=1}^{t}\b u_{s}+(\b{\alpha}_{\perp}^{\trans}\b c)t.
\]
Hence
\begin{equation}
\b z_{t}=[\b P_{\b{\beta}_{\perp}}(y_{t})+\b P_{\b{\alpha}}(y_{t})]\b z_{t}=\b P_{\b{\beta}_{\perp}}(y_{t})\left(\b z_{0}+\sum_{s=1}^{t}\b u_{s}+\b ct\right)+\b{\alpha}(\b{\beta}(y_{t})^{\trans}\b{\alpha})^{-1}\b{\xi}_{t}.\label{eq:z-costat-proj}
\end{equation}
In view of \eqref{aperpbperp} and the definitions of $\b{\alpha}_{\perp}$
and $\b{\beta}_{\perp}(y)$, the upper left $p\times p$ block of
$\b P_{\b{\beta}_{\perp}}(y)$ is $P_{\beta_{\perp}}(y)$, where the
latter is as defined in \eqref{projk1}. Under \assref{coclas}\enuref{coclas:init},
$n^{-1/2}z_{t}\inprob{\cal Z}_{0}$ for all $t\in\{-k+1,\ldots,0\}$,
and so
\[
n^{-1/2}\b{\alpha}_{\perp}^{\trans}\b z_{0}=n^{-1/2}\alpha_{\perp}^{\trans}\left[z_{0}-\sum_{i=1}^{k-1}\Gamma_{i}z_{-i}^{\ast}\right]\inprob\alpha_{\perp}^{\trans}\Gamma(1;{\cal Y}_{0}){\cal Z}_{0}.
\]
Hence, using the result of part~\ref{enu:thm:coclas:i0}, the first
$p$ rows of \eqref{z-costat-proj} give
\begin{equation}
\begin{bmatrix}y_{t}\\
x_{t}
\end{bmatrix}=P_{\beta_{\perp}}(y_{t})\left[n^{1/2}\Gamma(1;{\cal Y}_{0}){\cal Z}_{0}+\sum_{s=1}^{t}u_{s}+ct\right]+o_{p}(n^{1/2})\label{eq:beforewkc}
\end{equation}
uniformly in $t\in\{1,\ldots,n\}$. To determine the weak limit of
the preceding upon standardisation (and possibly linear detrending),
we first provide an auxiliary result on the mapping 
\begin{equation}
g(y,u)\defeq P_{\beta_{\perp}}(y)u.\label{eq:gdef}
\end{equation}

\begin{lem}
\label{lem:gcty}Suppose \assref{co} and \assref{coclas} hold. Then
\begin{enumerate}[itemsep=2pt,topsep=3pt]
\item taking $\vartheta^{\trans}\defeq e_{1}^{\trans}P_{\beta_{\perp}}(+1)\neq0$,
there exists a $\mu>0$ such that
\[
e_{1}^{\trans}P_{\beta_{\perp}}(y)=[\indic^{+}(y)+\mu\indic^{-}(y)]\vartheta^{\trans}\eqdef h(y)\vartheta^{\trans};
\]
\item if $u\in\reals^{p}$ is such that $\vartheta^{\trans}u=0$, then $P_{\beta_{\perp}}(+1)u=P_{\beta_{\perp}}(-1)u$;
and 
\item $g$ is Lipschitz continuous at every point in $D_{g}\defeq\{(y,u)\in\reals^{p+1}\mid y=h(y)\vartheta^{\trans}u\}$,
in the sense that for every $(y,u)\in D_{g}$ and $(y^{\prime},u^{\prime})\in\reals\times\reals^{p}$,
\begin{equation}
\smlabs{g(y,u)-g(y^{\prime},u^{\prime})}\leq C(\smlabs{y-y^{\prime}}+\smlnorm{u-u^{\prime}}).\label{eq:gLip}
\end{equation}
\end{enumerate}
\end{lem}
\begin{proof}
\textbf{(i).} Recall from \eqref{pikink} that $\beta^{\pm\trans}=[\beta_{y}^{\pm},\beta_{x}^{\trans}]=\beta_{x}^{\trans}[\theta^{\pm},I_{p-1}]$
for some $\theta^{\pm}\in\reals^{p-1}$, and take
\begin{equation}
\beta_{\perp}^{\pm}=\begin{bmatrix}1 & 0\\
-\theta^{\pm} & \beta_{x,\perp}
\end{bmatrix}=\beta_{\perp}(\pm1)\label{eq:betaperp}
\end{equation}
as per \eqref{betaperpfn}, so that $e_{1}^{\trans}\beta_{\perp}^{\pm}=e_{1}^{\trans}$;
it follows that $e_{1}^{\trans}P_{\beta_{\perp}}(+1)\neq0$. Since
\[
\Gamma^{\pm}(1)\beta_{\perp}^{\pm}=\begin{bmatrix}\gamma^{\pm}(1) & \Gamma^{x}(1)\end{bmatrix}\begin{bmatrix}1 & 0\\
-\theta^{\pm} & \beta_{x,\perp}
\end{bmatrix}=\begin{bmatrix}\gamma^{\pm}(1)-\Gamma^{x}(1)\theta^{\pm} & \Gamma^{x}(1)\beta_{x,\perp}\end{bmatrix}
\]
we have 
\[
\Gamma^{-}(1)\beta_{\perp}^{-}-\Gamma^{+}(1)\beta_{\perp}^{+}=\delta_{\gamma}e_{1}^{\trans},
\]
where $\delta_{\gamma}\defeq[\gamma^{-}(1)-\gamma^{+}(1)-\Gamma^{x}(1)(\theta^{-}-\theta^{+})]$.
Under \assref{coclas}\enuref{coclas:det}, we may apply \lemref{rk1perturb}
with $A=\alpha_{\perp}^{\trans}$, $B_{1}=\Gamma^{-}(1)\beta_{\perp}^{-}$,
$B_{2}=\Gamma^{+}(1)\beta_{\perp}^{+}$ and $d=e_{1}$, to obtain
\[
e_{1}^{\trans}[\alpha_{\perp}^{\trans}\Gamma^{-}(1)\beta_{\perp}^{-}]^{-1}=\mu e_{1}^{\trans}[\alpha_{\perp}^{\trans}\Gamma^{+}(1)\beta_{\perp}^{+}]^{-1}
\]
for some $\mu>0$. Hence, for $\vartheta^{\trans}=e_{1}^{\trans}P_{\beta_{\perp}}(+1)=e_{1}^{\trans}[\alpha_{\perp}^{\trans}\Gamma^{+}(1)\beta_{\perp}^{+}]^{-1}\alpha_{\perp}^{\trans}$
\begin{align*}
e_{1}^{\trans}P_{\beta_{\perp}}(y) & =e_{1}^{\trans}\beta_{\perp}(y)[\alpha_{\perp}^{\trans}\Gamma(1;y)\beta_{\perp}(y)]^{-1}\alpha_{\perp}^{\trans}=e_{1}^{\trans}[\alpha_{\perp}^{\trans}\Gamma(1;y)\beta_{\perp}(y)]^{-1}\alpha_{\perp}^{\trans}\\
 & =[\indic^{+}(y)+\mu\indic^{-}(y)]\vartheta^{\trans}=h(y)\vartheta^{\trans}
\end{align*}

\textbf{(ii).} Let $u\in\reals^{p}$ be such that $\vartheta^{\trans}u=0$.
Then
\[
0=\mu\vartheta^{\trans}u=e_{1}^{\trans}[\alpha_{\perp}^{\trans}\Gamma^{-}(1)\beta_{\perp}^{-}]^{-1}\alpha_{\perp}^{\trans}u=d^{\trans}(A^{\trans}B_{1})^{-1}v
\]
where $v\defeq\alpha_{\perp}^{\trans}u$, and $A$, $B_{1}$, $B_{2}$
and $d$ are as above. Hence by \lemref{rk1perturb},
\begin{align*}
0 & =[(A^{\trans}B_{2})^{-1}-(A^{\trans}B_{1})^{-1}]v=[(\alpha_{\perp}^{\trans}\Gamma^{+}(1)\beta_{\perp}^{+})^{-1}-(\alpha_{\perp}^{\trans}\Gamma^{-}(1)\beta_{\perp}^{-})^{-1}]\alpha_{\perp}^{\trans}u.
\end{align*}
Finally, noting that \eqref{betaperp} implies $\beta_{\perp}^{+}-\beta_{\perp}^{-}=\delta_{\beta}e_{1}^{\trans}$
for $\delta_{\beta}\defeq(0,(\theta^{-}-\theta^{+})^{\trans})^{\trans}$,
we have
\begin{align*}
[P_{\beta_{\perp}}(+1)-P_{\beta_{\perp}}(-1)]u & =\beta_{\perp}^{+}[(\alpha_{\perp}^{\trans}\Gamma^{+}(1)\beta_{\perp}^{+})^{-1}-(\alpha_{\perp}^{\trans}\Gamma^{-}(1)\beta_{\perp}^{-})^{-1}]\alpha_{\perp}u\\
 & \qquad\qquad-(\beta_{\perp}^{-}-\beta_{\perp}^{+})(\alpha_{\perp}^{\trans}\Gamma^{-}(1)\beta_{\perp}^{-})^{-1}\alpha_{\perp}u\\
 & =\delta_{\beta}e_{1}^{\trans}(\alpha_{\perp}^{\trans}\Gamma^{-}(1)\beta_{\perp}^{-})^{-1}\alpha_{\perp}^{\trans}u\\
 & =_{(1)}\delta_{\beta}e_{1}^{\trans}P_{\beta_{\perp}}(-1)u=_{(2)}\mu\delta_{\beta}\vartheta^{\trans}u=_{(3)}0,
\end{align*}
where $=_{(1)}$ holds by $e_{1}^{\trans}\beta_{\perp}^{-}=e_{1}^{\trans}$,
$=_{(2)}$ by the result of part~(i), and $=_{(3)}$ by our choice
of $u$.

\textbf{(iii).} Now let $(y,u)\in D_{g}$ and $(y^{\prime},u^{\prime})\in\reals\times\reals^{p}$.
Suppose first that $h(y)=h(y^{\prime})$: then
\[
\smlnorm{g(y,u)-g(y^{\prime},u^{\prime})}\leq\max\{\smlnorm{P_{\beta_{\perp}}(+1)},\smlnorm{P_{\beta_{\perp}}(-1)}\}\smlnorm{u-u^{\prime}}\leq C\smlnorm{u-u^{\prime}}.
\]
Suppose next that $h(y)\neq h(y^{\prime})$; then either: (a) $y\geq0$
and $y^{\prime}<0$; or (b) $y<0$ and $y^{\prime}\geq0$. Suppose
(a) holds. Then $\vartheta^{\trans}u\geq0$, and so
\[
\smlabs{y-y^{\prime}}=y-y^{\prime}=h(y)\vartheta^{\trans}u-y^{\prime}\geq h(y)\vartheta^{\trans}u\geq0
\]
whence
\begin{equation}
\smlabs{\vartheta^{\trans}u}\leq C_{0}\smlabs{y-y^{\prime}}\label{eq:vartheta-u}
\end{equation}
where $C_{0}\defeq\max\{1,\mu^{-1}\}\in(0,\infty)$. (If (b) holds
instead, then \eqref{vartheta-u} follows by an analogous argument.)
Now let $u_{\vartheta_{\perp}}\defeq u-\vartheta(\vartheta^{\trans}\vartheta)^{-1}\vartheta^{\trans}u$.
Then $\vartheta^{\trans}u_{\vartheta_{\perp}}=0$, and so
\begin{align}
g(y,u)-g(y^{\prime},u^{\prime}) & =P_{\beta_{\perp}}(+1)(u-u_{\vartheta_{\perp}}+u_{\vartheta_{\perp}})-P_{\beta_{\perp}}(-1)u^{\prime}\nonumber \\
 & =P_{\beta_{\perp}}(+1)(u-u_{\vartheta_{\perp}})+P_{\beta_{\perp}}(-1)(u_{\vartheta_{\perp}}-u^{\prime}),\label{eq:gdiff}
\end{align}
where the second equality follows by the result of part~(ii). By
\eqref{vartheta-u},
\begin{equation}
\smlnorm{u-u_{\vartheta_{\perp}}}\leq\smlnorm{\vartheta(\vartheta^{\trans}\vartheta)^{-1}}\smlabs{\vartheta^{\trans}u}\leq C_{1}\smlabs{y-y^{\prime}}\label{eq:u-utheta}
\end{equation}
where $C_{1}\defeq\smlnorm{\vartheta(\vartheta^{\trans}\vartheta)^{-1}}C_{0}$.
Further,
\begin{equation}
\smlnorm{u_{\vartheta_{\perp}}-u^{\prime}}\leq\smlnorm{u_{\vartheta_{\perp}}-u}+\smlnorm{u^{\prime}-u}\leq C_{1}\smlabs{y-y^{\prime}}+\smlnorm{u^{\prime}-u}.\label{eq:uprime-utheta}
\end{equation}
Thus \eqref{gLip} follows from \eqref{gdiff}--\eqref{uprime-utheta}
\end{proof}

Returning now to \eqref{beforewkc}, we note that \eqref{detprime-case2}
and \lemref{gcty} imply that $P_{\beta_{\perp}}(y)c=P_{\beta_{\perp}}(+1)c$
for all $y\in\reals$. Therefore by defining the detrended process
\[
x_{t}^{d}\defeq x_{t}-[E_{-1}^{\trans}P_{\beta_{\perp}}(+1)c]t,
\]
for $E_{-1}$ the final $p-1$ columns of $I_{p}$, we may rewrite
\eqref{beforewkc} as
\[
n^{-1/2}\begin{bmatrix}y_{t}\\
x_{t}^{d}
\end{bmatrix}=P_{\beta_{\perp}}(y_{t})\left[\Gamma(1;{\cal Y}_{0}){\cal Z}_{0}+n^{-1/2}\sum_{s=1}^{t}u_{s}\right]+o_{p}(1)
\]
uniformly in $t\in\{1,\ldots,n\}$, and so for $\lambda\in[0,1]$,
\begin{equation}
\begin{bmatrix}Y_{n}(\lambda)\\
X_{n}^{d}(\lambda)
\end{bmatrix}=P_{\beta_{\perp}}[Y_{n}(\lambda)]U_{n,0}(\lambda)+\abv o_{p}(1),\label{eq:wkcoclas}
\end{equation}
where $X_{n}^{d}(\lambda)\defeq n^{-1/2}x_{\smlfloor{n\lambda}}^{d}$,
$U_{n,0}(\lambda)\defeq\Gamma(1;{\cal Y}_{0}){\cal Z}_{0}+U_{n}(\lambda)$,
$\abv o_{p}(1)$ denotes a term that converges in probability to zero
uniformly over $\lambda\in[0,1]$, and we have used the fact that
$P_{\beta_{\perp}}(y)$ depends only on the sign of $y$.

It remains to determine the weak limit of the r.h.s.\ of \eqref{wkcoclas}.
By \lemref{gcty}, $e_{1}^{\trans}P_{\beta_{\perp}}(y)=h(y)\vartheta^{\trans}$,
and the first equation of \eqref{wkcoclas} becomes
\[
Y_{n}(\lambda)=h[Y_{n}(\lambda)]\vartheta^{\trans}U_{n,0}(\lambda)+\abv o_{p}(1).
\]
Now let $f(y)\defeq h(y)^{-1}y$. This is Lipschitz, and because $h(y)>0$
is bounded away from zero and infinity, it has an inverse $f^{-1}(y^{\ast})=h(y^{\ast})y^{\ast}$
that is also Lipschitz. Thus
\[
Y_{n}^{\ast}(\lambda)\defeq f[Y_{n}(\lambda)]=\vartheta^{\trans}U_{n,0}(\lambda)+h[Y_{n}(\lambda)]^{-1}\abv o_{p}(1)\wkc\vartheta^{\trans}U_{0}(\lambda)
\]
on $D[0,1]$, since $\sup_{\lambda\in[0,1]}h[Y_{n}(\lambda)]^{-1}\leq\max\{1,\mu^{-1}\}$.
Hence, by the CMT,
\begin{multline}
Y_{n}(\lambda)=f^{-1}[Y_{n}^{\ast}(\lambda)]\wkc f^{-1}[\vartheta^{\trans}U_{0}(\lambda)]\\
=h[\vartheta^{\trans}U_{0}(\lambda)]\vartheta^{\trans}U_{0}(\lambda)=e_{1}^{\trans}P_{\beta_{\perp}}[\vartheta^{\trans}U_{0}(\lambda)]U_{0}(\lambda)=Y(\lambda)\label{eq:Ylim}
\end{multline}
on $D[0,1]$. Now consider the remaining $p-1$ equations in \eqref{wkcoclas}.
Recalling the definition of $g$ in \eqref{gdef} above, we have
\[
X_{n}^{d}(\lambda)=E_{-1}g[Y_{n}(\lambda),U_{n,0}(\lambda)]+\abv o_{p}(1).
\]
By \eqref{Ylim}, $(Y_{n},U_{n,0})\wkc(Y,U_{0})$, which have continuous
paths and concentrate in the set
\[
\abv D\defeq\{(y,u)\in D_{\reals^{p+1}}[0,1]\mid[y(\lambda),u(\lambda)]\in D_{g}\text{ for all }\lambda\in[0,1]\}.
\]
By \lemref{gcty}, if $(y_{n},u_{n})\in D_{\reals^{p+1}}[0,1]$ converge
uniformly to $(y,u)\in\abv D$, then $g[y_{n}(\lambda),u_{n}(\lambda)]\goesto g[y(\lambda),u(\lambda)]$
uniformly over $\lambda\in[0,1]$. Hence by the CMT,
\begin{align}
X_{n}^{d}(\lambda)\wkc E_{-1}g[Y(\lambda),U_{0}(\lambda)] & =E_{-1}P_{\beta_{\perp}}[Y(\lambda)]U_{0}(\lambda)=E_{-1}P_{\beta_{\perp}}[\vartheta^{\trans}U_{0}(\lambda)]U_{0}(\lambda)\label{eq:Xlim}
\end{align}
on $D_{\reals^{p-1}}[0,1]$.

Finally, we suppose that \assref{det} holds in place of \eqref{detprime-case2}.
In this case, $P_{\beta_{\perp}}(+1)c=0$ and therefore $x_{t}^{d}=x_{t}$
for all $t\in\{1,\ldots,T\}$. Thus part~\enuref{thm:coclas:wklim}
of the theorem follows from \eqref{Ylim} and \eqref{Xlim}.\hfill\qedsymbol{}

\subsection{Proof of \thmref{costat}}

\label{app:proof-of-costat}

We first restate a simplified version of Lemma~\lemergodic{} from
\citet{DMW23stat}.
\begin{lem}
\label{lem:ergodic}Let $\{w_{t}\}_{t\in\naturals}$ be a random sequence
in $\reals^{p}$, and $\b w_{t-1}\defeq(w_{t-1}^{\trans},\ldots,w_{t-k}^{\trans})^{\trans}$.
Suppose
\begin{equation}
w_{t}=\mu(\b w_{t-1})+\b A(\b w_{t-1})\b w_{t-1}+u_{t}\label{eq:wsystem}
\end{equation}
where $\b A:\reals^{kp}\setmap\reals^{p}$ is such that $\b w\elmap\b A(\b w)\b w$
is continuous and homogeneous of degree one, $\mu(\b w)=o(\smlnorm{\b w})$
as $\smlnorm{\b w}\goesto\infty$, and that $\{u_{t}\}$ satisfies
\assref{err}. If the associated deterministic system
\[
\hat{w}_{t}=\b A(\hat{\b w}_{t-1})\hat{\b w}_{t-1},
\]
is stable (in the sense that $\hat{w}_{t}\goesto0$ for every initialisation
$\hat{\b w}_{0}\in\reals^{kp}$), then $\{w_{t}\}_{t\in\naturals}$
is $\mathcal{Q}$-geometrically ergodic, for $\mathcal{Q}(w)\defeq(1+\smlnorm w^{m_{0}})$,
with a stationary distribution that is absolutely continuous with
respect to Lebesgue measure, and which has finite $m_{0}$th moment.
\end{lem}

To prove \thmref{costat}, we need to put the model into a VECM form
different again from that used in the analysis of the preceding cases.
Recalling $z_{t}^{\trans}=(y_{t},x_{t}^{\trans})$, the canonical
CKSVAR \eqref{canon-var} may be written as
\[
z_{t}=c+\sum_{i=1}^{k}\Phi_{i}(y_{t-i})z_{t-i},
\]
where $\Phi_{i}(y)\defeq[\phi_{i}(y),\Phi^{x}]$ for $\phi_{i}(y)\defeq\phi_{i}^{+}\indic^{+}(y)+\phi_{i}^{-}\indic^{-}(y)$,
and hence
\[
\Delta z_{t}=c+\Pi(\b y_{t-1})z_{t-1}+\sum_{i=1}^{k-1}\Gamma_{i}(\b y_{t-1})\Delta z_{t-i}+u_{t}.
\]
where $\Pi(\b y_{t-i})\defeq\sum_{i=1}^{k}\Phi_{i}(y_{t-i})-I_{p}$,
and as per \eqref{gamstatinitial} above,
\[
\Gamma_{i}(\b y_{t-1})=-\sum_{j=i+1}^{k}\Phi_{j}(y_{t-j})=\begin{bmatrix}-\sum_{j=i+1}^{k}\phi_{j}(y_{t-j}), & -\sum_{j=i+1}^{k}\Phi_{j}^{x}\end{bmatrix}.
\]
Taking $\b z_{t}\defeq(z_{t}^{\trans},\ldots,z_{t-k+1}^{\trans})^{\trans}$,
we may write the system in companion form as
\begin{equation}
\Delta\b z_{t}=\b c+\b{\Pi}(\b y_{t-1})\b z_{t-1}+\b u_{t}\label{eq:compcase3}
\end{equation}
where for $\b{\alpha}(\b y_{t-1})$ and $\b{\beta}$ as defined in
\eqref{alphbetcase3},
\begin{align*}
\b{\Pi}(\boldsymbol{y}) & =\begin{bmatrix}\Pi(\b y)+\Gamma_{1}(\b y) & -\Gamma_{1}(\b y)+\Gamma_{2}(\b y) & \cdots & \Gamma_{k-1}(\b y)\\
I_{p} & -I_{p}\\
 & \ddots & \ddots\\
 &  & I_{p} & -I_{p}
\end{bmatrix} & \b c & \defeq\begin{bmatrix}c\\
0_{p}\\
\vdots\\
0_{p}
\end{bmatrix} & \b u_{t} & \defeq\begin{bmatrix}u_{t}\\
0_{p}\\
\vdots\\
0_{p}
\end{bmatrix}\\
 & =\b{\alpha}(\b y)\b{\beta}^{\trans}
\end{align*}
Hence for $\b{\xi}_{t}\defeq\b{\beta}^{\trans}\b z_{t}=(\xi_{t}^{\trans},\Delta z_{t}^{\trans},\ldots,\Delta z_{t-k+2}^{\trans})^{\trans}$,
where $\xi_{t}=\beta^{\trans}z_{t}$, it follows from \eqref{compcase3}
that
\begin{equation}
\b{\xi}_{t}=\b{\beta}^{\trans}\b c+(I_{p(k-1)+r}+\b{\beta}^{\trans}\b{\alpha}(\b y_{t-1}))\b{\xi}_{t-1}+\b{\beta}^{\trans}\b u_{t}.\label{eq:bxistat}
\end{equation}
To `close' \eqref{bxistat}, we need only to recognise that with
the ordering of the cointegrating vectors given in \eqref{costatfactor},
$y_{t-1}=e_{r,1}^{\trans}\xi_{t-1}$ and
\[
y_{t-i}=y_{t-1}-\sum_{j=1}^{i-1}\Delta y_{t-j}=e_{r,1}^{\trans}\xi_{t-1}-\sum_{j=1}^{i-1}e_{p,1}^{\trans}\Delta z_{t-j}
\]
so that $\b y_{t-1}$ can be extracted from $\b{\xi}_{t-1}$ via
\begin{equation}
\b y_{t-1}=\begin{bmatrix}y_{t-1}\\
y_{t-2}\\
\vdots\\
y_{t-k}
\end{bmatrix}=\begin{bmatrix}e_{r,1}^{\trans}\\
e_{r,1}^{\trans} & -e_{p,1}^{\trans}\\
\vdots & \vdots & \ddots\\
e_{r,1}^{\trans} & -e_{p,1}^{\trans} & \cdots & -e_{p,1}^{\trans}
\end{bmatrix}\begin{bmatrix}\xi_{t-1}\\
\Delta z_{t-1}\\
\vdots\\
\Delta z_{t-k+1}
\end{bmatrix}\eqdef\b G_{\xi}\b{\xi}_{t-1}.\label{eq:xitoy}
\end{equation}
Hence
\begin{align}
\b{\xi}_{t} & =\b{\beta}^{\trans}\b c+[I_{p(k-1)+r}+\b{\beta}^{\trans}\b{\alpha}(\b G_{\xi}\b{\xi}_{t-1})]\b{\xi}_{t-1}+\b{\beta}^{\trans}\b u_{t}.\label{eq:xisystem}
\end{align}

We would like to apply \lemref{ergodic} to this system, but we are
prevented from doing so because $\b{\beta}^{\trans}\b u_{t}$ does
not satisfy the required condition on the innovations. To remedy this,
we modify the state vector of the system as follows. Let $\abv{\beta}\defeq\beta(\beta^{\trans}\beta)^{-1}$,
$\abv{\beta}_{\perp}\defeq\beta_{\perp}(\beta_{\perp}^{\trans}\beta_{\perp})^{-1}$,
and recall $\chi_{t}=(\xi_{t}^{\trans},(\beta_{\perp}^{\trans}\Delta z_{t})^{\trans})^{\trans}$
and $\b{\chi}_{t}=(\chi_{t}^{\trans},\ldots,\chi_{t-k+1}^{\trans})^{\trans}$.
Since the r.h.s.\ of
\[
\Delta z_{t}=(\abv{\beta}_{\perp}\beta_{\perp}^{\trans}+\abv{\beta}\beta^{\trans})\Delta z_{t}=\abv{\beta}_{\perp}(\beta_{\perp}^{\trans}\Delta z_{t})+\abv{\beta}(\xi_{t}-\xi_{t-1})
\]
is a linear function of $\chi_{t}$ and $\chi_{t-1}$, there is a
matrix $\b H$ such that $\b{\xi}_{t}=\b H\b{\chi}_{t}$. Defining
\begin{align*}
M & \defeq\begin{bmatrix}I_{r} & 0 & 0_{r\times p(k-2)}\\
0 & \beta_{\perp}^{\trans} & 0_{q\times p(k-2)}
\end{bmatrix} & B & \defeq\begin{bmatrix}\beta^{\trans}\\
\beta_{\perp}^{\trans}
\end{bmatrix}
\end{align*}
we have $M\b{\xi}_{t}=\chi_{t}$, $M\b{\beta}^{\trans}\b c=Bc$ and
$M\b{\beta}^{\trans}\b u_{t}=Bu_{t}$, and so premultiplying \eqref{xisystem}
by $M$ yields
\begin{equation}
\chi_{t}=Bc+M[I_{p(k-1)+r}+\b{\beta}^{\trans}\b{\alpha}(\b G\b{\chi}_{t-1})]\b H\b{\chi}_{t-1}+Bu_{t}\label{eq:chisystem}
\end{equation}
where $\b G\defeq\b G_{\xi}\b H$.

We may now apply \lemref{ergodic} to \eqref{chisystem}. Since $B$
has full rank, \assref{err} is satisfied with $Bu_{t}$ taking the
place of $u_{t}$. Regarding the other conditions of the lemma, let
$\b F$ be such that $\b F\b z_{t-1}=\b y_{t-1}$. By the continuity
of the r.h.s.\ of the CKSVAR, the map
\[
\b z_{t-1}\elmap[I_{kp}+\b{\Pi}(\b F\b z_{t-1})]\b z_{t-1}
\]
which simply replicates the autoregressive part of the companion form
\eqref{compcase3} of the CKSVAR (in levels), is continuous. Hence
so too is 
\[
\b z_{t-1}\elmap[I_{p(k-1)+r}+\b{\beta}^{\trans}\b{\alpha}(\b F\b z_{t-1})]\b{\beta}^{\trans}\b z_{t-1}.
\]
$\b{\xi}_{t-1}=\b{\beta}^{\trans}\b z_{t-1}$ varies freely, and by
our arguments following \eqref{xisystem}, $\b F\b z_{t-1}$ depends
(continuously) only on the elements of $\b{\xi}_{t-1}$. Indeed, $\b F\b z_{t-1}=\b y_{t-1}=\b G_{\xi}\b{\xi}_{t-1}$
where $\b G_{\xi}$ is as defined in \eqref{xitoy}. Hence the mapping
\[
\b{\xi}_{t-1}\elmap[I_{p(k-1)+r}+\b{\beta}^{\trans}\b{\alpha}(\b G_{\xi}\b{\xi}_{t-1})]\b{\xi}_{t-1},
\]
that appears on the r.h.s.\ of \eqref{xisystem}, is also continuous.
Thus
\[
\b{\chi}_{t-1}\elmap M[I_{p(k-1)+r}+\b{\beta}^{\trans}\b{\alpha}(\b G\b{\chi}_{t-1})]\b H\b{\chi}_{t-1}
\]
is continuous by the composition of continuous maps. Since $\b{\alpha}(\b y)$
depends only on the signs of the elements of $\b y$, the preceding
is also homogeneous of degree one. Finally, since it is possible to
write $\b{\chi}_{t+1}$ as a (deterministic) function of $\xi_{t+1}$
and $\b{\xi}_{t}$, the deterministic system associated to \eqref{chisystem}
must be stable if that associated to \eqref{xisystem} is stable,
where the latter is maintained under \assref{costat}\ref{enu:costat:stable}.
The result now follows by \lemref{ergodic}.\hfill\qedsymbol{}

\subsection{Proof of \thmref{codet}}

Suppose initially that \assref{dgp-canon} holds. With the exception
of the final claim made in each, the proofs of Theorems~\ref{thm:cocens}
and \ref{thm:coclas} proceed under conditions \eqref{detprime-case1}
and \eqref{detprime-case2} respectively, which correspond to \assref{detprime}
rather than \assref{det}, and yield the claimed limit weak limits
for $X_{n}^{d}$. Thus for $z_{t}=(y_{t},x_{t}^{\trans})^{\trans}$
generated by a canonical CKSVAR, the theorem is proved. 

Now suppose that merely \assref{dgp} holds (along with \assref{cocens}\ass{$^\prime$}
or \assref{coclas}\ass{$^\prime$}). It is shown in \suppref{cointstruct}
that, in these cases, the derived canonical form (denoted by tildes)
satisfies $e_{1}^{\trans}\tilde{P}_{\beta_{\perp}^{+}}\tilde{c}=0$
in case~\casecens{}, and $e_{1}^{\trans}\tilde{P}_{\beta_{\perp}}(+1)\tilde{c}=0$
in case~\caseclas{}, i.e.\ \assref{detprime} holds in the canonical
form; the argument subsequently given in that appendix shows that
the conclusions of the theorem may then be transposed back to the
structural form.\hfill\qedsymbol{}

\section{Proofs of auxiliary lemmas}

\label{supp:auxiliaryproofs}
\begin{proof}[Proof of \lemref{stochbound}]
 Since $\rho_{\jsr}(\mset A)<1$, it follows by \citet[Prop.\ 1.4]{Jungers09}
that for any given $\gamma\in(\rho_{\jsr}(\mset A),1)$, there exists
a norm $\smlnorm{\cdot}_{\ast}$ on $\reals^{d_{w}}$ such that
\[
\smlnorm{w_{t}}_{\ast}\leq\smlnorm{c_{t}}_{\ast}+\smlnorm{A_{t}}_{\ast}\smlnorm{w_{t-1}}_{\ast}+\smlnorm{B_{t}}_{\ast}\smlnorm{\varepsilon_{t}}\leq\gamma\smlnorm{w_{t-1}}_{\ast}+C(1+\smlnorm{v_{t}}_{\ast})
\]
where $C\defeq\max\{\sup_{B\in\mset B}\smlnorm B_{\ast},\sup_{c\in{\cal C}}\smlnorm c_{\ast}\}<\infty$,
by the boundedness of $\mset B$ and ${\cal C}$. Hence by backward
substitution,
\[
\smlnorm{w_{t}}_{\ast}\leq C\sum_{s=0}^{t-1}\gamma^{s}(1+\smlnorm{v_{t-s}}_{\ast})+\gamma^{t}\smlnorm{w_{0}}_{\ast}.
\]
By the equivalence of norms on finite-dimensional spaces, there exists
a $C^{\prime}<\infty$ such that
\[
\smlnorm{w_{t}}\leq C^{\prime}\left[\sum_{s=0}^{t-1}\gamma^{s}(1+\smlnorm{v_{t-s}})+\gamma^{t}\smlnorm{w_{0}}\right],
\]
where $\smlnorm{\cdot}$ denotes the Euclidean norm on $\reals^{d_{w}}$.
Deduce that for any $m\in[1,m_{0}]$,
\begin{align*}
\smlnorm{w_{t}}_{m} & \leq C^{\prime}\left[\sum_{s=0}^{t-1}\gamma^{s}(1+\smlnorm{v_{t-s}}_{m})+\gamma^{t}\smlnorm{w_{0}}_{m}\right]\\
 & \leq C^{\prime}\left[\frac{1}{1-\gamma}\left(1+\max_{1\leq s\leq t}\smlnorm{v_{s}}_{m}\right)+\smlnorm{w_{0}}_{m}\right],
\end{align*}
where $\smlnorm{w_{t}}_{m}=\expect(\smlnorm{w_{t}}^{m})^{1/m}$, etc.,
and hence there exists a $C^{\prime\prime}<\infty$ such that
\[
\max_{1\leq t\leq n}\smlnorm{w_{t}}_{m}\leq C^{\prime\prime}\left[1+\max_{1\leq t\leq n}\smlnorm{v_{s}}_{m}+\smlnorm{w_{0}}_{m}\right].\qedhere
\]
\end{proof}
\begin{proof}[Proof of \lemref{rk1perturb}]
 Since $A^{\trans}B_{1}=A^{\trans}B_{2}+(A^{\trans}c)d^{\trans}$,
it follows by Cauchy's formula for a rank-one perturbation (\citealp{HJ13book},
(0.8.5.11)) that
\begin{align*}
\det(A^{\trans}B_{1}) & =\det(A^{\trans}B_{2})+d^{\trans}(\adj A^{\trans}B_{2})A^{\trans}c\\
 & =\det(A^{\trans}B_{2})\{1+d^{\trans}(A^{\trans}B_{2})^{-1}A^{\trans}c\}
\end{align*}
whence $\chi\defeq1+d^{\trans}(A^{\trans}B_{2})^{-1}A^{\trans}c=\det(A^{\trans}B_{2})^{-1}\det(A^{\trans}B_{1})>0$.
Hence, by the Sherman--Morrison--Woodbury formula (\citealp{HJ13book},
(0.7.4.2))
\begin{equation}
(A^{\trans}B_{1})^{-1}=(A^{\trans}B_{2})^{-1}-\frac{(A^{\trans}B_{2})^{-1}A^{\trans}cd^{\trans}(A^{\trans}B_{2})^{-1}}{1+d^{\trans}(A^{\trans}B_{2})^{-1}A^{\trans}c}.\label{eq:smw}
\end{equation}
Premultiplying the preceding by $d^{\trans}$ and rearranging yields
\begin{align*}
d^{\trans}(A^{\trans}B_{1})^{-1} & =\left\{ 1-\frac{d^{\trans}(A^{\trans}B_{2})^{-1}A^{\trans}c}{1+d^{\trans}(A^{\trans}B_{2})^{-1}A^{\trans}c}\right\} d^{\trans}(A^{\trans}B_{2})^{-1}=\chi^{-1}d^{\trans}(A^{\trans}B_{2})^{-1}
\end{align*}
and thus part~(i) holds with $\mu=\chi^{-1}$. Finally, let $v\in\reals^{n}$
be such that $d^{\trans}(A^{\trans}B_{1})^{-1}v=0$. By part~(i)
of the lemma, we must have that $d^{\trans}(A^{\trans}B_{2})^{-1}v=0$;
hence postmultiplying \eqref{smw} by $v$ immediately yields the
result of part~(ii).
\end{proof}
\begin{proof}[Proof of \lemref{lincoint}]
 The factorisation of $\Pi$ is immediate from the definition of
matrix rank; that of $\b{\Pi}$ then follows from direct calculation.
Part\ (iii) will also follow from direct calculation, once we have
verified (ii).

It remains to prove part~(ii), which is the converse to Lemma~A.2
in \citet{Han05EctJ}. We first observe that $I_{kp}+\b{\Pi}$ gives
the companion form matrix associated with the autoregressive polynomial
$\Phi(\lambda)$, in the sense that the roots of the latter coincide
with the reciprocals of the eigenvalues of the former (e.g.\ \citealp[Sec.~2.1.1]{Lut07}),
of which $q$ lie on the unit circle by assumption, and the remaining
$kp-q$ lie strictly inside. It is readily verified that $\b{\alpha}_{\perp}^{\trans}\b{\alpha}=0$
and $\b{\beta}_{\perp}^{\trans}\b{\beta}=0$, when $\b{\alpha}_{\perp}$
and $\b{\beta}_{\perp}$ have the form given in \eqref{abperp}. Moreover,
since $\beta_{\perp}$ is determined only up to its column span, we
are free to normalise it such that $\beta_{\perp}^{\trans}\beta_{\perp}=k^{-1}I_{q}$,
whence $\b{\beta}_{\perp}^{\trans}\b{\beta}_{\perp}=I_{q}$. Letting
$Q\defeq(\b{\beta}^{\trans}\b{\beta})^{-1/2}$ denote the positive
definite square root of $(\b{\beta}^{\trans}\b{\beta})^{-1}$, we
have that $[\b{\beta}Q,\b{\beta}_{\perp}]$ is a (full rank) orthogonal
matrix. Since
\[
\b{\beta}^{\trans}(I_{kp}+\b{\Pi})\b{\beta}_{\perp}=\b{\beta}^{\trans}(I_{kp}+\b{\alpha}\b{\beta}^{\trans})\b{\beta}_{\perp}=\b{\beta}^{\trans}\b{\beta}_{\perp}=0
\]
and, noting $\b{\beta}^{\trans}\b{\beta}Q=\b{\beta}^{\trans}\b{\beta}(\b{\beta}^{\trans}\b{\beta})^{-1/2}=(\b{\beta}^{\trans}\b{\beta})^{-1/2}=Q^{-1}$,
\begin{align*}
Q\b{\beta}^{\trans}(I_{kp}+\b{\Pi})\b{\beta}Q & =Q\b{\beta}^{\trans}(I_{kp}+\b{\alpha}\b{\beta}^{\trans})\b{\beta}Q\\
 & =Q(I_{kp-q}+\b{\beta}^{\trans}\b{\alpha})\b{\beta}^{\trans}\b{\beta}Q=Q(I_{kp-q}+\b{\beta}^{\trans}\b{\alpha})Q^{-1}.
\end{align*}
It follows that
\[
\begin{bmatrix}\b{\beta}_{\perp}^{\trans}\\
Q\b{\beta}^{\trans}
\end{bmatrix}[I_{kp}+\b{\Pi}]\begin{bmatrix}\b{\beta}_{\perp} & \b{\beta}Q\end{bmatrix}=\begin{bmatrix}I_{q} & \b{\beta}_{\perp}^{\trans}\b{\Pi}\b{\beta}Q\\
0 & Q(I_{kp-q}+\b{\beta}^{\trans}\b{\alpha})Q^{-1}
\end{bmatrix}.
\]
Hence the eigenvalues of $I_{kp-q}+\b{\beta}^{\trans}\b{\alpha}$
are the $kp-q$ eigenvalues of $I_{kp}+\b{\Pi}$ that are strictly
inside the unit circle. It then follows by Lemma~A.1 in \citet{Han05EctJ}
that $\b{\beta}^{\trans}\b{\alpha}$ and $\b{\alpha}_{\perp}^{\trans}\b{\beta}_{\perp}=\alpha_{\perp}^{\trans}\Gamma(1)\beta_{\perp}$
have full rank.
\end{proof}

\section{Transposition to the structural form}

\label{supp:cointstruct}

Here we verify the claims made in part~(i) of each of Remarks~\ref{rem:cocens}--\ref{rem:costat},
and complete the proof of \thmref{codet}. We suppose that \assref{dgp}
holds, so that $(y_{t},x_{t})$ are generated by the CKSVAR
\begin{equation}
\phi^{+}(L)y_{t}^{+}+\phi^{-}(L)y_{t}^{-}+\Phi^{x}(L)x_{t}=c+u_{t}.\label{eq:cksvar-structural}
\end{equation}
By \propref{canonical}, if $(\tilde{y}_{t},\tilde{x}_{t})$ are constructed
from $(y_{t},x_{t})$ via \eqref{canon-vars}, then $(\tilde{y}_{t},\tilde{x}_{t})$
follow a canonical CKSVAR, which we denote here as
\begin{equation}
\tilde{\phi}^{+}(L)\tilde{y}_{t}^{+}+\tilde{\phi}^{-}(L)\tilde{y}_{t}^{-}+\tilde{\Phi}^{x}(L)\tilde{x}_{t}=\tilde{c}+\tilde{u}_{t}.\label{eq:cksvar-canon}
\end{equation}
We use tildes to distinguish the series and parameters of the canonical
form \eqref{cksvar-structural}, from the original CKSVAR \eqref{cksvar-canon}
from which they were derived; the mapping between these is given in
\eqref{canon-vars}--\eqref{Q-canon}.

Our approach here is as follows. We show that if \assref{dgp}, \assref{co},
\assref{det} (or \assref{detprime}) and a suitably modified form
of \assref{cocens} (resp.\ \assref{coclas}, \assref{costat}) hold
for $(y_{t},x_{t})$, then \assref{dgp-canon}, \assref{co}, \assref{det}
(or \assref{detprime}) and \assref{cocens} (resp.\ \assref{coclas},
\assref{costat}) also hold for $(\tilde{y}_{t},\tilde{x}_{t})$.
Since \assref{coerr} (or \assref{err}) trivially carry over from
$u_{t}$ to $\tilde{u}_{t}$, this permits Theorems~\ref{thm:cocens}--\ref{thm:costat}
to be applied to $(\tilde{y}_{t},\tilde{x}_{t})$: their implications
for $(y_{t},x_{t})$ are then derived by inverting the mapping \eqref{canon-vars}--\eqref{Q-canon}.

Preliminary to these arguments, we define
\begin{equation}
(P^{\pm})^{-1}\defeq\begin{bmatrix}\bar{\phi}_{0,yy}^{\pm} & 0\\
\phi_{0,xy}^{\pm} & \Phi_{0,xx}
\end{bmatrix},\label{eq:Ppm}
\end{equation}
and prove the following auxiliary result.
\begin{prop}
\label{prop:canon-vecm}Suppose \assref{dgp} holds. Then
\begin{enumerate}
\item \label{enu:vecm:phitilde}$\tilde{\Phi}^{\pm}(\lambda)=Q\Phi^{\pm}(\lambda)P^{\pm}$;
\item \label{enu:vecm:roots}the roots of $\det\tilde{\Phi}^{(i)}(\lambda)$
and $\det\Phi^{(i)}(\lambda)$ are identical, for $(i)\in\{+,-\}$;
\item \label{enu:vecm:pi}$\Pi^{\pm}=Q^{-1}\tilde{\Pi}^{\pm}(P^{\pm})^{-1}$
and $\Pi^{x}=Q^{-1}\tilde{\Pi}^{x}\Phi_{0,xx}$; and
\item \label{enu:vecm:c}$c\in\spn\Pi^{+}\intsect\spn\Pi^{-}$ if and only
if $\tilde{c}=Qc\in\spn\tilde{\Pi}^{+}\intsect\tilde{\Pi}^{-}$.
\end{enumerate}
\end{prop}

\begin{proof}
By \propref{canonical}, a CKSVAR satisfying \assref{dgp} has a canonical
form, which relates to the original model via \eqref{canon-vars}--\eqref{canon-polys}.
Recall the definition of $(P^{\pm})^{-1}$ given in \eqref{Ppm}.
Since
\begin{align*}
P & =\begin{bmatrix}(\bar{\phi}_{0,yy}^{+})^{-1} & 0 & 0\\
0 & (\bar{\phi}_{0,yy}^{-})^{-1} & 0\\
-\Phi_{0,xx}^{-1}\phi_{0,xy}^{+}(\bar{\phi}_{0,yy}^{+})^{-1} & -\Phi_{0,xx}^{-1}\phi_{0,xy}^{-}(\bar{\phi}_{0,yy}^{-})^{-1} & \Phi_{0,xx}^{-1}
\end{bmatrix}
\end{align*}
and 
\begin{equation}
P^{\pm}=\begin{bmatrix}(\bar{\phi}_{0,yy}^{\pm})^{-1} & 0\\
-\Phi_{0,xx}^{-1}\phi_{0,xy}^{\pm}(\bar{\phi}_{0,yy}^{\pm})^{-1} & \Phi_{0,xx}^{-1}
\end{bmatrix},\label{eq:Ppm-canon}
\end{equation}
as may be verified using the partitioned inverse formula, \eqref{canon-polys}
may be equivalently stated as
\begin{equation}
\tilde{\Phi}^{\pm}(\lambda)=Q\Phi^{\pm}(\lambda)P^{\pm},\label{eq:phitilde}
\end{equation}
and thus \ref{enu:vecm:phitilde} and \ref{enu:vecm:roots} hold.
Since $\Pi^{\pm}=-\Phi^{\pm}(1)$ and $\tilde{\Pi}^{\pm}=-\tilde{\Phi}^{\pm}(1)$,
\ref{enu:vecm:pi} follows immediately upon taking $\lambda=1$ in
\eqref{phitilde}, and the fact that
\[
\Pi^{\pm}=Q^{-1}\begin{bmatrix}\tilde{\pi}^{\pm} & \tilde{\Pi}^{x}\end{bmatrix}(P^{\pm})^{-1}=Q^{-1}\begin{bmatrix}\tilde{\pi}^{\pm} & \tilde{\Pi}^{x}\end{bmatrix}\begin{bmatrix}\bar{\phi}_{0,yy}^{\pm} & 0\\
\phi_{0,xy}^{\pm} & \Phi_{0,xx}
\end{bmatrix}=\begin{bmatrix}\pi^{\pm} & Q^{-1}\tilde{\Pi}^{x}\Phi_{0xx}\end{bmatrix}.
\]
Finally, \ref{enu:vecm:c} follows from $\spn\tilde{\Pi}^{+}\intsect\tilde{\Pi}^{-}=\spn(Q\Pi^{+})\intsect(Q\Pi^{-})$.
\end{proof}

We suppose henceforth that $(y_{t},x_{t})$ satisfies \assref{dgp}
and \assref{co}, and for the moment also \assref{det}. Then $(\tilde{y}_{t},\tilde{x}_{t})$
satisfies \assref{dgp-canon}, \assref{co} and \assref{det} by Propositions~\ref{prop:canonical}
and \ref{prop:canon-vecm}. We now consider cases~\casecens{}--\casestat{}
in turn, before turning to the case where \assref{det} is relaxed
to \assref{detprime}.

\textbf{\casecens{}.} Suppose $(y_{t},x_{t})$ additionally satisfies
\assref{coerr} and \assref{cocens}, except with \assref{cocens}\ref{enu:cocens:jsr}
replaced by \ref{enu:cocens:jsr-struct}. We must verify that $(\tilde{y}_{t},\tilde{x}_{t})$
satisfies \assref{cocens}. $(\tilde{y}_{t},\tilde{x}_{t})$ satisfies
\assref{cocens}\ref{enu:cocens:rank} by \propref{canon-vecm}\ref{enu:vecm:pi};
\assref{cocens}\ref{enu:cocens:jsr} since $(y_{t},x_{t})$ satisfies
\ref{enu:cocens:jsr-struct}; and \assref{cocens}\ref{enu:cocens:init}
via \propref{canonical}.

It remains therefore to verify \assref{cocens}\ref{enu:cocens:regcoef}.
We note that as a further consequence of \propref{canon-vecm}\ref{enu:vecm:pi},
\begin{equation}
\tilde{\alpha}^{+}\tilde{\beta}^{+\trans}=\tilde{\Pi}^{+}=Q\Pi^{+}P^{+}=Q\alpha^{+}\beta^{+\trans}P^{+}\label{eq:alphabetaplus}
\end{equation}
so that $\tilde{\alpha}^{+}=Q\alpha^{+}$, $\tilde{\beta}^{+}=(P^{+})^{\trans}\beta^{+}$,
and we may take $\tilde{\alpha}_{\perp}^{+}=(Q^{-1})^{\trans}\alpha_{\perp}^{+}$
and $\tilde{\beta}_{\perp}^{+}=(P^{+})^{-1}\beta_{\perp}^{+}$. Further,
$\tilde{\Gamma}^{+}(\lambda)=Q\Gamma^{+}(\lambda)P^{+}$, and so
\begin{align}
\tilde{P}_{\beta_{\perp}^{+}} & =\tilde{\beta}_{\perp}^{+}[\tilde{\alpha}_{\perp}^{+\trans}\tilde{\Gamma}^{+}(1)\tilde{\beta}_{\perp}^{+}]^{-1}\tilde{\alpha}_{\perp}^{+\trans}=(P^{+})^{-1}\beta_{\perp}^{+}[\alpha_{\perp}^{+\trans}\Gamma^{+}(1)\beta_{\perp}^{+}]^{-1}\alpha_{\perp}^{+\trans}Q^{-1}.\nonumber \\
 & =(P^{+})^{-1}P_{\beta_{\perp}^{+}}Q^{-1}.\label{eq:projtilde}
\end{align}
Hence, using \propref{canon-vecm}\ref{enu:vecm:pi} again,
\begin{align*}
\tilde{\kappa}=\tilde{P}_{\beta_{\perp}^{+}}\tilde{\pi}^{-}=\tilde{P}_{\beta_{\perp}^{+}}\tilde{\Pi}^{-}e_{1} & =(P^{+})^{-1}P_{\beta_{\perp}^{+}}\Pi^{-}P^{-}e_{1},
\end{align*}
where
\begin{align*}
\Pi^{-}P^{-}e_{1} & =\begin{bmatrix}\pi^{-} & \Pi^{x}\end{bmatrix}\begin{bmatrix}(\bar{\phi}_{0,yy}^{-})^{-1} & 0\\
-\Phi_{0,xx}^{-1}\phi_{0,xy}^{-}(\bar{\phi}_{0,yy}^{-})^{-1} & \Phi_{0,xx}^{-1}
\end{bmatrix}\begin{bmatrix}1\\
0
\end{bmatrix}\\
 & =(\bar{\phi}_{0,yy}^{-})^{-1}[\pi^{-}-\Pi^{x}\Phi_{0,xx}^{-1}\phi_{0,xy}^{-}],
\end{align*}
and hence, using that $P_{\beta_{\perp}^{+}}\Pi^{x}=0$,
\begin{equation}
\tilde{\kappa}=(\bar{\phi}_{0,yy}^{-})^{-1}(P^{+})^{-1}P_{\beta_{\perp}^{+}}[\pi^{-}-\Pi^{x}\Phi_{0,xx}^{-1}\phi_{0,xy}^{-}]=(\bar{\phi}_{0,yy}^{-})^{-1}(P^{+})^{-1}\kappa\label{eq:kaptilde}
\end{equation}
where $\kappa=P_{\beta_{\perp}^{+}}\pi^{-}$. Since $e_{1}^{\trans}(P^{+})^{-1}=\bar{\phi}_{0,yy}^{+}e_{1}^{\trans}$,
it follows that
\begin{equation}
\tilde{\kappa}_{1}=e_{1}^{\trans}\tilde{\kappa}=(\bar{\phi}_{0,yy}^{-})^{-1}\bar{\phi}_{0,yy}^{+}\kappa_{1},\label{eq:kap1tilde}
\end{equation}
and so $\sgn\tilde{\kappa}_{1}=\sgn\kappa_{1}$, since $\bar{\phi}_{0,yy}^{\pm}>0$
by \assref{dgp}\enuref{dgp:wlog}. Thus $(\tilde{y}_{t},\tilde{x}_{t})$
satisfies \assref{cocens}\enuref{cocens:regcoef}, since $(y_{t},x_{t})$
does.

It follows that \thmref{cocens} applies to $(\tilde{y}_{t},\tilde{x}_{t})$.
Regarding the conclusions of that theorem, note that in this case
$\tilde{{\cal Y}}_{0}$ and ${\cal Y}_{0}$ must be non-negative.
Hence $\tilde{{\cal Z}}_{0}=(P^{+})^{-1}{\cal Z}_{0}$ by \eqref{canon-vars},
and so by \propref{canonical}
\[
\tilde{U}_{0}(\lambda)=\tilde{\Gamma}^{+}(1)\tilde{{\cal Z}}_{0}+\tilde{U}(\lambda)=Q[\Gamma^{+}(1){\cal Z}_{0}+U(\lambda)]=QU_{0}(\lambda)
\]
whence $P_{\tilde{\beta}_{\perp}^{+}}\tilde{U}_{0}(\lambda)=(P^{+})^{-1}P_{\beta_{\perp}^{+}}U_{0}(\lambda)$.
Since $\tilde{Y}=\tilde{Y}^{+}$ in this case, we have 
\begin{align}
Z(\lambda)=P^{+}\tilde{Z}(\lambda) & =P^{+}\left\{ P_{\tilde{\beta}_{\perp}^{+}}\tilde{U}_{0}(\lambda)+\tilde{\kappa}_{1}^{-1}\tilde{\kappa}\sup_{\lambda^{\prime}\leq\lambda}[-e_{1}^{\trans}P_{\tilde{\beta}_{\perp}^{+}}\tilde{U}_{0}(\lambda)]_{+}\right\} \nonumber \\
 & =_{(1)}P_{\beta_{\perp}^{+}}U_{0}(\lambda)+(\bar{\phi}_{0,yy}^{-})^{-1}\tilde{\kappa}_{1}^{-1}\kappa\sup_{\lambda^{\prime}\leq\lambda}[-e_{1}^{\trans}(P^{+})^{-1}P_{\beta_{\perp}^{+}}U_{0}(\lambda)]_{+}\nonumber \\
 & =_{(2)}P_{\beta_{\perp}^{+}}U_{0}(\lambda)+\kappa_{1}^{-1}\kappa\sup_{\lambda^{\prime}\leq\lambda}[-e_{1}^{\trans}P_{\beta_{\perp}^{+}}U_{0}(\lambda)]_{+},\label{eq:Z-struct}
\end{align}
where $=_{(1)}$ follows by \eqref{kaptilde}, and $=_{(2)}$ by \eqref{kap1tilde}
and $e_{1}^{\trans}(P^{+})^{-1}=\bar{\phi}_{0,yy}^{+}e_{1}^{\trans}$.
It follows immediately that $\beta^{+\trans}z_{t}$ and $y_{t}^{-}$
are $I^{\ast}(0)$. Since $\tilde{y}_{t}^{-}\sim I^{\ast}(0)$, it
follows that $\Delta\tilde{y}_{t}^{-}\sim I^{\ast}(0)$, and therefore
so too is $\Delta\tilde{y}_{t}^{+}=\Delta\tilde{y}_{t}-\Delta\tilde{y}_{t}^{-}$.
Hence
\begin{equation}
\begin{bmatrix}\Delta y_{t}^{+}\\
\Delta y_{t}^{-}\\
\Delta x_{t}
\end{bmatrix}=P\begin{bmatrix}\Delta\tilde{y}_{t}^{+}\\
\Delta\tilde{y}_{t}^{-}\\
\Delta\tilde{x}_{t}
\end{bmatrix}\sim I^{\ast}(0)\label{eq:Diffi0}
\end{equation}
implies that $\Delta z_{t}\sim I^{\ast}(0)$. Thus the conclusions
of the theorem hold also for $(y_{t},x_{t})$, exactly as stated.

\textbf{\caseclas{}.} Suppose $(y_{t},x_{t})$ additionally satisfies
\assref{coerr} and \assref{coclas}, except with \assref{coclas}\ref{enu:coclas:jsr}
replaced by \ref{enu:coclas:jsr-struct}. Analogously to case~\casecens{},
$(\tilde{y}_{t},\tilde{x}_{t})$ satisfies \assref{coclas}\ref{enu:coclas:rank}
by \propref{canon-vecm}\ref{enu:vecm:pi}; \assref{coclas}\ref{enu:coclas:jsr}
since $(y_{t},x_{t})$ satisfies \ref{enu:coclas:jsr-struct}; and
\assref{coclas}\ref{enu:coclas:init} via \propref{canonical}.

Regarding \assref{coclas}\ref{enu:coclas:det}, we first note that
similarly to \eqref{alphabetaplus},
\[
\tilde{\alpha}\tilde{\beta}^{\pm\trans}=\tilde{\Pi}^{\pm}=Q\Pi^{\pm}P^{\pm}=Q\alpha\beta^{\pm\trans}P^{\pm}
\]
so that $\tilde{\alpha}=Q\alpha$ and $\tilde{\beta}^{\pm}=(P^{\pm})^{\trans}\beta^{\pm}$,
and we may take $\tilde{\alpha}_{\perp}=(Q^{-1})^{\trans}\alpha_{\perp}$.
With respect to $\tilde{\beta}_{\perp}^{\pm}=\tilde{\beta}_{\perp}(\pm1)$,
for the purposes of verifying \assref{coclas}\ref{enu:coclas:det},
we need to choose this such that its first row is equal to $e_{q,1}^{\trans}$,
and its final columns are regime-invariant, as per \eqref{betaperpfn}
(see \remref{coclas}\ref{enu:rem:coclas:detsign}). This may be engineered
by defining 
\begin{align}
P(y) & \defeq P^{+}\indic^{+}(y)+P^{-}\indic^{-}(y) & \bar{\phi}_{0,yy}(y) & \defeq\bar{\phi}_{0,yy}^{+}\indic^{+}(y)+\bar{\phi}_{0,yy}^{-}P^{-}\indic^{-}(y)\label{eq:Pydef}
\end{align}
and $M(y)=\diag\{\bar{\phi}_{0,yy}^{-1}(y),I_{p-1}\}$, and then taking
\begin{align*}
\tilde{\beta}_{\perp}(y)\defeq P(y)^{-1}\beta_{\perp}(y)M(y) & =\begin{bmatrix}\bar{\phi}_{0,yy}(y) & 0\\
\phi_{0,xy}(y) & \Phi_{0,xx}
\end{bmatrix}\begin{bmatrix}1 & 0\\
-\theta(y) & \beta_{x,\perp}
\end{bmatrix}\begin{bmatrix}\bar{\phi}_{0,yy}^{-1}(y) & 0\\
0 & I_{p-1}
\end{bmatrix}\\
 & \eqdef\begin{bmatrix}1\\
-\tilde{\theta}(y) & \tilde{\beta}_{x,\perp}
\end{bmatrix}
\end{align*}
where $\tilde{\beta}_{x,\perp}=\Phi_{0,xx}\beta_{x,\perp}$. Since
$\tilde{\Gamma}(1;y)=Q\Gamma(1;y)P(y)$ by \propref{canon-vecm}\ref{enu:vecm:pi},
we have 
\[
\det\tilde{\alpha}_{\perp}^{\trans}\tilde{\Gamma}(1;y)\tilde{\beta}_{\perp}(y)=\det\alpha_{\perp}^{\trans}\Gamma(1;y)\beta_{\perp}(y)\det M(y).
\]
Noting that $\sgn\det M(y)=\sgn\bar{\phi}_{0,yy}(y)>0$, it follows
that so that $(\tilde{y}_{t},\tilde{x}_{t})$ satisfies \assref{coclas}\enuref{coclas:det},
since $(y_{t},x_{t})$ does.

Hence \thmref{coclas} applies to $(\tilde{y}_{t},\tilde{x}_{t})$.
Regarding the conclusions of that theorem, we first note that $\sgn y_{t}=\sgn\tilde{y}_{t}$,
$\sgn{\cal Y}_{0}=\sgn{\cal \tilde{Y}}_{0}$, 
\[
\tilde{U}_{0}(\lambda)=\tilde{\Gamma}(1;{\cal \tilde{Y}}_{0})\tilde{{\cal Z}}_{0}+\tilde{U}(\lambda)=Q[\Gamma(1;{\cal Y}_{0}){\cal Z}_{0}+U(\lambda)],
\]
and
\[
P_{\tilde{\beta}_{\perp}}(y)=\tilde{\beta}_{\perp}(y)[\tilde{\alpha}_{\perp}^{\trans}\tilde{\Gamma}(1;y)\tilde{\beta}_{\perp}(y)]^{-1}\tilde{\alpha}_{\perp}^{\trans}=P(y)^{-1}P_{\beta_{\perp}}(y)Q^{-1}.
\]
Hence
\begin{align}
Z(\lambda)=P[\tilde{Y}(\lambda)]\tilde{Z}(\lambda) & =P[\tilde{Y}(\lambda)]P_{\tilde{\beta}_{\perp}}[\tilde{Y}(\lambda)]\tilde{U}_{0}(\lambda)=P_{\beta_{\perp}}[Y(\lambda)]U_{0}(\lambda).
\end{align}
Further, 
\[
\beta(y_{t})^{\trans}z_{t}=\beta(y_{t})^{\trans}z_{t}=\beta(y_{t})^{\trans}P(\tilde{y}_{t})\tilde{z}_{t}=\tilde{\beta}(\tilde{y}_{t})^{\trans}\tilde{z}_{t}\sim I^{\ast}(0).
\]
Since $\tilde{y}_{t}\sim I^{\ast}(1)$, it follows that $\tilde{y}_{t}^{\pm}\sim I^{\ast}(1)$,
and hence $\Delta\tilde{y}_{t}^{\pm}\sim I^{\ast}(0)$. That $\Delta z_{t}\sim I^{\ast}(0)$
then follows as in \eqref{Diffi0}. Deduce that the conclusions of
the theorem hold also for $(y_{t},x_{t})$.

\textbf{\casestat{}.} Suppose $(y_{t},x_{t})$ additionally satisfies
\assref{err} and \assref{costat}, except with \assref{costat}\ref{enu:costat:stable}
replaced by \ref{enu:costat:stable-struct}. Since $(\tilde{y}_{t},\tilde{x}_{t})$
satisfies \assref{costat}\ref{enu:costat:rank} by \propref{canon-vecm}\ref{enu:vecm:pi},
it follows from \thmref{costat} that $\tilde{\b{\chi}}_{t}\defeq(\tilde{\chi}_{t}^{\trans},\ldots,\tilde{\chi}_{t-k+1}^{\trans})^{\trans}$
is ${\cal Q}$-geometrically ergodic.

Recalling the definitions in \eqref{Pydef} above, we may rewrite
\eqref{canon-vars} as
\begin{equation}
\begin{bmatrix}\tilde{y}_{t}\\
\tilde{x}_{t}
\end{bmatrix}=\tilde{z}_{t}=P(y_{t})^{-1}z_{t}=\begin{bmatrix}\bar{\phi}_{0,yy}(y_{t}) & 0\\
\phi_{0,xy}(y_{t}) & \Phi_{0,xx}
\end{bmatrix}\begin{bmatrix}y_{t}\\
x_{t}
\end{bmatrix}\label{eq:tildeagain}
\end{equation}
and using the fact that $\sgn y_{t}=\sgn\tilde{y}_{t}$,
\begin{equation}
\begin{bmatrix}y_{t}\\
x_{t}
\end{bmatrix}=\begin{bmatrix}\bar{\phi}_{0,yy}^{-1}(\tilde{y}_{t}) & 0\\
\bar{\phi}_{0,xy}(\tilde{y}_{t}) & \Phi_{0,xx}^{-1}
\end{bmatrix}\begin{bmatrix}\tilde{y}_{t}\\
\tilde{x}_{t}
\end{bmatrix}\label{eq:ztoztildeagain}
\end{equation}
where $\bar{\phi}_{0,xy}(y)\defeq-\Phi_{0,xx}^{-1}\phi_{0,xy}(y)\bar{\phi}_{0,yy}^{-1}(y)$.
Further, we have by \propref{canon-vecm}\ref{enu:vecm:pi} that
\begin{equation}
\tilde{\alpha}_{x}\tilde{\beta}_{x}^{\trans}=\tilde{\Pi}^{x}=Q\Pi^{x}\Phi_{0,xx}^{-1}=Q\alpha_{x}\beta_{x}^{\trans}\Phi_{0,xx}^{-1}.\label{eq:tildeabx}
\end{equation}

Hence by \eqref{tildeagain} and \eqref{tildeabx}
\begin{multline*}
\begin{bmatrix}\tilde{y}_{t}\\
\tilde{\beta}_{x}^{\trans}\tilde{x}_{t}
\end{bmatrix}=\tilde{\beta}^{\trans}\tilde{z}_{t}=\begin{bmatrix}1 & 0\\
0 & \tilde{\beta}_{x}^{\trans}
\end{bmatrix}P(y_{t})^{-1}z_{t}=\begin{bmatrix}1 & 0\\
0 & \beta_{x}^{\trans}\Phi_{0,xx}^{-1}
\end{bmatrix}\begin{bmatrix}\bar{\phi}_{0,yy}(y_{t}) & 0\\
\phi_{0,xy}(y_{t}) & \Phi_{0,xx}
\end{bmatrix}\begin{bmatrix}y_{t}\\
x_{t}
\end{bmatrix}\\
=\begin{bmatrix}\bar{\phi}_{0,yy}(y_{t}) & 0\\
\beta_{x}^{\trans}\Phi_{0,xx}^{-1}\phi_{0,xy}(y_{t}) & I_{r-1}
\end{bmatrix}\begin{bmatrix}y_{t}\\
\beta_{x}^{\trans}x_{t}
\end{bmatrix},
\end{multline*}
and so, since $\sgn y_{t}=\sgn\tilde{y}_{t}$,
\[
\beta^{\trans}z_{t}=\begin{bmatrix}y_{t}\\
\beta_{x}^{\trans}x_{t}
\end{bmatrix}=\begin{bmatrix}\bar{\phi}_{0,yy}^{-1}(\tilde{y}_{t}) & 0\\
-\beta_{x}^{\trans}\bar{\phi}_{0,xy}(\tilde{y}_{t}) & I_{r-1}
\end{bmatrix}\begin{bmatrix}\tilde{y}_{t}\\
\tilde{\beta}_{x}^{\trans}\tilde{x}_{t}
\end{bmatrix}
\]
so that the l.h.s.\ is a Lipschitz continuous function of $\tilde{\beta}^{\trans}\tilde{z}_{t}$,
as claimed. Further, we have from \eqref{ztoztildeagain} that 
\[
\Delta x_{t}=\Delta[\bar{\phi}_{0,xy}(\tilde{y}_{t})\tilde{y}_{t}]+\Phi_{0,xx}^{-1}\Delta\tilde{x}_{t}
\]
where $\tilde{y}_{t}\elmap\bar{\phi}_{0,xy}(\tilde{y}_{t})\tilde{y}_{t}$
is a Lipschitz continuous function of (the first element of) $\tilde{\chi}_{t}$,
and since
\[
\Delta\tilde{x}_{t}=(\abv{\beta}_{x,\perp}\tilde{\beta}_{x,\perp}^{\trans}+\abv{\beta}_{x}\tilde{\beta}_{x}^{\trans})\Delta\tilde{x}_{t}=\abv{\beta}_{x,\perp}(\tilde{\beta}_{x,\perp}^{\trans}\Delta\tilde{x}_{t})+\abv{\beta}_{x}\Delta(\tilde{\beta}_{x}^{\trans}\tilde{x}_{t})
\]
for $\abv{\beta}_{x,\perp}\defeq\tilde{\beta}_{x,\perp}(\tilde{\beta}_{x,\perp}^{\trans}\tilde{\beta}_{x,\perp})^{-1}$
and $\abv{\beta}_{x}\defeq\tilde{\beta}_{x}(\tilde{\beta}_{x}^{\trans}\tilde{\beta}_{x})^{-1}$,
it follows that $\Delta\tilde{x}_{t}$ is a linear function of $(\tilde{\chi}_{t},\tilde{\chi}_{t-1})$.
Thus $\Delta x_{t}$ is a Lipschitz continuous function of $(\tilde{\chi}_{t},\tilde{\chi}_{t-1})$.

\textbf{Deterministics in cases~\casecens{} and \caseclas{}.} Now
suppose that $z_{t}=(y_{t},x_{t}^{\trans})^{\trans}$ satisfies \assref{detprime}
rather than \assref{det}. It follows from \eqref{projtilde} that
$e_{1}^{\trans}\tilde{P}_{\beta_{\perp}^{+}}\tilde{c}=0$; thus \assref{detprime}
also holds for the derived canonical process $\tilde{z}_{t}=(\tilde{y}_{t},\tilde{x}_{t}^{\trans})^{\trans}$.
We then have from the proof of \thmref{codet} (under \assref{dgp-canon})
that the conclusions of part~(ii) of Theorems~\ref{thm:cocens}
and \ref{thm:coclas} apply to the detrended canonical processes
\[
\begin{bmatrix}\tilde{y}_{t}^{d}\\
\tilde{x}_{t}
\end{bmatrix}=\tilde{z}_{t}^{d}=\tilde{z}_{t}-(\tilde{P}_{\beta_{\perp}^{+}}\tilde{c})t,
\]
where $\tilde{y}_{t}^{d}=\tilde{y}_{t}$ because $e_{1}^{\trans}\tilde{P}_{\beta_{\perp}^{+}}c=0$.
Using $P(y)$ as defined in \eqref{Pydef}, and noting that $\sgn y_{t}=\sgn\tilde{y}_{t}$,
we may write 
\begin{equation}
z_{t}=P(\tilde{y}_{t})\tilde{z}_{t}\label{eq:ztotilde}
\end{equation}
or equivalently $P(y_{t})^{-1}z_{t}=\tilde{z}_{t}$, and so
\begin{align*}
\tilde{z}_{t}^{d}=\tilde{z}_{t}-(\tilde{P}_{\beta_{\perp}^{+}}\tilde{c})t & =P(y_{t})^{-1}z_{t}-[(P^{+})^{-1}P_{\beta_{\perp}^{+}}c]t\\
 & =_{(1)}P(y_{t})^{-1}[z_{t}-(P_{\beta_{\perp}^{+}}c)t]=P(y_{t})^{-1}z_{t}^{d}
\end{align*}
where $=_{(1)}$ holds because only the first column of $P(y)$ depends
on $y$, and $e_{1}^{\trans}P_{\beta_{\perp}^{+}}c=0$. Hence
\begin{equation}
z_{t}^{d}=P(\tilde{y}_{t}^{d})\tilde{z}_{t}^{d},\label{eq:zdtotilde}
\end{equation}
where $\tilde{y}_{t}=\tilde{y}_{t}^{d}$. Because the mapping between
the detrended processes in \eqref{zdtotilde} is identical to that
for the original processes in \eqref{ztotilde}, the same arguments
as given in cases~\casecens{} and \caseclas{} immediately above
will now transpose the conclusions of \thmref{codet} from $\tilde{z}_{t}^{d}$
to $z_{t}^{d}$.

\section{Verification of Remarks~\ref{rem:cocens}\enuref{rem:cocens:jsr}
and \ref{rem:cocens}\enuref{rem:kappa}}

\label{supp:cocens:verification-of-remark}

We first note the following corollary to \lemref{lincoint}. We say
a VAR is \emph{stationary} if all its autoregressive roots lie outside
the unit circle (i.e.\ irrespective of whether the series is given
a stationary initialisation).

\needspace{3\baselineskip}
\begin{lem}
\label{lem:lincointcor}Suppose that the assumptions of \lemref{lincoint}
hold, and $\{z_{t}\}$ is generated according to
\[
z_{t}=c+\Phi(L)z_{t-1}+u_{t}.
\]
Let $\xi_{t}\defeq\beta^{\trans}z_{t}$. Then $\b{\xi}_{t}\defeq(\xi_{t}^{\trans},\Delta z_{t}^{\trans},\ldots,\Delta z_{t-(k-2)}^{\trans})^{\trans}$
follows the stationary VAR given by
\[
\b{\xi}_{t}=\b{\beta}_{1:p}^{\trans}c+(I_{p(k-1)+r}+\b{\beta}^{\trans}\b{\alpha})\b{\xi}_{t-1}+\b{\beta}_{1:p}^{\trans}u_{t}.
\]
\end{lem}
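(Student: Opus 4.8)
The plan is to recognise this as the purely linear specialisation of the companion-form argument used to prove \lemref{statcomp} (take $\phi^+=\phi^-$, so that all the terms in $y_t^-$ vanish), which reduces everything to a one-line consequence of \lemref{lincoint}.

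First I would put the VAR in error-correction form, $\Delta z_{t}=c+\Pi z_{t-1}+\sum_{i=1}^{k-1}\Gamma_{i}\Delta z_{t-i}+u_{t}$, and then stack it into companion form $\Delta\b{z}_{t}=\b{c}+\b{\Pi}\b{z}_{t-1}+\b{u}_{t}$ with $\b{z}_{t}\defeq(z_{t}^{\trans},\ldots,z_{t-k+1}^{\trans})^{\trans}$, $\b{c}\defeq(c^{\trans},0,\ldots,0)^{\trans}$ and $\b{u}_{t}\defeq(u_{t}^{\trans},0,\ldots,0)^{\trans}$, exactly as in the proof of \thmref{cocens}; a direct check of the block rows shows this system does reproduce the VECM. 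By \lemref{lincoint}\,(i), $\b{\Pi}=\b{\alpha}\b{\beta}^{\trans}$ with $\b{\alpha},\b{\beta}$ as there.

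Next I would premultiply the companion form by $\b{\beta}^{\trans}$. The only bookkeeping required is to read off, from the block-bidiagonal shape of $\b{\beta}^{\trans}$ recorded in \lemref{lincoint}, that $\b{\beta}^{\trans}\b{z}_{t}=\b{\xi}_{t}$ (the first block row gives $\xi_{t}=\beta^{\trans}z_{t}$, the remaining block rows give $z_{t-j}-z_{t-j-1}=\Delta z_{t-j}$ for $j=0,\ldots,k-2$), so that $\b{\beta}^{\trans}\Delta\b{z}_{t}=\b{\xi}_{t}-\b{\xi}_{t-1}$; and that, since $\b{c}$ and $\b{u}_{t}$ vanish outside their first $p$ coordinates, $\b{\beta}^{\trans}\b{c}=\b{\beta}_{1:p}^{\trans}c$ and $\b{\beta}^{\trans}\b{u}_{t}=\b{\beta}_{1:p}^{\trans}u_{t}$, where $\b{\beta}_{1:p}$ denotes the first $p$ rows of $\b{\beta}$. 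Substituting $\b{\beta}^{\trans}\b{\Pi}\b{z}_{t-1}=\b{\beta}^{\trans}\b{\alpha}\b{\xi}_{t-1}$ and rearranging then yields the asserted recursion $\b{\xi}_{t}=\b{\beta}_{1:p}^{\trans}c+(I_{p(k-1)+r}+\b{\beta}^{\trans}\b{\alpha})\b{\xi}_{t-1}+\b{\beta}_{1:p}^{\trans}u_{t}$.

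Finally, stationarity of this VAR is immediate from \lemref{lincoint}\,(ii): the eigenvalues of $I_{p(k-1)+r}+\b{\beta}^{\trans}\b{\alpha}$ lie strictly inside the unit circle, so its autoregressive roots, being their reciprocals, lie strictly outside it. I expect no genuine obstacle here; the sole non-mechanical ingredient is \lemref{lincoint}, and the only step needing a little care is confirming the identity $\b{\beta}^{\trans}\b{z}_{t}=\b{\xi}_{t}$ together with the fact that the companion form faithfully represents the VECM.
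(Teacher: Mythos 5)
Your proposal is correct and follows essentially the same route as the paper's own (very terse) proof: write the VAR in companion VECM form, premultiply by $\b{\beta}^{\trans}$ using the factorisation $\b{\Pi}=\b{\alpha}\b{\beta}^{\trans}$ and the identity $\b{\beta}^{\trans}\b z_{t}=\b{\xi}_{t}$, and invoke \lemref{lincoint}(ii) for stationarity. No issues.
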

\needspace{3\baselineskip}
\begin{proof}
The claim follows from part~(ii) of \lemref{lincoint}, and premultiplying
the companion form $\Delta\b z_{t}=\b c+\b{\Pi}\b z_{t-1}+\b u_{t}$
by $\b{\beta}^{\trans}$, where $\b z_{t}=(z_{t}^{\trans},z_{t-1}^{\trans},\ldots,z_{t-k+1}^{\trans})^{\trans}$,
$\b c=(c^{\trans},0_{p(k-1)}^{\trans})^{\trans}$ and $\b u_{t}=(u_{t}^{\trans},0_{p(k-1)}^{\trans})^{\trans}$.
\end{proof}
\begin{proof}[Proof of \remref{cocens}\enuref{rem:cocens:jsr}]
 Rather than working with the matrices $\b F_{0}$ and $\b F_{1}$
directly, it is easier if we consider the autoregressive systems that
they correspond to, as given in \eqref{coclas:sm}, recognising that
those systems are dynamically stable if and only if the eigenvalues
of their companion form matrices lie strictly inside the unit circle
(e.g.\ \citealp[Sec.~2.1.1]{Lut07}).

Let $t\in\naturals$ be given. We first consider $\b F_{0}$: this
is operative if $\delta_{s}=0$ for all $s\in\{t-k,\ldots,t-1\}$.
In this case, $\pseudy_{s}=y_{s}^{-}=0$ for all such $s$, and \eqref{coclas:sm}
reduces to
\[
\b{\xi}_{t}^{+}=\b{\beta}_{1:p}^{+\trans}c+(I+\b{\beta}^{+\trans}\b{\alpha}^{+})\b{\xi}_{t-1}^{+}+\b{\beta}_{1:p}^{+\trans}u_{t},
\]
By \lemref{lincoint}, the eigenvalues of $I+\b{\beta}^{+\trans}\b{\alpha}^{+}$
lie inside the unit circle; hence so too do those of $\b F_{0}$.
(Indeed, it may be verified that the only nonzero eigenvalues of $\b F_{0}$
are those of $I+\b{\beta}^{+\trans}\b{\alpha}^{+}$.)

We next consider $\b F_{1}$, which is operative if $\delta_{s}=1$
for all $s\in\{t-k,\ldots,t-1\}$. Then $y_{s}=\pseudy_{s}=y_{s}^{-}<0$
for all such $s$, and \eqref{coclas:sm} becomes\begin{subequations}\label{eq:delta1case}
\begin{align}
\b{\xi}_{t}^{+} & =\b{\beta}_{1:p}^{+\trans}c+(I+\b{\beta}^{+\trans}\b{\alpha}^{+})\b{\xi}_{t-1}^{+}\\
 & \qquad\qquad+\b{\beta}_{1:p}^{+\trans}(\phi_{1}^{-}-\phi_{1}^{+})y_{t-1}+\b{\beta}_{1:p}^{+\trans}\sum_{i=2}^{k}(\phi_{i}^{-}-\phi_{i}^{+})y_{t-i}+\b{\beta}_{1:p}^{+\trans}u_{t}\nonumber \\
y_{t} & =e_{1}^{\trans}c+\b e_{1}^{\trans}\b{\alpha}^{+}\b{\xi}_{t-1}^{+}\nonumber \\
 & \qquad\qquad+[1+e_{1}^{\trans}(\phi_{1}^{-}-\phi_{1}^{+})]y_{t-1}+\sum_{i=2}^{k}e_{1}^{\trans}(\phi_{i}^{-}-\phi_{i}^{+})y_{t-i}+e_{1}^{\trans}u_{t}.
\end{align}
\end{subequations}Now the preceding must agree with the $(\b{\xi}_{t}^{+},y_{t})$
generated by the VAR when $y_{s}<0$ for all $s\in\{t-k,\ldots,t-1\}$,
i.e.\ generated according to
\[
z_{t}=\Phi^{-}(L)z_{t-1}+u_{t}.
\]
Under \assref{co} and \assref{cocens}\enuref{cocens:rank}, the
preceding is a linear cointegrated VAR with
\[
\Pi^{-}=\begin{bmatrix}\pi^{-} & \Pi^{x}\end{bmatrix}=\Pi^{+}+\begin{bmatrix}\pi^{-}-\pi^{+} & 0\end{bmatrix}=\begin{bmatrix}\alpha^{+} & 0\\
0 & \pi^{-}-\pi^{+}
\end{bmatrix}\begin{bmatrix}\beta^{+\trans}\\
e_{1}^{\trans}
\end{bmatrix}\eqdef\alpha^{-}\beta^{-\trans},
\]
and equilibrium errors $\xi_{t}^{-}\defeq\beta^{-\trans}z_{t}=[\begin{smallmatrix}\beta^{+\trans}z_{t}\\
y_{t}
\end{smallmatrix}]=[\begin{smallmatrix}\xi_{t}^{+}\\
y_{t}
\end{smallmatrix}]$. It follows from \lemref{lincointcor} that
\[
\b{\xi}_{t}^{-}\defeq(\xi_{t}^{-\trans},\Delta z_{t}^{\trans},\ldots,\Delta z_{t-(k-2)}^{\trans})^{\trans}=(\xi_{t}^{+\trans},y_{t},\Delta z_{t}^{\trans},\ldots,\Delta z_{t-(k-2)}^{\trans})^{\trans}
\]
evolves according to a stationary VAR. Since $(\b{\xi}_{t}^{+},y_{t})$
is simply a reordering of $\b{\xi}_{t}^{-}$, it follows that the
system described by \eqref{delta1case}, i.e.\ by $\b F_{1}$, must
also be stationary. Hence $\b F_{1}$ must have all its eigenvalues
strictly inside the unit circle.
\end{proof}
\needspace{3\baselineskip}
\begin{proof}[Proof of \remref{cocens}\enuref{rem:kappa}]
 Suppose $k=1$. In this case, $\pi^{-}-\pi^{+}=\phi_{1}^{-}-\phi_{1}^{+}$,
and so by \remref{cocens}\enuref{rem:cocens:jsr}, the matrix
\[
\b F_{1}=\begin{bmatrix}I_{r}+\beta^{+\trans}\alpha^{+} & \beta^{+\trans}(\pi^{-}-\pi^{+})\\
e_{1}^{\trans}\alpha^{+} & 1+e_{1}^{\trans}(\pi^{-}-\pi^{+})
\end{bmatrix}
\]
must have all its eigenvalues inside the unit circle; hence $\det(I_{r+1}-\b F_{1})=\prod_{i=1}^{r+1}(1-\rho_{i}(\b F_{1}))>0$,
where $\rho_{i}(\b F_{1})$ denotes the $i$th eigenvalue of $\b F_{1}$.
Note that
\begin{align*}
e_{1}^{\trans}(\pi^{-}-\pi^{+})-e_{1}^{\trans}\alpha^{+}(\beta^{+\trans}\alpha^{+})^{-1}\beta^{+\trans}(\pi^{-}-\pi^{+}) & =e_{1}^{\trans}[I_{p}-\alpha^{+}(\beta^{+\trans}\alpha^{+})^{-1}\beta^{+\trans}](\pi^{-}-\pi^{+})\\
 & =e_{1}^{\trans}\beta_{\perp}^{+}(\alpha_{\perp}^{+\trans}\beta_{\perp}^{+})^{-1}\alpha_{\perp}^{+\trans}(\pi^{-}-\pi^{+})\\
 & =e_{1}^{\trans}P_{\beta_{\perp}^{+}}\pi^{-}=\regcoef_{1},
\end{align*}
where the penultimate equality holds since $\alpha_{\perp}^{+}\pi^{+}=0$,
and $\Gamma^{+}(1)=I_{p}$ when $k=1$. Hence
\[
\begin{bmatrix}I_{r} & 0\\
-e_{1}^{\trans}\alpha^{+}(\beta^{+\trans}\alpha^{+})^{-1} & 1
\end{bmatrix}(\b F_{1}-I_{r+1})=\begin{bmatrix}\beta^{+\trans}\alpha^{+} & \beta^{+\trans}(\pi^{-}-\pi^{+})\\
0 & \regcoef_{1}
\end{bmatrix},
\]
from which it follows that
\begin{align*}
(-1)^{r+1}\det(I_{r+1}-\b F_{1}) & =\det(\b F_{1}-I_{r+1})\\
 & =\regcoef_{1}\det(\beta^{+\trans}\alpha^{+})=\regcoef_{1}(-1)^{r}\det[I_{r}-(I_{r}+\beta^{+\trans}\alpha^{+})].
\end{align*}
By \lemref{lincoint}, $I_{r}+\beta^{+\trans}\alpha^{+}$ has all
its eigenvalues inside the unit circle, and so $\det[I_{r}-(I_{r}+\beta^{+\trans}\alpha^{+})]>0$.
Hence $\regcoef_{1}<0$.

Next suppose $p=1$, while allowing for general $k\in\naturals$.
Then
\[
\regcoef_{1}=\regcoef=P_{\beta_{\perp}^{+}}\pi^{-}=-\gamma^{+}(1)^{-1}\phi^{-}(1)
\]
where $\gamma^{+}(1)=1-\sum_{i=1}^{k-1}\gamma_{i}^{+}$ and $\phi^{-}(1)=1-\sum_{i=1}^{k}\phi_{i}^{-}$.
Both $\gamma^{+}(\lambda)$ and $\phi^{-}(\lambda)$ have all their
roots outside the unit circle, and hence both $\gamma^{+}(1)$ and
$\phi^{-}(1)$ are strictly positive. It follows that $\regcoef_{1}<0$
as required.
\end{proof}

\section{Detailed calculations for \exaref{abh}}

\label{supp:abh-calculations}

Here we provide some further details of the calculations underlying
\exaref{abh}. For convenience, we reproduce the system \eqref{abh1}
in CKSVAR form as
\begin{multline}
\begin{bmatrix}1\\
-\delta^{+}\\
0
\end{bmatrix}\bar{\pi}_{t}^{+}+\begin{bmatrix}1\\
-\delta^{-}\\
0
\end{bmatrix}\bar{\pi}_{t}^{-}+\begin{bmatrix}0 & 0\\
1 & 0\\
0 & 1
\end{bmatrix}\begin{bmatrix}\Delta\bar{y}_{t}\\
g_{t}
\end{bmatrix}\\
=\begin{bmatrix}1\\
-\delta^{+}\\
0
\end{bmatrix}\bar{\pi}_{t-1}^{+}+\begin{bmatrix}1\\
-\delta^{-}\\
0
\end{bmatrix}\bar{\pi}_{t-1}^{-}+\begin{bmatrix}0 & 0\\
0 & 1\\
0 & 1
\end{bmatrix}\begin{bmatrix}\Delta\bar{y}_{t-1}\\
g_{t-1}
\end{bmatrix}+\begin{bmatrix}u_{t}^{\pi}\\
u_{t}^{y}\\
u_{t}^{g}
\end{bmatrix}.\tag{\ref{eq:ABH}}\label{eq:ABH-appendix}
\end{multline}
We thus have first-order ($k=1$) CKSVAR with
\begin{align}
\Phi_{0}^{\pm} & =\begin{bmatrix}1 & 0 & 0\\
-\delta^{\pm} & 1 & 0\\
0 & 0 & 1
\end{bmatrix} & \Phi_{1}^{\pm} & =\begin{bmatrix}1 & 0 & 0\\
-\delta^{\pm} & 0 & 1\\
0 & 0 & 1
\end{bmatrix}.\label{eq:PhipmABH}
\end{align}
Hence
\begin{equation}
\Pi^{\pm}=\begin{bmatrix}1 & 0 & 0\\
-\delta^{\pm} & 0 & 1\\
0 & 0 & 1
\end{bmatrix}-\begin{bmatrix}1 & 0 & 0\\
-\delta^{\pm} & 1 & 0\\
0 & 0 & 1
\end{bmatrix}=\begin{bmatrix}0 & 0 & 0\\
0 & -1 & 1\\
0 & 0 & 0
\end{bmatrix}=\begin{bmatrix}0\\
1\\
0
\end{bmatrix}\begin{bmatrix}0 & -1 & 1\end{bmatrix}=\alpha\beta^{\pm\trans}\label{eq:PipmABH}
\end{equation}
and thus $\Pi^{\pm}$ and $\beta^{\pm}=(0,-1,1)^{\trans}$ are regime-invariant.
$\Pi^{x}$ is obtained from the final $p-1=2$ columns of $\Pi^{+}$
(or $\Pi^{-}$), and so
\[
\Pi^{x}=\begin{bmatrix}0 & 0\\
-1 & 1\\
0 & 0
\end{bmatrix}
\]
whence $\rank\Pi^{\pm}=\rank\Pi^{x}=1$, showing that \assref{coclas}\ref{enu:coclas:rank}
holds with $r=1$.

In view of \eqref{PipmABH}, we can construct orthocomplement matrices
as
\begin{align*}
\alpha_{\perp} & =\begin{bmatrix}1 & 0\\
0 & 0\\
0 & 1
\end{bmatrix} & \beta_{\perp}^{\pm} & =\begin{bmatrix}1 & 0\\
0 & 1\\
0 & 1
\end{bmatrix},
\end{align*}
where it will be observed that the $\beta_{\perp}^{\pm}$ indeed have
the form required by \eqref{betaperpfn}. Since $\Gamma_{i}=0$ for
all $i\geq1$ in this first-order model, we have $\Gamma^{\pm}(1)=\Phi_{0}^{\pm}$,
and thus from computing
\[
\alpha_{\perp}^{\trans}\Gamma^{\pm}(1)\beta_{\perp}^{\pm}=\begin{bmatrix}1 & 0 & 0\\
0 & 0 & 1
\end{bmatrix}\begin{bmatrix}1 & 0 & 0\\
-\delta^{\pm} & 1 & 0\\
0 & 0 & 1
\end{bmatrix}\begin{bmatrix}1 & 0\\
0 & 1\\
0 & 1
\end{bmatrix}=\begin{bmatrix}1 & 0\\
0 & 1
\end{bmatrix}=I_{2}
\]
we see immediately that $\det\alpha_{\perp}^{\trans}\Gamma^{+}(1)\beta_{\perp}^{+}=1=\det\alpha_{\perp}^{\trans}\Gamma^{-}(1)\beta_{\perp}^{-}$,
whence \assref{coclas}\ref{enu:coclas:det} holds.

The final condition to verify is \assref{coclas}\ref{enu:coclas:jsr}
-- or rather \ref{enu:coclas:jsr-struct}, because the CSKVAR \eqref{ABH-appendix}
is in structural form. We must therefore compute $\tilde{\Pi}^{\pm}$
for the associated canonical form, the parameters can be obtained
via the mapping \eqref{canon-vars}--\eqref{Q-canon}. From \eqref{PhipmABH}
we have (noting that here $y_{t}=\bar{\pi}_{t}$ and $x_{t}=(\Delta y_{t},g_{t})^{\trans}$),
\begin{align*}
\phi_{0,yy}^{\pm} & =1 & \phi_{0,xy}^{\pm} & =\begin{bmatrix}-\delta^{\pm}\\
0
\end{bmatrix} & \phi_{0,yx}^{\trans} & =\begin{bmatrix}0 & 0\end{bmatrix} & \Phi_{0,xx}= & \begin{bmatrix}1 & 0\\
0 & 1
\end{bmatrix}
\end{align*}
whence
\[
\bar{\phi}_{0,yy}^{\pm}=\phi_{0,yy}^{\pm}-\phi_{0,yx}^{\trans}\Phi_{0,xx}^{-1}\phi_{0,xy}^{\pm}=1
\]
and so
\[
P^{-1}=\begin{bmatrix}\bar{\phi}_{0,yy}^{+} & 0 & 0\\
0 & \bar{\phi}_{0,yy}^{-} & 0\\
\phi_{0,xy}^{+} & \phi_{0,xy}^{-} & \Phi_{0,xx}
\end{bmatrix}=\begin{bmatrix}1 & 0 & 0 & 0\\
0 & 1 & 0 & 0\\
-\delta^{+} & -\delta^{-} & 1 & 0\\
0 & 0 & 0 & 1
\end{bmatrix}
\]
and
\[
Q=\begin{bmatrix}1 & -\phi_{0,yx}^{\trans}\Phi_{0,xx}^{-1}\\
0 & I_{p-1}
\end{bmatrix}=\begin{bmatrix}1 & 0\\
0 & I_{2}
\end{bmatrix}=I_{3}.
\]
Hence, $\tilde{\Pi}=[\tilde{\pi}^{+},\tilde{\pi}^{-},\Pi^{x}]=-\tilde{\Phi}(1)$
is given by
\[
\tilde{\Pi}=Q\begin{bmatrix}\pi^{+} & \pi^{-} & \Pi^{x}\end{bmatrix}P=\begin{bmatrix}0 & 0 & 0 & 0\\
0 & 0 & -1 & 1\\
0 & 0 & 0 & 0
\end{bmatrix}\begin{bmatrix}1 & 0 & 0 & 0\\
0 & 1 & 0 & 0\\
\delta^{+} & \delta^{-} & 1 & 0\\
0 & 0 & 0 & 1
\end{bmatrix}=\begin{bmatrix}0 & 0 & 0 & 0\\
-\delta^{+} & -\delta^{-} & -1 & 1\\
0 & 0 & 0 & 0
\end{bmatrix}.
\]
It follows that the matrices $\tilde{\Pi}^{\pm}=[\tilde{\pi}^{\pm},\Pi^{x}]$
\emph{are} regime dependent, and factorise as
\[
\tilde{\Pi}^{\pm}=\begin{bmatrix}0 & 0 & 0\\
-\delta^{\pm} & -1 & 1\\
0 & 0 & 0
\end{bmatrix}=\begin{bmatrix}0\\
1\\
0
\end{bmatrix}\begin{bmatrix}-\delta^{\pm} & -1 & 1\end{bmatrix}\eqdef\tilde{\alpha}\tilde{\beta}^{\pm\trans}
\]
whence
\[
I_{r}+\tilde{\beta}^{\pm\trans}\tilde{\alpha}=1+\begin{bmatrix}-\delta^{\pm} & -1 & 1\end{bmatrix}\begin{bmatrix}0\\
1\\
0
\end{bmatrix}=0
\]
and thus \ref{enu:coclas:jsr-struct} is satisfied.

\end{document}